\tikzset{% inspired by https://tex.stackexchange.com/a/316050/121799
    arc arrow/.style args={%
    to pos #1 with length #2}{
    decoration={
        markings,
         mark=at position 0 with {\pgfextra{%
         \pgfmathsetmacro{\tmpArrowTime}{#2/(\pgfdecoratedpathlength)}
         \xdef\tmpArrowTime{\tmpArrowTime}}},
        mark=at position {#1-\tmpArrowTime} with {\coordinate(@1);},
        mark=at position {#1-2*\tmpArrowTime/3} with {\coordinate(@2);},
        mark=at position {#1-\tmpArrowTime/3} with {\coordinate(@3);},
        mark=at position {#1} with {\coordinate(@4);
        \draw[-{Triangle[length=#2,bend]}]       
        (@1) .. controls (@2) and (@3) .. (@4);},
        },
     postaction=decorate,
     },
fermion arc arrow/.style={arc arrow=to pos #1 with length 2.5mm},
Vertex/.style={fill,circle,inner sep=1.5pt},
insert vertex/.style={decoration={
        markings,
         mark=at position #1 with {\node[Vertex]{};},
        },
     postaction=decorate}     
}
\definecolor{darkred}{rgb}{0.8,0.1,0.1}
\theoremstyle{plain}
\newtheorem{theo}{Theorem}[section]
\newtheorem{cor}[theo]{Corollary}
\theoremstyle{definition}
\newtheorem{defi}[theo]{Definition}
\newenvironment{ex}
  {\pushQED{\qed}\exx}
  {\popQED\endexx}
\newenvironment{rem}
  {\pushQED{\qed}\remm}
  {\popQED\endremm}
\numberwithin{equation}{section}
\def\nn{\nonumber}
\def\bbK{\mathbb{K}}
\def\bbR{\mathbb{R}}
\def\bbC{\mathbb{C}}
\def\bbZ{\mathbb{Z}}
\def\bbT{\mathbb{T}}
\def\bbS{\mathbb{S}}
\def\ii{{\,{\rm i}\,}}
\def\e{{\,\rm e}\,}
\def\Ch{\mathsf{Ch}}
\def\Hom{\mathrm{Hom}}
\def\hom{\underline{\mathrm{hom}}}
\def\Sym{\mathrm{Sym}}
\def\Obs{\mathrm{Obs}}
\def\cl{\mathrm{cl}}
\def\BV{\mathrm{BV}}
\def\id{\mathrm{id}}
\def\dd{\mathrm{d}}
\def\oone{\mathbbm{1}}
\def\Tr{\mathrm{Tr}}
\def\CAlg{\mathsf{CAlg}}
\def\Mod{\mathsf{Mod}}
\def\su{\mathfrak{su}}
\def\ra{\triangleright}
\def\sk{\vspace{1mm}}
\newcommand\und[1]{\underline{#1}}
\newcommand{\cyc}[2]{\langle\!\langle #1 , #2\rangle\!\rangle}
\let\@fnsymbol\@alph
\title{
Batalin-Vilkovisky quantization of fuzzy field theories
}
\author{
Hans Nguyen$^{1,a}$, Alexander Schenkel$^{1,b}$ and Richard J.~Szabo$^{2,c}$\vspace{4mm}\\
{\small $^{1}$ School of Mathematical Sciences, University of Nottingham,}\\
{\small University Park, Nottingham NG7 2RD,  UK.}\vspace{2mm}\\
{\small $^{2}$ Department of Mathematics, Heriot-Watt University, }\\
{\small Colin Maclaurin Building, Riccarton, Edinburgh EH14 4AS, UK.}\\
{\small \& Maxwell Institute for Mathematical Sciences,  Edinburgh, UK.}\\
{\small \& Higgs Centre for Theoretical Physics, Edinburgh, UK.}\vspace{4mm}\\
{\small \begin{tabular}{ll}
Email: & ${}^a$~\texttt{hans.nguyen@nottingham.ac.uk}\\
& ${}^b$~\texttt{alexander.schenkel@nottingham.ac.uk}\\
& ${}^c$~\texttt{R.J.Szabo@hw.ac.uk}\vspace{2mm}
\end{tabular}
}
}
\date{November 2021}
\begin{document}

\maketitle

%%%%%%%%%%%%%%%%%%%%%%%%%%%%%%%%%%%%%%%%%%%%%%%%%%%%%%%%%%%%%%%%%%%%%%%%
%%%%%%%%%%%%%%%%%%%%%%%%%%%%%%%%%%%%%%%%%%%%%%%%%%%%%%%%%%%%%%%%%%%%%%%%

\begin{abstract}
\noindent We apply the modern Batalin-Vilkovisky quantization techniques of Costello and Gwilliam to noncommutative field theories in the finite-dimensional case of fuzzy spaces. We further develop a generalization of this framework to theories that are equivariant under a triangular Hopf algebra symmetry, which in particular leads to quantizations of finite-dimensional analogs of the field theories proposed recently through the notion of `braided $L_\infty$-algebras'. The techniques are illustrated by computing perturbative correlation functions for scalar and Chern-Simons theories on the fuzzy $2$-sphere, as well as for braided scalar field theories on the fuzzy $2$-torus.
\end{abstract}

\begin{flushright}
{\sf\small EMPG--21--09}
\end{flushright}

%%%%%%%%%%%%%%%%%%%%%%%%%%%%%%%%%%%%%%%%%%%%%%%%%%%%%%%%%%%%%%%%%%%%%%%%
%%%%%%%%%%%%%%%%%%%%%%%%%%%%%%%%%%%%%%%%%%%%%%%%%%%%%%%%%%%%%%%%%%%%%%%%

\renewcommand{\baselinestretch}{0.8}\normalsize
\tableofcontents
\renewcommand{\baselinestretch}{1.0}\normalsize

\newpage

%%%%%%%%%%%%%%%%%%%%%%%%%%%%%%%%%%%%%%%%%%%%%%%%%%%%%%%%%%%%%%%%%%%%%%%%
%%%%%%%%%%%%%%%%%%%%%%%%%%%%%%%%%%%%%%%%%%%%%%%%%%%%%%%%%%%%%%%%%%%%%%%%

\section{Introduction}
Noncommutative quantum field theories are well-known to exhibit many 
novel features not present in conventional quantum field theory, 
see e.g.~\cite{Sza03} for a review. In particular, despite many 
years of extensive investigation, the quantization of noncommutative 
gauge theories is not completely understood, see e.g.~\cite{BKSW10} 
for a review. In this paper we will offer a new perspective on this 
problem by applying modern incarnations of Batalin-Vilkovisky (BV) 
quantization to noncommutative field theories. We follow the approach of 
Costello and Gwilliam~\cite{CG16,Gwi12}. We treat only \emph{fuzzy} field 
theories, which are by definition finite-dimensional systems, i.e.\ 
matrix models, and so can be quantized in a completely rigorous way 
while avoiding the analytic issues involved when dealing with continuum 
field theories. These examples will serve to nicely illustrate our 
formalism while avoiding much technical clutter. We give a general 
review of these quantization techniques in Section~\ref{sec:prelim}. 
A finite-dimensional BV formalism for certain matrix models is also 
discussed in~\cite{Ise19a,Ise19b}, and related to the spectral triple 
formulation of noncommutative geometry in~\cite{IvS17}.
\sk

Our approach is inspired in part by recent analyses of noncommutative 
field theories in the framework of $L_\infty$-algebras. The classical 
$L_\infty$-algebra formulation of the standard noncommutative gauge 
theories was originally presented in~\cite{BBKL18}. A new notion of 
`braided $L_\infty$-algebra' was defined more recently in~\cite{DCGRS20,DCGRS21}, 
where it was used to construct `braided field theories' which are equivariant 
under the action of a triangular Hopf algebra and involve fields with braided 
noncommutativity. We would like to stress that the term `braided' in 
these papers, as well as in our present one, is used to refer to algebraic structures
and field theories that are defined in a {\em symmetric} braided monoidal category,
where however the triangular $R$-matrix is non-trivial in the sense that it is not the identity.
A generalization of \cite{DCGRS20,DCGRS21} and the results of the present paper
to the truly (i.e.\ non-symmetric) braided case is considerably more complicated, see Section 
\ref{sec:conclusion} for more comments.
To handle the theories in \cite{DCGRS20,DCGRS21}, we develop a braided version 
of the BV formalism in Section~\ref{sec:braidedBV}, which fully captures 
their perturbative quantization and explicitly computes their correlation 
functions. Our perspective circumvents the issues involved 
in constructing the classical (Maurer-Cartan) solution space in the equivariant 
and braided setting that were pointed out in~\cite{DCGRS21}, as it characterizes 
this space by its equivariant function dg-algebra. For scalar field theories, 
our approach agrees with Oeckl's algebraic approach to (symmetric) braided 
quantum field theory~\cite{Oec01,Oec00}, which is based on a braided 
generalization of Wick's theorem and Gaussian integration. Our framework 
has the advantage of being able to straightforwardly treat theories with 
gauge symmetries, which are not addressed in Oeckl's approach. 
\sk

In this paper we treat prototypical fuzzy versions of both types of 
noncommutative field theories within the BV formalism. These are defined 
respectively on the fuzzy $2$-sphere (Section~\ref{sec:fuzzysphere}) and 
on the fuzzy $2$-torus with a non-trivial $R$-matrix (Section~\ref{sec:fuzzytorus}). 
\sk

On the fuzzy sphere we illustrate the finite-dimensional BV formalism 
for both scalar and gauge field theories. We study in detail the example 
of $\Phi^4$-theory where we reproduce the known $2$-point function at 
$1$-loop order obtained through more traditional techniques~\cite{CMS01}. 
In particular, this nicely illustrates how BV quantization captures the 
known distinction between planar and non-planar loop corrections in the 
standard noncommutative field theories, see e.g.~\cite{Sza03} for a review. 
While it is possible to treat Yang-Mills theory on the fuzzy sphere using 
our techniques, for illustration we consider the simpler example of 
Chern-Simons gauge theory, which was introduced in~\cite{ARS00,GMS01}. 
The quantization of this theory has so far only been briefly mentioned 
in~\cite{GMS02}, without any detailed analysis. In this paper we provide 
a complete framework in which the perturbative correlation functions of 
Chern-Simons theory on the fuzzy $2$-sphere can be computed. Our results
share many similarities with the Chern-Simons model on finite-dimensional 
commutative dg Frobenius algebras studied in~\cite{CM10}.
\sk

On the fuzzy torus we apply our finite-dimensional braided BV formalism 
to scalar field theories, which serve to illustrate a host of novelties 
compared to the standard noncommutative field theories. In particular, our 
approach produces fuzzy versions of symmetric braided quantum field theories 
in the continuum~\cite{Oec01}. We observe the absence of the 
notion of non-planar loop corrections as a consequence of the braided symmetry, 
as pointed out for twist deformed field theories in~\cite{Oec00} through 
algebraic means, and later by~\cite{Bal+07} through more heuristic methods. 
We stress that, in our case, this does not imply that there are no non-trivial 
braiding effects in the correlation functions; we illustrate this through 
explicit examples. Nevertheless, we expect the situation to be much different 
for gauge theories (as also suggested by~\cite{Bal+07}), as in this case even the 
classical braided field theories generally follow a different pattern from the 
conventional noncommutative gauge theories~\cite{DCGRS21}. Unfortunately, unlike 
the fuzzy sphere, we are not aware of any construction of a differential calculus 
on the fuzzy torus that could be used to define versions of the standard gauge 
theories, in contrast to its continuum version, i.e.\ the noncommutative torus; 
for a rigorous discussion of this point, see~\cite{LLS01}.

%%%%%%%%%%%%%%%%%%%%%%%%%%%%%%%%%%%%%%%%%%%%%%%%%%%%%%%%%%%%%%%%%%%%%%%%
%%%%%%%%%%%%%%%%%%%%%%%%%%%%%%%%%%%%%%%%%%%%%%%%%%%%%%%%%%%%%%%%%%%%%%%%

\section{\label{sec:prelim}Batalin-Vilkovisky quantization}
In this section we review the necessary background on the Batalin-Vilkovisky 
(BV) formalism and related tools for the computation of correlation 
functions of quantum field theories.

\subsection{\label{subsec:cochain}Cochain complexes}
We briefly recall some preliminary facts about cochain complexes. 
Let us fix a field $\bbK$ of characteristic $0$. 
We denote by $\Ch_\bbK$ the category of 
cochain complexes of $\bbK$-vector spaces,
i.e.\ we work with cohomological degree 
conventions in which the differential has degree $+1$. 
\sk

Recall that $\Ch_\bbK$ is a closed symmetric monoidal category.
The tensor product $V\otimes W\in\Ch_\bbK$ of two cochain complexes
$V,W\in\Ch_\bbK$ is defined by the graded vector space
\begin{subequations}\label{eqn:tensorproductChK}
\begin{flalign}
(V\otimes W)^n \,:= \, \bigoplus_{m\in \bbZ}\, V^m\otimes W^{n-m}\quad,
\end{flalign}
for all $n\in\bbZ$, and the differential
\begin{flalign}
\dd_{V\otimes W}(v\otimes w) \,:=\, (\dd_V v)\otimes w + (-1)^{\vert v\vert}\,v\otimes(\dd_W w)\quad,
\end{flalign}
\end{subequations}
where $\vert v\vert \in\bbZ$ denotes the degree of a homogeneous element $v\in V$.
The monoidal unit is $\bbK\in\Ch_\bbK$, regarded as a cochain complex in degree $0$,
and the symmetric braiding is given by the Koszul sign rule
\begin{flalign}\label{eqn:braidingChK}
\tau \,:\,V\otimes W~\longrightarrow~W\otimes V~,~~v\otimes w~\longmapsto~(-1)^{\vert v\vert\,\vert w\vert}\, w\otimes v\quad.
\end{flalign}
The mapping complex (or internal hom) $\hom(V,W)\in\Ch_\bbK$
between two cochain complexes $V,W\in\Ch_\bbK$ is defined by
the graded vector space
\begin{subequations}\label{eqn:mappingChK}
\begin{flalign}
\hom(V,W)^n \,:=\, \prod_{m\in\bbZ}\, \Hom_\bbK\big(V^m,W^{n+m}\big)\quad,
\end{flalign}
for all $n\in\bbZ$,  and the `adjoint' differential
\begin{flalign}
\partial(\zeta)\,:=\,\dd_W\circ \zeta - (-1)^{\vert \zeta \vert}\, \zeta\circ \dd_V\quad.
\end{flalign}
\end{subequations}
The $0$-cocycles of this complex, i.e.\ $f\in\hom(V,W)^0$ with $\partial(f)=0$, 
are precisely the cochain maps $f: V\to W$.  A \emph{cochain homotopy} between two cochain maps
$f,g : V\to W$ is a $(-1)$-cochain $h\in \hom(V,W)^{-1}$ satisfying $f-g=\partial(h)$.
\sk

Recall also that, given any $V\in\Ch_\bbK$ and $k\in\bbZ$,
the $k$-shifted cochain complex $V[k]\in\Ch_\bbK$ is defined
by $V[k]^n := V^{n+k}$, for all $n\in\bbZ$,  and the differential
$\dd_{V[k]} := (-1)^k\,\dd_V$.  Observe that $V[k]\cong\bbK[k]\otimes V$.
Given any cochain map $f : V\to W$, the $k$-shifted cochain map
$f[k]: V[k]\to W[k]$ is defined by $f[k]:=f$.

\subsection{\label{subsec:BV}Finite-dimensional BV formalism}
In the following we provide an elementary and 
self-contained review of the BV quantization techniques
developed by Costello and Gwilliam \cite{CG16,Gwi12}. 
Since the focus of this paper is on matrix models,
which are finite-dimensional systems,  we can work
in a purely algebraic setting and thereby avoid 
the functional analytic subtleties for continuum field theories
addressed in~\cite{CG16,Gwi12}. We also refer to~\cite{GJF18} for a
useful earlier exposition of BV quantization in finite dimensions.
\sk

Let us start by recalling the definition of a classical 
free BV theory from~\cite{CG16,Gwi12}.
\begin{defi}\label{def:linearBV}
A {\em free BV theory} is a cochain complex $E\in\Ch_\bbK$, with differential
denoted by $\dd_E = -Q$,\footnote{We have included a minus sign in the definition of
$Q$ in order to avoid unpleasant sign factors in the dual differential on the observables,
which we shall use more frequently in the present paper.} together with a cochain map 
$\langle\,\cdot\,,\,\cdot\,\rangle : E\otimes E\to\bbK[-1]$ that is non-degenerate and
antisymmetric, i.e.\ $\langle\,\cdot\,, \,\cdot\,\rangle =- \langle\,\cdot\,,\,\cdot\,\rangle\circ \tau $ 
where $\tau$ denotes the symmetric braiding on $\Ch_\bbK$.
\end{defi}
\begin{rem}\label{rem:dersolution}
The complex $E=(E,-Q)$ should be interpreted as a {\em derived} solution space.
The elements in degree $0$ are the fields of the theory, while the negative degrees
encode the ghost fields and the positive degrees encode the antifields.
See Section \ref{sec:fuzzysphere} for some explicit examples.
The pairing $\langle\,\cdot\,,\,\cdot\,\rangle$ plays the role of a $(-1)$-shifted symplectic structure. 
\sk

Since we are interested mainly in matrix models, which are finite-dimensional systems,
we shall assume implicitly throughout the whole paper that each homogeneous component 
$E^n$ of the complex $E$ is a finite-dimensional vector space and also that
the complex itself is bounded from above and below, i.e.\ there exists 
some positive integer $N\in\bbZ_{>0}$ such that $E^n =0$ for both $n>N$ and $n<-N$.
As a consequence, $E$ is a perfect complex and thus dualizable.
\end{rem}

To every free BV theory $(E,-Q,\langle\,\cdot\,, \,\cdot\,\rangle)$ one can assign
a commutative dg-algebra of polynomial observables. This is defined
as the symmetric algebra $\Sym\,E^\ast \in\CAlg(\Ch_\bbK)$, where $E^\ast$ denotes 
the dual cochain complex of $E=(E,-Q)$. Making use of the non-degenerate pairing
$\langle\,\cdot\,, \,\cdot\,\rangle$, we observe that the dual complex
$E^\ast\cong E[1]$ is isomorphic to the $1$-shifted complex. In fact, the duality pairing
between $E[1]$ and $E$ is given by
\begin{flalign}
\xymatrix@C=3.5em{
E[1]\otimes E \,\cong \,\bbK[1]\otimes E\otimes E \ar[r]^-{\id\otimes \langle\,\cdot\,, \,\cdot\,\rangle}
~&~ \bbK[1]\otimes \bbK[-1]\,\cong\,\bbK
}\quad.
\end{flalign}
Let us also recall that the differential on 
$E[1]$ acquires an additional minus sign 
due to the shifting conventions from Section \ref{subsec:cochain}, hence
\begin{flalign}
\dd_{E[1]} \,=\, - \dd_{E} \,=\, Q\quad.
\end{flalign}
With the usual abuse of notation, we denote the differential
on the dg-algebra of polynomial observables
$\Sym\,E^\ast\cong \Sym\,E[1]$ by the same symbol $Q$ as the differential on $E[1]$.
As we explain in more detail below,  the pairing $\langle\,\cdot\,, \,\cdot\,\rangle$
induces a shifted Poisson bracket on $\Sym \,E[1]$ in the form of a $P_0$-algebra structure.
We briefly recall this crucial concept and refer to \cite{Saf17} for more details
on shifted Poisson structures.
\begin{defi}\label{def:P0algebraTrad}
A {\em $P_0$-algebra} is a commutative dg-algebra $A\in  \CAlg(\Ch_\bbK)$
together with a Lie bracket $[\,\cdot\, , \,\cdot\,] : A[-1]\otimes A[-1]\to A[-1]$ on the shifted
cochain complex $A[-1]$ such that $[a,\,\cdot\,]$ defines a derivation
on $A$ for each $a\in A$.
\end{defi}
\begin{rem}\label{rem:P0algebraTrad}
Observe that the datum of a 
Lie bracket $[\,\cdot\, , \,\cdot\,]$ on $A[-1]\cong \bbK[-1]\otimes A$
is equivalent to the datum of the cochain map 
$\{\, \cdot\, , \,\cdot\,\} : A\otimes A\to A[1]$ defined by 
\begin{flalign}
\xymatrix@C=3em{
\ar[d]_-{\cong}A\otimes A \ar[r]^-{\{\,\cdot\, , \,\cdot\, \}} ~&~ A[1] \\
(A[-1]\otimes A[-1])[2] \ar[r]_-{[\,\cdot\, , \,\cdot\, ][2]}~&~A[-1][2]\ar[u]_-{\cong}
}
\end{flalign}
where the left vertical isomorphism uses the symmetric braiding $\tau$
and is given explicitly by $a\otimes b \mapsto (-1)^{\vert a\vert}\,a \otimes b$.
When expressed in terms of the bracket $\{\,\cdot\,,\,\cdot\,\}$, the $P_0$-algebra axioms
take the following form:
\begin{itemize}
\item[(i)] Compatibility with the differential: For all $a,b\in A$,
\begin{flalign}
-\dd\{a,b\}\,=\,\{\dd a, b\} + (-1)^{\vert a\vert } \,\{a,\dd b\}\quad,
\end{flalign}
where $\dd$ denotes the differential on the unshifted cochain complex $A$
and the minus sign on the left-hand side is due to our shifting conventions
from Section \ref{subsec:cochain}.

\item[(ii)] Symmetry: For all $a,b\in A$,
\begin{flalign}
\{ a, b \} \, =\,  (-1)^{\vert a\vert\,\vert b\vert}\, \{b,a\}\quad,
\end{flalign} 
i.e.\ antisymmetry of $[\,\cdot\, , \,\cdot\,]$ on $A[-1]\otimes A[-1]$ implies symmetry of $\{\,\cdot\, ,\,\cdot\,\}$ on $A\otimes A$.

\item[(iii)] Jacobi identity: For all $a,b,c\in A$,
\begin{flalign}
(-1)^{\vert a\vert \,\vert c\vert + \vert b\vert + \vert c\vert }~\{a,\{b,c\}\} 
& +(-1)^{\vert b\vert \,\vert a\vert  + \vert c\vert + \vert a\vert }~ \{b,\{c,a\}\} \nonumber \\
& \hspace{3cm} +(-1)^{\vert c\vert \,\vert b\vert + \vert a\vert + \vert b\vert }~ \{c,\{a,b\}\} \,=\,0\quad.
\end{flalign}

\item[(iv)] Derivation property: For all $a,b,c\in A$,
\begin{flalign}
\{a,b\,c\} \,=\, \{a,b\} \,c + (-1)^{\vert b\vert \,(\vert a\vert +1)}\, b\,\{a,c\}\quad.
\end{flalign}
\end{itemize}
In what follows,  we will always describe $P_0$-algebras
in this more explicit form.
\end{rem}

Let us now explain in more detail how the pairing of 
a free BV theory $(E,-Q,\langle\,\cdot\,, \,\cdot\,\rangle)$ 
induces a $P_0$-algebra structure on
the symmetric algebra $\Sym \,E[1]\in \CAlg(\Ch_\bbK)$.
Analogously to Remark~\ref{rem:P0algebraTrad}, the cochain
map $\langle \,\cdot\, , \,\cdot\,\rangle : E\otimes E\to\bbK[-1]$
defines a pairing on the shifted complex $E[1]$ via
\begin{flalign}\label{eqn:dualpairing}
\xymatrix@C=3em{
\ar[d]_-{\cong}E[1]\otimes E[1] \ar[r]^-{(\,\cdot\, , \,\cdot\, )} ~&~ \bbK[1] \\
(E\otimes E)[2] \ar[r]_-{\langle\,\cdot\, , \,\cdot\, \rangle[2]}~&~\bbK[-1][2]\ar[u]_-{\cong}
}
\end{flalign}
where the left vertical isomorphism is given by $\varphi \otimes \psi 
\mapsto (-1)^{\vert \varphi\vert+1}\,\varphi \otimes \psi$,
for all $\varphi,\psi\in E[1]$. (To understand this sign factor,
note that for $\varphi\in E[1]$ with $E[1]$-degree $\vert\varphi\vert$, the
$E$-degree is $\vert\varphi\vert +1$.) One easily checks that antisymmetry
of $\langle\,\cdot\,,\,\cdot\,\rangle$ implies that
the cochain map $(\,\cdot\,,\,\cdot\,) : E[1]\otimes E[1]\to \bbK[1]$
is symmetric, i.e.\ $(\,\cdot\,, \,\cdot\,) = (\,\cdot\,,\,\cdot\,)\circ \tau $ 
where $\tau$ denotes the symmetric braiding on $\Ch_\bbK$.
Using now the properties (ii) and (iv) from Remark \ref{rem:P0algebraTrad},
we observe that the shifted Poisson 
bracket $\{\, \cdot\, , \,\cdot\,\} : \Sym \,E[1] \otimes \Sym \,E[1] \to (\Sym \,E[1])[1]$ 
is completely determined by its value on the generators, for which we set
\begin{flalign}\label{eqn:P0frompairing}
\{\varphi,\psi \} \,:=\, ( \varphi,\psi) \,\oone \quad,
\end{flalign}
for all $\varphi,\psi\in E[1]$, where $\oone\in\Sym\, E[1]$ 
denotes the unit element. The Jacobi identity (iii) holds trivially because we are considering
a constant shifted Poisson structure, and
property (i) follows from the fact that $(\,\cdot\, ,\, \cdot\,)$ is 
a cochain map. Altogether, this construction defines a $P_0$-algebra
\begin{flalign}
\Obs^\cl\,:=\,\big(\Sym\, E[1],Q,\{\,\cdot\, ,\,\cdot\,\}\big)\quad,
\end{flalign}
which is interpreted as the classical observables of the free BV theory. 
\sk

Interactions and quantization are both
described by certain types of deformations of $\Obs^\cl$, which we shall now briefly review.
We start with the interactions.  Let $\lambda$ be a formal parameter,
interpreted as a coupling constant, and consider the formal power series
extension of $\Obs^\cl$, which we denote with the usual abuse of notation by the same symbol.
Given any $0$-cochain $I\in (\Sym\,E[1])^0$, interpreted as an interaction term
for the classical BV action,  we would like to define a deformed differential
on $\Obs^\cl$ by the formula
\begin{flalign}\label{eqn:Qint}
Q^{\mathrm{int}}\,:=\,Q+\{\lambda\,I,\,\cdot\,\}\quad.
\end{flalign}
Using the axioms for $P_0$-algebras from Remark \ref{rem:P0algebraTrad}, as well as non-degeneracy
of the pairing $\langle\,\cdot\, ,\,\cdot\,\rangle$, one easily checks that the 
nilpotency condition $(Q^{\mathrm{int}})^2 =0$
of the deformed differential is equivalent to the \emph{classical master equation}
\begin{flalign}\label{eqn:CME}
Q(\lambda\,I) + \tfrac{1}{2}\,\{\lambda\,I,\lambda\, I\}\,=\,0\quad.
\end{flalign}
The resulting $P_0$-algebra
\begin{flalign}
\Obs^{\cl,\mathrm{int}}\,:=\, \big(\Sym\,E[1],Q^{\mathrm{int}},\{\,\cdot\, ,\,\cdot\,\}\big)
\end{flalign}
is  interpreted as the classical observables of the interacting BV theory corresponding to  the
interaction term $I\in (\Sym\,E[1])^0$, which must satisfy the classical 
master equation \eqref{eqn:CME}.
\sk

For quantization of the free BV theory,  let $\hbar$ be another formal parameter, interpreted as 
Planck's constant, and consider the formal power series extension of $\Obs^\cl$,  denoted again 
by the same symbol.  The deformation of the differential on $\Obs^\cl$
that encodes quantization is given by 
\begin{flalign}\label{eqn:Qhbar}
Q^{\hbar}\,:=\, Q + \hbar\,\Delta_{\BV}\quad,
\end{flalign}
where the BV Laplacian is the cochain map $\Delta_{\BV} : \Sym\,E[1]\to(\Sym\,E[1])[1]$ defined
as follows: For symmetric powers $0$,  $1$ and $2$, we set
\begin{subequations}\label{eqn:BVLaplacian}
\begin{flalign}
\Delta_{\BV}(\oone) \,:=\,0~~,\quad
\Delta_{\BV}(\varphi)\,:=\, 0~~,\quad
\Delta_{\BV}(\varphi\,\psi)\,:=\,\{\varphi,\psi\} = (\varphi,\psi)\,\oone\quad,
\end{flalign}
for all $\varphi,\psi\in E[1]$. This is extended to all of $\Sym\,E[1]$ by demanding
\begin{flalign}\label{eqn:BVLaplacianDiffOp}
\Delta_{\BV}(a\, b)\,=\,\Delta_{\BV} (a)\,b + (-1)^{\vert a\vert}\, 
a\,\Delta_{\BV} ( b) +  \{a,b\}\quad,
\end{flalign}
\end{subequations}
for all $a,b\in\Sym\,E[1]$.  By iterating the latter property and using
the $P_0$-algebra axioms from Remark \ref{rem:P0algebraTrad},
one finds the explicit expression
\begin{flalign}\label{eqn:BVexplicit}
\Delta_{\BV}\big(\varphi_1\cdots \varphi_n \big) \,=\, \sum_{i<j}\, (-1)^{\sum_{k=1}^{i-1}\,\vert \varphi_k\vert + \vert\varphi_j\vert\,  \sum_{k=i+1}^{j-1}\, \vert \varphi_k\vert}~( \varphi_i,\varphi_j ) ~\varphi_1\cdots \widehat{\varphi}_i\cdots\widehat{\varphi}_j\cdots\varphi_n
\end{flalign}
for the BV Laplacian, for all $\varphi_1,\dots,\varphi_n\in E[1]$ with $n\geq 2$, 
where the hat means to omit the corresponding factor.\footnote{The sign
factors in \eqref{eqn:BVexplicit} can be understood as follows:
Since the pairing $(\,\cdot\,,\,\cdot\,)$ is of degree $1$, we obtain 
the first sign factor  $(-1)^{\sum_{k=1}^{i-1}\,\vert \varphi_k\vert}$ 
when moving it across $\varphi_1\cdots\varphi_{i-1}$.
Permuting $\varphi_j$ to the $i+1$-st position yields the second sign factor
$(-1)^{\vert\varphi_j\vert\,  \sum_{k=i+1}^{j-1}\, \vert \varphi_k\vert}$.
The evaluated quantity $(\varphi_i,\varphi_j)$ is a scalar, hence it commutes 
with all $\varphi_k$ without introducing any further signs.}
By construction, the BV Laplacian satisfies the two properties
\begin{flalign}
Q\,\Delta_\BV + \Delta_\BV\,Q \,=\,0~~,\quad
(\Delta_\BV)^2\,=\,0\quad,
\end{flalign}
which imply that \eqref{eqn:Qhbar} defines a differential, i.e.\ $(Q^\hbar)^2=0$.
We denote by
\begin{flalign}\label{eqn:Obshbar}
\Obs^\hbar \,:=\,\big(\Sym\,E[1],Q^\hbar\big)
\end{flalign}
the resulting deformed cochain complex, which is
interpreted as the quantum observables for the free BV theory.
It is important to emphasize that,  as a consequence of \eqref{eqn:BVLaplacianDiffOp},
the deformed differential $Q^\hbar$ does {\em not} respect the multiplication on $\Sym\,E[1]$, 
i.e.\ $\Obs^\hbar$ is {\em not} a dg-algebra.  The algebraic
structure of $\Obs^\hbar$ is that of an \emph{$E_0$-algebra},
i.e.\ a cochain complex with a distinguished $0$-cocycle,
which in the present case is the unit element $\oone\in\Sym\,E[1]$.
\sk

The two types of deformations corresponding to interactions and quantization
can be combined,  which leads to interacting quantum BV theories.
Starting from the quantum observables for the free theory \eqref{eqn:Obshbar},
let us choose again a $0$-cochain $I\in(\Sym\,E[1])^0$ playing the role of an interaction term.
We would like to define a deformed differential on $\Obs^\hbar$ by the formula
\begin{flalign}\label{eqn:Qhbarint}
Q^{\hbar, \mathrm{int}}\,:=\,Q^\hbar + \{\lambda\,I,\,\cdot\,\} \,=\, Q + \hbar\,\Delta_\BV + \{\lambda\,I,\,\cdot\,\}\quad.
\end{flalign}
The nilpotency condition $(Q^{\hbar, \mathrm{int}})^2=0$ for this differential
is equivalent to the \emph{quantum master equation}
\begin{flalign}\label{eqn:QME}
Q(\lambda\,I) +\hbar \,\Delta_\BV(\lambda\,I) + \tfrac{1}{2}\,\{\lambda\,I,\lambda\, I\}\,=\,0\quad.
\end{flalign}
To verify the last statement, it is helpful to note the identity
\begin{flalign}
-\Delta_{\BV}\big(\{a,b\}\big) \,=\, \{\Delta_\BV(a),b\} + (-1)^{\vert a\vert}\,\{a,\Delta_\BV(b)\}\quad,
\end{flalign}
for all $a,b\in \Sym\, E[1]$, which may be derived by applying $\Delta_\BV$ on both sides of 
\eqref{eqn:BVLaplacianDiffOp}. The resulting $E_0$-algebra
\begin{flalign}
\Obs^{\hbar,\mathrm{int}}\,:=\, \big(\Sym\,E[1],Q^{\hbar,\mathrm{int}}\big)
\end{flalign}
is  interpreted as the quantum observables of the interacting BV theory corresponding to 
the interaction term $I\in (\Sym\,E[1])^0$, which must satisfy the quantum
master equation \eqref{eqn:QME}.

\subsection{\label{subsec:Linfty}Cyclic $L_\infty$-algebras}
Let us briefly recall the well-known and powerful
construction of interaction terms $I\in(\Sym\,E[1])^0$ satisfying the classical 
(and also the quantum) master equation from cyclic $L_\infty$-algebra structures.
See e.g.~\cite{JRSW19} for further details.
\begin{defi}
An {\em $L_\infty$-algebra} is a $\bbZ$-graded vector space $L$
together with a collection $\{ \ell_n : L^{\otimes n}\to L\}_{n\in \bbZ_{\geq 1}}$
of graded antisymmetric linear maps of degree $\vert \ell_n\vert = 2-n$
that satisfy the homotopy Jacobi identities
\begin{flalign}\label{eqn:homotopyJacobi}
\sum_{k=0}^{n-1}\, (-1)^{k}~
\ell_{k+1}\circ \big( \ell_{n-k}\otimes \id_{L^{\otimes k}}\big)\,\circ\, \sum_{\sigma\in\mathrm{Sh}(n-k;k)}\,
\mathrm{sgn}(\sigma) \,\tau^{\sigma}\,=\,0\quad,
\end{flalign}
for all $n\geq 1$, where $\mathrm{Sh}(n-k;k)\subset S_n$ denotes the 
set of $(n-k;k)$-shuffled permutations on $n$ letters and $\tau^\sigma : L^{\otimes n}\to L^{\otimes n}$
denotes the action of the permutation $\sigma$ via the symmetric 
braiding on the category of graded vector spaces.
\end{defi}
\begin{rem}\label{rem:Linfty}
As a consequence of the identity \eqref{eqn:homotopyJacobi} for $n=1$, the linear map
$\ell_1 : L\to L$ of degree $1$ is nilpotent: $(\ell_1)^2=0$. Hence
every $L_\infty$-algebra has an underlying cochain complex $(L,\dd_L := \ell_1)$. The 
identity \eqref{eqn:homotopyJacobi} for $n=2$ states that $\ell_2:L\otimes L\to L$ is 
a cochain map, while for $n=3$ it states that $\ell_2$ obeys the Jacobi identity up 
to the cochain homotopy $\ell_3:L\otimes L\otimes L\to L$. In particular, an 
$L_\infty$-algebra with $\ell_n=0$ for all $n\geq 3$
is simply a dg-Lie algebra with Lie bracket~$[\,\cdot\,,\,\cdot\,]:=\ell_2$.
\end{rem}
\begin{defi}
A {\em cyclic $L_\infty$-algebra} is an $L_\infty$-algebra $(L,\{\ell_n\})$ 
together with a non-degenerate symmetric cochain
map $\cyc{\,\cdot\,}{\,\cdot\,} : L\otimes L \to\bbK[-3]$ that satisfies
the cyclicity condition
\begin{flalign}
\cyc{v_0}{\ell_n(v_1,\dots,v_n)}\,=\,(-1)^{n\,(\vert v_0\vert + 1)} ~\cyc{v_n}{\ell_n(v_0,\dots,v_{n-1})}\quad,
\end{flalign}
for all $n\geq 1$ and all homogeneous elements $v_0,v_1,\dots,v_n\in L$.
\end{defi}

Given any free BV theory $(E,-Q,\langle\,\cdot\,,\,\cdot\,\rangle)$
as in Definition \ref{def:linearBV}, the shifted cochain complex
$E[-1]$ defines an Abelian cyclic $L_\infty$-algebra 
with $\ell_1 = \dd_{E[-1]}= Q$ and $\ell_n=0$ for all $n\geq 2$.
The cyclic structure
\begin{flalign}\label{eqn:cyclicCl}
\xymatrix@C=3.5em{
\cyc{\,\cdot\,}{\,\cdot\,} \,:\,E[-1]\otimes E[-1]\,\cong\, (E\otimes E)[-2]
\ar[r]^-{\langle\,\cdot\, , \,\cdot\,\rangle[-2]} ~&~ \bbK[-1][-2]\,\cong\,\bbK[-3]
}
\end{flalign}
is defined analogously to \eqref{eqn:dualpairing}. The problem of 
finding an interaction term $I\in (\Sym\,E[1])^0$ that satisfies
the classical master equation \eqref{eqn:CME}
is then equivalent to endowing the cochain complex 
$(E[-1],\ell_1 = Q)$ with higher brackets $\{\ell_n\}_{n\geq 2}$ that 
result in a cyclic $L_\infty$-algebra with respect to \eqref{eqn:cyclicCl}.
The relationship between the brackets $\{\ell_n\}_{n\geq 2}$ 
and the interaction term $I\in (\Sym\,E[1])^0$ is given by the {\em homotopy Maurer-Cartan action}, 
see e.g.\ \cite[Section 4.3]{JRSW19} for a detailed presentation. 
\sk

This is most easily written
down by using `contracted coordinate functions'.
Let us choose any basis $\{\varepsilon_\alpha \in E[-1]\}$ of the $L_\infty$-algebra
and denote by $\{\varrho^\alpha \in E[-1]^\ast \cong E[2]\}$ its dual with respect
to the cyclic structure, i.e.\ $\cyc{\varrho^\alpha}{\varepsilon_\beta} 
=\delta^\alpha_\beta$ for all $\alpha,\beta$. Then the contracted
coordinate functions are defined as the element
\begin{flalign}\label{eqn:CCF}
\mathsf{a} \,:=\, \sum_\alpha\, \varrho^\alpha\otimes \varepsilon_\alpha
\ \in \ \big((\Sym\,E[1])\otimes E[-1]\big)^1
\end{flalign} 
of degree $1$ in the tensor product of the dg-algebra of
polynomial observables and the $L_\infty$-algebra $E[-1]$.
The cyclic $L_\infty$-algebra structure on $E[-1]$ extends in the obvious way
to the tensor product $(\Sym\,E[1])\otimes E[-1]$, providing
extended brackets 
\begin{flalign}
\ell_n^{\mathrm{ext}} \,:\, \big((\Sym\,E[1])\otimes E[-1]\big)^{\otimes n}~\longrightarrow~
(\Sym\,E[1])\otimes E[-1]\quad,
\end{flalign} 
for all $n \geq 2$, and an extended pairing
\begin{flalign}
\cyc{\,\cdot\,}{\,\cdot\,}_{\mathrm{ext}} \,:\, \big((\Sym\,E[1])\otimes E[-1]\big)\otimes\big(
(\Sym\,E[1])\otimes E[-1]\big) ~\longrightarrow~ (\Sym\,E[1])[-3]\quad.
\end{flalign}
The explicit formulas can be found in \cite[Section 2.3]{JRSW19}. 
\sk

We can now finally write down the desired relationship
\begin{flalign}\label{eqn:MaurerCartanAction}
\lambda \, I \,=\, \sum_{n\geq 2}\,\frac{\lambda^{n-1}}{(n+1)!}\, \cyc{\mathsf{a}}{\ell_n^{\mathrm{ext}}(\mathsf{a},\dots,\mathsf{a})}_{\mathrm{ext}}\ \in\ (\Sym\,E[1])^0
\end{flalign}
between the higher brackets $\{\ell_n\}_{n\geq 2}$ and the interaction term $I\in (\Sym\,E[1])^0$,
where we recall that $\lambda$ is a formal parameter which is interpreted as a coupling 
constant.\footnote{It is easy to check that
given any $L_\infty$-algebra structure $\{\ell_n\}_{n\geq 1}$ and
any (formal or non-formal) parameter $\lambda$, the rescaled brackets 
$\{\lambda^{n-1}\,\ell_n\}_{n\geq 1}$ also satisfy the homotopy Jacobi identities 
\eqref{eqn:homotopyJacobi}. This explains the 
powers of $\lambda$ on the right-hand side of \eqref{eqn:MaurerCartanAction}.}
It can be shown (cf.~\cite[Section~4.3]{JRSW19}) that the interaction term 
\eqref{eqn:MaurerCartanAction} satisfies the classical master equation \eqref{eqn:CME} and
that it is annihilated by the BV Laplacian, i.e.\ $\Delta_{\BV} (\lambda\,I) = 0$.
As a consequence, it also satisfies the quantum master equation \eqref{eqn:QME}.

\subsection{\label{subsec:HPL}Homological perturbation theory}
The correlation functions of non-interacting and also interacting 
quantum BV theories can be computed by employing techniques
from homological perturbation theory, see e.g.\ \cite{CG16,Gwi12}.  
We will now briefly review the relevant constructions.
In the following definition we regard the cohomology $H^\bullet(V)$
of a cochain complex $V\in\Ch_\bbK$ as a cochain complex with trivial differential.
\begin{defi}\label{def:defret}
A {\em strong deformation retract} of a cochain complex $V\in \Ch_\bbK$ onto its cohomology
$H^\bullet(V)$ is given by the following data:
\begin{itemize}
\item[(i)] A cochain map $\iota :H^\bullet(V)\to V$;
\item[(ii)] A cochain map $\pi : V\to H^\bullet(V)$;
\item[(iii)] A $(-1)$-cochain $\gamma \in\hom(V,V)^{-1}$.
\end{itemize}
These data are required to satisfy the following conditions:
\begin{itemize}
\item[a)] $\pi \, \iota = \id_{H^\bullet(V)}$;
\item[b)] $\iota\, \pi - \id_{V} = \partial (\gamma) = \dd\,\gamma + \gamma\,\dd$;
\item[c)] $\gamma^2=0$,  $\gamma\, \iota=0$ and $\pi\, \gamma=0$.
\end{itemize}
A strong deformation retract may be visualized by
\begin{equation}\label{eqn:defretpicture}
\begin{tikzcd}
(H^\bullet(V),0) \ar[r,shift right=1ex,swap,"\iota"] & \ar[l,shift right=1ex,swap,"\pi"] (V,\dd)\ar[loop,out=25,in=-25,distance=28,"\gamma"]
\end{tikzcd}\quad,
\end{equation}
where we also explicitly display the differentials.
\end{defi}

The homological perturbation lemma (see e.g.\ \cite{Cra04}) states that small perturbations
$\dd + \delta$ of the differential $\dd$ on $V$ lead to perturbations
of strong deformation retracts.  By a \emph{small perturbation} one means that,
in addition to $(\dd + \delta)^2=0$, the map $\id_V - \delta\, \gamma$
is invertible. In particular, the formal deformations of Section~\ref{subsec:BV}
are always small perturbations in this sense.  The precise statement
of the homological perturbation lemma is as follows.
\begin{theo}\label{theo:HPL}
Consider any strong deformation retract as in \eqref{eqn:defretpicture}
and let $\delta \in\hom(V,V)^1$ be a small perturbation.  Then there exists
a strong deformation retract
\begin{subequations}\label{eqn:HPL}
\begin{equation}
\begin{tikzcd}
(H^\bullet(V),\widetilde{\delta} \, ) \ar[r,shift right=1ex,swap,"\widetilde{\iota}"] & \ar[l,shift right=1ex,swap,"\widetilde{\pi}"] (V,\dd+\delta) \quad \ar[loop,out=25,in=-25,distance=28,"\widetilde{\gamma}"]
\end{tikzcd}
\end{equation}
with
\begin{flalign}
\widetilde{\delta}\,&=\, \pi\, (\id_V - \delta\, \gamma)^{-1}\, \delta\, \iota\quad,\\[4pt]
\widetilde{\iota}\,&=\, \iota + \gamma\, (\id_V - \delta\, \gamma)^{-1}\,\delta\, \iota\quad,\\[4pt]
\widetilde{\pi}\,&=\, \pi + \pi\, (\id_V - \delta\, \gamma)^{-1}\,\delta\, \gamma\quad,\\[4pt]
\widetilde{\gamma} \,&=\, \gamma + \gamma\, (\id_V - \delta\, \gamma)^{-1}\, \delta\, \gamma\quad.
\end{flalign}
\end{subequations}
\end{theo}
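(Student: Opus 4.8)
The plan is to verify directly that the four formulas in \eqref{eqn:HPL} define a strong deformation retract of $(V,\dd+\delta)$ onto its cohomology, by checking the conditions (a), (b), (c) of Definition \ref{def:defret} together with the requirement that $\widetilde\iota,\widetilde\pi$ are cochain maps and $\widetilde\delta^2 = 0$ so that $(H^\bullet(V),\widetilde\delta)$ is itself a complex. The central algebraic tool throughout is the geometric-series operator $A := (\id_V - \delta\,\gamma)^{-1}$, which exists precisely because $\delta$ is a small perturbation. First I would record the two resolvent identities $A = \id_V + A\,\delta\,\gamma = \id_V + \delta\,\gamma\,A$ (equivalently $A\,\delta\,\gamma = \delta\,\gamma\,A$), which follow immediately from $A\,(\id_V-\delta\,\gamma)=\id_V=(\id_V-\delta\,\gamma)\,A$. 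Essentially every subsequent manipulation is an application of these to slide $A$ past factors of $\delta\,\gamma$.

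\medskip

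\noindent Next I would establish the purely computational consequences of the original retract data. The key reductions are: $\gamma\,A\,\gamma = \gamma^2 + \gamma^2\,\delta\,\gamma\,A = 0$ using $\gamma^2=0$ from (c); similarly $\pi\,A\,\delta\,\iota$, $\gamma\,A\,\delta\,\gamma$ and so on simplify because of the side conditions $\gamma\,\iota=0$ and $\pi\,\gamma=0$. For condition (a), I would compute $\widetilde\pi\,\widetilde\iota$ and expand the product $(\pi + \pi\,A\,\delta\,\gamma)(\iota + \gamma\,A\,\delta\,\iota)$; the cross terms should collapse via $\pi\,\gamma=0$ and $\gamma\,\iota=0$ together with $\pi\,\iota=\id$, leaving $\id_{H^\bullet(V)}$. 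For condition (b), the target identity is $\widetilde\iota\,\widetilde\pi - \id_V = (\dd+\delta)\,\widetilde\gamma + \widetilde\gamma\,(\dd+\delta)$; I would expand both sides and use the original homotopy identity $\iota\,\pi-\id_V = \dd\,\gamma+\gamma\,\dd$ as the starting point, then account for the $\delta$-corrections by repeatedly invoking the resolvent identities to convert $\dd\,A$ and $A\,\dd$ terms. For condition (c), the three nilpotency/orthogonality relations $\widetilde\gamma^2=0$, $\widetilde\gamma\,\widetilde\iota=0$, $\widetilde\pi\,\widetilde\gamma=0$ should each reduce to the originals after the $\gamma^2=0$, $\gamma\,\iota=0$, $\pi\,\gamma=0$ cancellations propagate through.

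\medskip

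\noindent \textbf{The hard part} will be the bookkeeping in condition (b), specifically controlling how the perturbed differential $\dd+\delta$ interacts with the resolvent $A$. The subtlety is that $A$ does not commute with $\dd$ on the nose; instead one must use that $\delta$ is a degree-$+1$ cochain with $(\dd+\delta)^2=0$, which unpacks to $\dd^2=0$ together with the compatibility relation $\dd\,\delta + \delta\,\dd + \delta^2 = 0$. The cleanest route is to first prove the auxiliary identity $\partial(A\,\delta) = A\,\delta\,\gamma\,\partial(\delta) - \partial(\delta)\,\gamma\,A\,\delta$ (where $\partial = [\dd,-]$ is the adjoint differential from \eqref{eqn:mappingChK}), or more practically to verify $(\dd+\delta)\,\widetilde\iota = \widetilde\iota\,\widetilde\delta$ and $\widetilde\pi\,(\dd+\delta) = \widetilde\delta\,\widetilde\pi$ as separate lemmas establishing that $\widetilde\iota,\widetilde\pi$ are chain maps with respect to $\widetilde\delta$. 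Once those two chain-map identities are in hand, condition (b) and the relation $\widetilde\delta^2=0$ follow more mechanically. I would organize the proof by first proving these chain-map statements, then deducing (a), (b), (c) as corollaries, since that isolates the one genuinely delicate use of $(\dd+\delta)^2=0$ into a single step.

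\medskip

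\noindent I expect no conceptual obstruction beyond this sign-and-resolvent accounting; the statement is the classical homological perturbation lemma and the entire content is that the small-perturbation hypothesis makes $A$ well-defined and that the side conditions (c) of the strong retract are exactly what is needed to make the cross terms vanish. An alternative, more structural approach would deduce the result from the Gauge-theoretic or operadic formulation of the perturbation lemma, but given the self-contained finite-dimensional setting of this paper, the direct verification via the resolvent identities is the most transparent and is the route I would follow.
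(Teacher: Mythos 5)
Your plan is correct and is the standard direct verification of the homological perturbation lemma via the resolvent $A=(\id_V-\delta\,\gamma)^{-1}$ and its identities $A=\id_V+A\,\delta\,\gamma=\id_V+\delta\,\gamma\,A$; the paper does not prove Theorem~\ref{theo:HPL} itself but refers to \cite{Cra04}, whose argument is exactly this computation (with the side conditions c) of Definition~\ref{def:defret} doing precisely the work you assign them), so there is no divergence of method. One minor slip worth fixing: in reducing $\gamma\,A\,\gamma$ you should expand with $A=\id_V+A\,\delta\,\gamma$ so that the correction term ends in $\gamma^2$, giving $\gamma\,A\,\gamma=\gamma^2+\gamma\,A\,\delta\,\gamma^2=0$; the expression $\gamma^2+\gamma^2\,\delta\,\gamma\,A$ as written is misordered, though the conclusion and the mechanism are right.
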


Let us consider now a free BV theory $(E,-Q,\langle\,\cdot\,, \,\cdot\,\rangle)$  
in the sense of Definition \ref{def:linearBV} and choose
a strong deformation retract for its dual complex $E^\ast \cong E[1]$, i.e.\
\begin{equation}\label{eqn:defretL}
\begin{tikzcd}
(H^\bullet(E[1]),0) \ar[r,shift right=1ex,swap,"\iota"] & \ar[l,shift right=1ex,swap,"\pi"] (E[1],Q)\ar[loop,out=25,in=-25,distance=28,"\gamma"]
\end{tikzcd}\quad.
\end{equation}
In Section \ref{sec:fuzzysphere} we shall illustrate through concrete examples of matrix models
that such strong deformation retracts are related to Green operators.
It was shown in \cite[Proposition 2.5.5]{Gwi12} that there exists an associated
strong deformation retract
\begin{equation}\label{eqn:defretSym}
\begin{tikzcd}
\big(\Sym\,H^\bullet(E[1]),0\big) \ar[r,shift right=1ex,swap,"\Sym\,\iota"] & \ar[l,shift right=1ex,swap,"\Sym\,\pi"] \big(\Sym\,E[1],Q\big) \quad \ \ar[loop,out=25,in=-25,distance=28,"\Sym\,\gamma"]
\end{tikzcd}
\end{equation}
at the level of symmetric algebras. The cochain maps
$\Sym\,\iota$ and $\Sym\,\pi$ are given by extending $\iota$
and $\pi$ in the usual way as commutative dg-algebra morphisms, i.e.\
\begin{flalign}
\Sym\,\iota\big([\psi_1]\cdots[\psi_n]\big)\,:=\,\iota([\psi_1])\cdots\iota([\psi_n])~~,\quad
\Sym\,\pi\big(\varphi_1\cdots\varphi_n\big)\,:=\, \pi(\varphi_1)\cdots \pi(\varphi_n)\quad,
\end{flalign}
for all $[\psi_1],\dots,[\psi_n]\in H^\bullet(E[1])$ and $\varphi_1,\dots,\varphi_n\in E[1]$.
\sk

The cochain homotopy $\Sym\,\gamma$ is slightly more complicated to define.
We first note that $\iota\,\pi : E[1]\to E[1]$ defines a projector,  i.e.\ $(\iota\,\pi)^2=\iota\,\pi$.
Hence one obtains a decomposition
\begin{subequations}
\begin{flalign}
E[1]\,\cong\, E[1]^{\perp}\oplus H^\bullet(E[1])
\end{flalign}
and consequently
\begin{flalign}
\Sym\,E[1] \,\cong\, \Sym\,E[1]^\perp \otimes \Sym\,H^\bullet(E[1])\,\cong\, \bigoplus_{n\geq 0}\,  \Sym^n\,E[1]^\perp \otimes \Sym\,H^\bullet(E[1]) \quad,
\end{flalign}
\end{subequations}
where $\Sym^n$ denotes the $n$-th symmetric power.  The cochain homotopy $\Sym\,\gamma$
is then defined by setting
\begin{flalign}\label{eqn:symhomotopy}
\Sym\,\gamma\big(\varphi^\perp_1\cdots \varphi_n^\perp\otimes a \big)
\,=\,\frac{1}{n}\, \sum_{i=1}^n\,(-1)^{\sum_{j=1}^{i-1}\, \vert \varphi_j^\perp\vert }~
\varphi_1^\perp\cdots \varphi_{i-1}^\perp\,  \gamma(\varphi_i^\perp)\,\varphi_{i+1}^\perp \cdots\varphi_n^\perp\otimes a\quad,
\end{flalign}
for all homogeneous elements 
$\varphi^\perp_1\cdots \varphi_n^\perp \otimes a \in  
\Sym^n\,E[1]^\perp \otimes \Sym\,H^\bullet(E[1])$ in this decomposition.
For $n=0$,  this expression should be read as $\Sym\,\gamma(a) =0$,
for all $a\in \Sym\,H^\bullet(E[1])$.
\sk

The correlation functions of non-interacting and also interacting
quantum BV theories can then be determined by applying the homological perturbation
lemma from Theorem~\ref{theo:HPL} to the strong deformation retract \eqref{eqn:defretSym}
and the deformed differentials from Section~\ref{subsec:BV}.
Let us explain this in some more detail. 
As we have explained in Section~\ref{subsec:BV}, 
quantization and including interactions are described by
deforming the differential $Q$ of the right complex in the strong deformation retract \eqref{eqn:defretSym}.
Let us write generically $Q+\delta$ for the deformed differential, where 
$\delta$ stands either for an interaction term \eqref{eqn:Qint}, the BV Laplacian
\eqref{eqn:Qhbar} or the sum of both \eqref{eqn:Qhbarint}.
Applying Theorem~\ref{theo:HPL} we obtain a deformed strong deformation retract,
which we denote by
\begin{equation}\label{eqn:tildeSymretract}
\begin{tikzcd}
\big(\Sym\,H^\bullet(E[1]),\widetilde{\delta}\ \big) \ar[r,shift right=1ex,swap,"\widetilde{\Sym\,\iota}"] & \ar[l,shift right=1ex,swap,"\widetilde{\Sym\,\pi}"] \big(\Sym\,E[1],Q+\delta \big) \qquad \quad \ \ar[loop,out=25,in=-25,distance=40,"\widetilde{\Sym\,\gamma}"]
\end{tikzcd}\quad.
\end{equation}
\sk

The `smeared' $n$-point correlation functions are then given by applying
the map $\widetilde{\Sym\,\pi}$ on a product of the `test functions'
$\varphi_1,\dots,\varphi_n\in E[1]$, i.e.\
\begin{flalign}\label{eqn:npointfunction}
\widetilde{\Sym\,\pi}\big(\varphi_1\cdots\varphi_n\big)\ \in\ \Sym\,H^\bullet(E[1])\quad.
\end{flalign}
This can be computed perturbatively (as formal power series in $\lambda$ or $\hbar$, or both) 
by using the explicit formulas from Theorem~\ref{theo:HPL}.
Note that, in general, the correlation functions are not
simply numbers, rather they are elements of the symmetric algebra $\Sym\,H^\bullet(E[1])$.
The latter should be interpreted as the algebra of polynomial functions on
the space of vacua of the theory, which is the cohomology $H^\bullet(E)$ of
the derived solution complex $E$, cf.\ Remark \ref{rem:dersolution}.
Hence the $n$-point correlation functions in \eqref{eqn:npointfunction} 
are functions on the space of vacua which, when evaluated in a particular vacuum, 
give the usual numerical correlations of the perturbative field theory
around this vacuum. We will illustrate this through concrete examples
in Section \ref{sec:fuzzysphere}.

%%%%%%%%%%%%%%%%%%%%%%%%%%%%%%%%%%%%%%%%%%%%%%%%%%%%%%%%%%%%%%%%%%%%%%%%
%%%%%%%%%%%%%%%%%%%%%%%%%%%%%%%%%%%%%%%%%%%%%%%%%%%%%%%%%%%%%%%%%%%%%%%%

\section{\label{sec:fuzzysphere}Field theories on the fuzzy sphere}
We illustrate the formalism of Section \ref{sec:prelim}
by studying scalar field theories and also Chern-Simons theory
on the fuzzy $2$-sphere.
The examples presented in this section are over the field 
$\bbK=\bbC$ of complex numbers.

\subsection{\label{subsec:ScalarSphere}Scalar field theories}
We consider first the simplest case of scalar field theories, 
where we show how our formalism reproduces the known $1$-loop $2$-point 
function for $\Phi^4$-theory on the fuzzy sphere, see e.g.~\cite{CMS01}.
However, in contrast to the traditional approach of~\cite{CMS01}, our 
correlation functions are generally disconnected and $1$-particle reducible,
and involve unamputated external legs.

\paragraph{The fuzzy 2-sphere.}
To fix our notation and conventions, let us recall the definition of the 
fuzzy $2$-sphere following~\cite{CMS01}.
Let $N\in\bbZ_{>0}$ be a positive integer and let $V$
denote the irreducible spin $N/2$ representation of the $\su(2)$ Lie algebra.
The algebra of functions on the fuzzy sphere $\bbS^2_N$ is defined by
\begin{flalign}\label{eqn:sphereAlgebra}
A \,:=\, \underline{\mathrm{end}}(V)\,=\, V \otimes V^\ast \quad,
\end{flalign}
where $V^\ast$ denotes the dual representation.
Since the underlying vector space of $V$ 
is $(N+1)$-dimensional, it follows that $A\cong \mathrm{Mat}_{N+1}(\bbC)$
is isomorphic to the algebra of $(N+1)\times(N+1)$-matrices with complex entries.
The action of $\su(2)$ on $V$ is encoded by a Lie algebra homomorphism
\begin{flalign}
\rho\, : \,\su(2)~\longrightarrow~ A\quad,
\end{flalign}
where the Lie bracket on $A$ is the matrix commutator.
Let us choose a basis $\{e_i\in\su(2)\}_{i=1,2,3}$ of $\su(2)$
with the usual Lie bracket relations $[e_i,e_j]=\ii \epsilon_{ijk}\,e_k$, 
where $\epsilon_{ijk}$ is the Levi-Civita symbol and summation over 
repeated indices is always understood. We introduce the constant
\begin{flalign}
\lambda_N \,:=\,\frac{1}{\sqrt{\frac{N}{2}\,\big(\frac{N}{2}+1\big)}}\ \in\ \bbR\quad.
\end{flalign} 
Then the elements
\begin{subequations}
\begin{flalign}
X_i \,:=\, \lambda_N\,\rho(e_i)\ \in\ A
\end{flalign}
generate the algebra $A$ and satisfy the fuzzy unit sphere relations
\begin{flalign}
[X_i,X_j]\,=\,\ii\lambda_N\, \epsilon_{ijk}\,X_k~~,\quad
\delta_{ij}\,X_{i}\,X_j   \,=\,\oone ~~,\quad
X_i^\ast \,=\, X_{i}\quad,
\end{flalign}
\end{subequations}
where ${}^\ast$ denotes Hermitian conjugation and
$\delta_{ij}$ is the Kronecker delta-symbol.
\sk

Integration on the fuzzy sphere is given by the normalized trace map
\begin{flalign}
A~\longrightarrow~\bbC~,~~a~\longmapsto~ \frac{4\pi}{N+1}\,\Tr(a)\quad,
\end{flalign}
and the scalar Laplacian reads as
\begin{flalign}\label{eqn:scalarsphereLaplacian}
\Delta\,:\, A~\longrightarrow~A~,~~a~\longmapsto~\Delta(a)\,:=\, 
\frac{1}{\lambda_N^2}\,\delta_{ij}\,[X_i,[X_j,a]]\quad.
\end{flalign}
A vector space basis of $A$ is given by the fuzzy spherical harmonics
$Y^J_j\in A$, for $J=0,1,\dots, N$ and $-J\leq j\leq J$.
The fuzzy spherical harmonics are eigenfunctions of the scalar Laplacian satisfying the identities
\begin{flalign}
\Delta(Y^J_j) \,=\,J\,(J+1)\, Y^J_j~~,\quad
{Y^J_j}^\ast \,=\, (-1)^J\, Y^J_{-j}~~,\quad
\frac{4\pi}{N+1}\, \Tr\big({Y^J_j}^\ast\, Y^{J^\prime}_{j^\prime}\big)
\,=\,\delta_{JJ^\prime}\,\delta_{j j^\prime}\quad.
\end{flalign}
There is also an explicit `fusion formula' for the products $Y^I_i\, Y^J_j$
of fuzzy spherical harmonics in terms of Wigner's $3j$ and $6j$ symbols, see
e.g.\ \cite{CMS01}, which we however do not need in the present paper.

\paragraph{Free BV theory.}
We are now ready to describe a non-interacting scalar field theory
on the fuzzy sphere as a free BV theory in the sense of Definition \ref{def:linearBV}.
\begin{defi}\label{def:spherescalar}
The free BV theory associated to a scalar 
field with mass parameter $m^2\geq 0$ on the fuzzy sphere is 
given by the cochain complex
\begin{flalign}
E\,=\,\big(\xymatrix{
\stackrel{(0)}{A} \ar[r]^-{-Q} & \stackrel{(1)}{A}
}\big)\qquad\text{with}\qquad Q\,:=\,\Delta +m^2
\end{flalign}
concentrated in degrees $0$ and $1$, together with the pairing
\begin{flalign}
\langle\,\cdot\,,\,\cdot\,\rangle\,:\,
E\otimes E~\longrightarrow~\bbC[-1]~,~~
\varphi\otimes\psi~\longmapsto~\langle\varphi,\psi\rangle\,:=\, (-1)^{\vert \varphi\vert}\,\frac{4\pi}{N+1}\,\Tr\big( \varphi\,\psi\big)\quad.
\end{flalign}
\end{defi}

Following the general approach outlined in Section \ref{subsec:BV},
we can construct from this input a $P_0$-algebra
\begin{flalign}
\Obs^\cl \,:=\,\big(\Sym\,E[1] , Q ,\{\,\cdot\,,\,\cdot\,\}\big)
\end{flalign}
of classical observables of the non-interacting theory,
where we note that the complex
\begin{flalign}\label{eqn:scalarsphereLinObs}
E[1] \,=\, \big(\xymatrix{
\stackrel{(-1)}{A} \ar[r]^-{Q} & \stackrel{(0)}{A}
}\big)
\end{flalign}
is concentrated in degrees $-1$ and $0$.

\paragraph{Interactions.}
In the present case of a scalar field,  
the dg-algebra $\Sym\,E[1]$ is concentrated in non-positive
degrees because the generators $E[1]$ are of degrees $-1$ and $0$.
Hence, simply for degree reasons, every $0$-cochain $I\in (\Sym\, E[1])^0$ satisfies 
\begin{flalign}
Q(I)\,=\, \{I,I\} \,=\, \Delta_{\BV}(I) \,=\,0\quad,
\end{flalign}
and consequently it automatically satisfies both the classical
and quantum master equations  \eqref{eqn:CME} and \eqref{eqn:QME}, respectively.
This means that every $0$-cochain $I\in (\Sym\,E[1])^0$
provides a well-defined interaction term for a scalar field
in both the classical and quantum cases. 
\sk

Let us nevertheless use the cyclic $L_\infty$-algebra formalism 
from Section \ref{subsec:Linfty} to introduce concrete examples of interaction terms,
focusing on the typical $m+1$-point interactions.
The Abelian cyclic $L_\infty$-algebra corresponding
to the scalar field from Definition \ref{def:spherescalar} is given by
the cochain complex
\begin{flalign}
E[-1]\,=\, \big(\xymatrix{
\stackrel{(1)}{A} \ar[r]^-{Q} & \stackrel{(2)}{A}
}\big)
\end{flalign}
and the cyclic structure
\begin{flalign}
\cyc{\,\cdot\,}{\,\cdot\,} \,:\, E[-1]\otimes E[-1]~\longrightarrow~\bbC[-3]~,~~
\varphi\otimes\psi~\longmapsto~\cyc{\varphi}{\psi}\,=\,\frac{4\pi}{N+1}\, \Tr\big(\varphi\,\psi\big)\quad.
\end{flalign}
Choosing any $m\geq 2$, we can endow this with the compatible $m$-bracket
\begin{flalign}\label{eqn:ellmscalar}
\ell_m\,:\,E[-1]^{\otimes m}~\longrightarrow~E[-1]~,~~
\varphi_1\otimes\cdots\otimes\varphi_m~\longmapsto~
\frac{1}{m!}\,\sum_{\sigma\in S_m}\,\varphi_{\sigma(1)}\cdots\varphi_{\sigma(m)}\quad,
\end{flalign}
which for degree reasons is only non-vanishing if each $\varphi_i\in E[-1]$ 
is of degree $1$ in $E[-1]$. The symmetrization of the matrix multiplications
in the definition of $\ell_m$ then implies that $\ell_m$
is, as required, graded antisymmetric.
\sk

The contracted coordinate functions \eqref{eqn:CCF} 
in the present case can be described in terms of the fuzzy spherical harmonics $Y^J_j$.
We shall write $Y^J_j\in E[-1]^1=A$ when we regard the fuzzy spherical harmonics
as elements of degree $1$ in $E[-1]$ and \smash{$\widetilde{Y}^J_j\in E[-1]^2=A$} when we regard them
as elements of degree $2$ in $E[-1]$. With these notational conventions,
the contracted coordinate functions read as
\begin{flalign}\label{eqn:fuzzyS2sfa}
\mathsf{a}\,=\, \sum_{J,j}\, {Y^J_j}^\ast\otimes Y^J_j 
+ \sum_{J,j}\, {\widetilde{Y}^J_j}{}^\ast\otimes \widetilde{Y}^J_j
\ \in\ \big((\Sym \,E[1])\otimes E[-1]\big)^1\quad,
\end{flalign}
where ${Y^J_j}^\ast\in E[1]^0=A$ denotes elements
of degree $0$ in $E[1]$ and \smash{${\widetilde{Y}^J_j}{}^\ast\in E[1]^{-1}=A$} denotes elements
of degree $-1$ in $E[1]$. 
\sk

The interaction term \eqref{eqn:MaurerCartanAction}
corresponding to an $m+1$-point interaction 
then reads concretely as
\begin{subequations}\label{eqn:mplus1point}
\begin{flalign}
\lambda \,I\,&=\,\frac{\lambda^{m-1}}{(m+1)!} \,
\cyc{\mathsf{a}}{\ell^{\mathrm{ext}}_m(\mathsf{a},\dots,\mathsf{a})}_{\mathrm{ext}}\\[4pt]
\nn \,&=\, \frac{\lambda^{m-1}}{(m+1)!} \,\sum_{J_0,j_0,\dots,J_m,j_m}\,{Y^{J_0}_{j_0}}^\ast\, {Y^{J_1}_{j_1}}^\ast\,
\cdots\,{Y^{J_m}_{j_m}}^\ast~\cyc{Y_{j_0}^{J_0}}{\ell_m(Y^{J_1}_{j_1},\dots,Y^{J_m}_{j_m})} 
\ \in\ (\Sym\,E[1])^0\quad,
\end{flalign}
where we stress that the products of the ${Y^{J_i}_{j_i}}^\ast\in E[1]^0$ 
in the second line are {\em not} given by matrix multiplication 
but rather by the product in the symmetric algebra
$\Sym\,E[1]$. The constants
\begin{flalign}
I_{j_0j_1\cdots j_m}^{J_0J_1\cdots J_m}\,
:=\,\cyc{Y_{j_0}^{J_0}}{\ell_m(Y^{J_1}_{j_1},\dots,Y^{J_m}_{j_m})} \ \in\ \bbC
\end{flalign}
\end{subequations}
can in principle be worked out explicitly in terms of the
Wigner $3j$ and $6j$ symbols as in \cite{CMS01}, but this level of detail 
is not needed in the present paper. Because
of the underlying cyclic $L_\infty$-algebra structure, the constants $I_{j_0j_1\cdots j_m}^{J_0J_1\cdots J_m}$
are symmetric under the exchange of any neighboring pairs of indices, i.e.\
\begin{flalign}\label{eqn:Isymmetry}
I_{j_0j_1\cdots j_i j_{i+1}\cdots j_m}^{J_0J_1\cdots J_i J_{i+1}\cdots J_m}
\,=\,I_{j_0j_1\cdots j_{i+1} j_{i}\cdots j_m}^{J_0J_1\cdots J_{i+1} J_{i}\cdots J_m} \quad.
\end{flalign}

\paragraph{Strong deformation retract.}
The cohomology of the complex \eqref{eqn:scalarsphereLinObs}
depends on whether one considers a massive or a massless scalar field.
Since the spectrum of the scalar Laplacian 
\eqref{eqn:scalarsphereLaplacian} is $\{J\,(J+1)\,:\, J=0,1,\dots, N\}$,
we obtain  
\begin{flalign}
H^\bullet(E[1]) \,\cong\,\begin{cases}
0 ~&~ \text{for } m^2 >0\quad,\\
\bbC[1]\oplus \bbC ~&~\text{for } m^2=0\quad.
\end{cases}
\end{flalign}
In the massive case $m^2 > 0$, the strong deformation retract
is given by
\begin{equation}\label{eqn:massiveDefRet}
\begin{tikzcd}
(0,0) \ar[r,shift right=1ex,swap,"\iota=0"] & \ar[l,shift right=1ex,swap,"\pi=0"] (E[1],Q)\ar[loop,out=25,in=-25,distance=28,"\gamma=-G"]
\end{tikzcd}\qquad\text{for }m^2>0\quad,
\end{equation}
where $G$ is the inverse of $Q = \Delta +m^2$, i.e.\ the Green operator,
and $\gamma=-G$ is defined to act as a degree $-1$ map on $E[1]$ 
(cf.\ Definition \ref{def:defret} (iii)). 
\sk

The massless case $m^2=0$ is slightly more complicated
because the scalar Laplacian $Q=\Delta$ has a non-trivial kernel, which is
given by complex multiples of the unit $\oone\in A$. 
The linear map \smash{$\eta \, \frac{1}{N+1}\, \Tr : A\to A\,,~a\mapsto 
\frac{1}{N+1}\,\Tr(a)\,\oone$}, 
obtained by composing the normalized trace and the unit map $\eta:\bbC\to A$,
defines a projector onto the kernel of $\Delta$, which can be used to decompose
\begin{flalign}\label{eqn:Adecomposition}
A\,\cong\, A^\perp\oplus \bbC\quad.
\end{flalign}
By the rank-nullity theorem of linear algebra,
the scalar Laplacian restricts to an isomorphism 
$\Delta^\perp : A^\perp \to A^\perp$
and we denote its inverse by $G^\perp :  A^\perp \to A^\perp$.
Extending $G^\perp$ by $0$ to all of $A$ we obtain the linear map
\begin{flalign}
G_{0}\,:=\, G^\perp \, \big(\id_A-\eta \, \tfrac{1}{N+1}\,\Tr \big)\,:\, A~\longrightarrow~A\quad,
\end{flalign}
from which the strong deformation retract in the massless case is given by
\begin{equation}\label{eqn:masslessDefRet}
\begin{tikzcd}
(\bbC[1]\oplus\bbC,0) \ar[rr,shift right=1ex,swap,"\iota=\eta"] && \ar[ll,shift right=1ex,swap,"\pi=\frac{1}{N+1}\,\Tr"] (E[1],Q)\ar[loop,out=25,in=-25,distance=28,"\gamma=-G_{0}"]
\end{tikzcd}\qquad\text{for }m^2=0\quad.
\end{equation}
The strong deformation retracts for the massive and massless cases 
\eqref{eqn:massiveDefRet} and \eqref{eqn:masslessDefRet}, respectively, 
extend to the symmetric algebras via the construction outlined below \eqref{eqn:defretSym}.

\paragraph{Correlation functions for $\boldsymbol{m^2>0}$.} We shall now explain in some more detail
how correlation functions may be computed and provide some explicit examples.
We focus here on the massive case $m^2>0$ and comment briefly on the massless case
later on. 
\sk

Recall that the strong deformation retract \eqref{eqn:massiveDefRet}
extends to the symmetric algebras. Given any small perturbation $\delta$ 
of the differential $Q$ on $\Sym\,E[1]$, we obtain the deformed strong deformation retract
\begin{equation}
\begin{tikzcd}
\big(\Sym\,0\cong \bbC ,0 \big) \ar[r,shift right=1ex,swap,"\widetilde{\Sym\,\iota}"] & \ar[l,shift right=1ex,swap,"\widetilde{\Sym\,\pi}"] \big(\Sym\,E[1],Q+\delta \big) \qquad \quad \ \ar[loop,out=25,in=-25,distance=40,"\widetilde{\Sym\,\gamma}"]
\end{tikzcd}\quad,
\end{equation}
where the tilded quantities are computed through the homological perturbation lemma, cf.\
Theorem \ref{theo:HPL}. To compute the correlation functions \eqref{eqn:npointfunction}, 
we have to consider the cochain map 
\begin{flalign}
\nn \widetilde{\Pi} \,:=&\ \widetilde{\Sym\,\pi}
\,=\, \Sym\,\pi + \Sym\,\pi \,\big(\id - \delta~\Sym\,\gamma\big)^{-1}\,
\delta\,(\Sym\,\gamma) \\[4pt]
\,=&\ \big( \Sym\,\pi\big)\,\circ ~\sum_{k=0}^{\infty}\, \big(\delta~\Sym\,\gamma\big)^k
\,=\, \Pi \,\circ~ \sum_{k=0}^{\infty}\, \big(\delta~\Gamma\big)^k\quad,\label{eqn:widetildePi}
\end{flalign}
where we have simplified the notation by denoting the extensions of maps 
to symmetric algebras by capital symbols. Recall that the relevant perturbations $\delta$
are of the form
\begin{flalign}
\delta\,=\, \hbar\,\Delta_{\BV} + \{ \lambda\, I,\,\cdot\,\}\quad,
\end{flalign}
where $\Delta_{\BV}$ is the BV Laplacian \eqref{eqn:BVexplicit} and 
$\lambda\, I\in (\Sym\,E[1])^0$ denotes the $m+1$-point interaction term \eqref{eqn:mplus1point}
for some $m\geq 2$.
\sk

We are particularly interested in the correlation functions $\widetilde{\Pi}(\varphi_1\cdots\varphi_n)$
for test functions $\varphi_1,\dots,\varphi_n\in E[1]^0$ of degree zero; 
these describe the correlators of the physical field, in contrast to correlators involving antifields.
To work out the perturbative expansion \eqref{eqn:widetildePi}
of such correlators, we have to understand how the maps $\Pi$ and $\delta\,\Gamma$
act on elements $\varphi_1\cdots\varphi_n\in \Sym\,E[1]$ with all $\varphi_i\in E[1]^0$ of degree zero.
Because $\pi=0$ in the present case (cf.\ \eqref{eqn:massiveDefRet}), we have
\begin{flalign}\label{eqn:Pimapscalar}
\Pi(\oone) \,=\, 1~~,\quad \Pi(\varphi_1\cdots\varphi_n) \, =\,0\quad,
\end{flalign}
for all $n\geq 1$. To describe the map $\delta\,\Gamma = \hbar\,\Delta_{\BV}\,\Gamma + \{\lambda\,I, \,\cdot\,\}\,\Gamma$,
it is convenient to consider the two summands individually. For the first term, we use 
the definition \eqref{eqn:symhomotopy} of $\Gamma=\Sym\,\gamma$ and the explicit description 
of the BV Laplacian \eqref{eqn:BVexplicit}, resulting in 
\begin{flalign}\label{eqn:DeltaBVhomotopy}
\hbar\,\Delta_{\BV}\,\Gamma\big(\varphi_1\cdots \varphi_n\big)\,=\,
-\frac{2\,\hbar}{n}\,\sum_{i<j}\, \big(\varphi_i,G(\varphi_j)\big)~\varphi_1\cdots\widehat{\varphi}_i\cdots\widehat{\varphi}_j\cdots\varphi_n\quad,
\end{flalign}
where we recall that $G=-\gamma$ is the Green operator for $Q=\Delta + m^2$.
For the second term, we use the axioms of $P_0$-algebras (cf.\ Remark \ref{rem:P0algebraTrad})
and the explicit expression \eqref{eqn:mplus1point} for the $m+1$-point interaction term (together with
its symmetry property \eqref{eqn:Isymmetry}), resulting in
\begin{multline}\label{eqn:Ihomotopy}
\big\{\lambda\,I,\Gamma(\varphi_1\cdots\varphi_n)\big\} \,=\,\\[4pt]
-\frac{\lambda^{m-1}}{m!\,n}\, \sum_{i=1}^n \ \sum_{J_0,j_0,\dots,J_m,j_m}\,
\varphi_1\cdots\varphi_{i-1}~\,I_{j_0 j_1\cdots j_m}^{J_0 J_1\cdots J_m}\,
\big({Y^{J_0}_{j_0}}^\ast , G(\varphi_i)\big)~ {Y^{J_1}_{j_1}}^\ast\,
\cdots\,{Y^{J_m}_{j_m}}^\ast\,\varphi_{i+1}\cdots \varphi_n\quad.
\end{multline}

The two expressions in \eqref{eqn:DeltaBVhomotopy} and \eqref{eqn:Ihomotopy} 
admit a convenient graphical description. 
Depicting the element $\varphi_1\cdots \varphi_n$ by $n$ vertical lines,
the map in \eqref{eqn:DeltaBVhomotopy} may be depicted as
\begin{flalign}
\hbar\,\Delta_{\BV}\,\Gamma\Big(\,\vcenter{\hbox{\begin{tikzpicture}[scale=0.5]
\draw[thick] (0,0) -- (0,1);
\draw[thick] (0.5,0) -- (0.5,1);
\draw[thick] (1,0) -- (1,1);
\node (dottt) at (1.75,0.5) {$\cdots$};
\draw[thick] (2.5,0) -- (2.5,1);
\draw[thick] (3,0) -- (3,1);
\end{tikzpicture}}}\,\Big)
\, =\, -\frac{2\,\hbar}{n}\,\Big(\,
\vcenter{\hbox{\begin{tikzpicture}[scale=0.5]
\draw[thick] (0,0) -- (0,1);
\draw[thick] (0.5,0) -- (0.5,1);
\draw[thick] (0,1) to[out=90,in=90] (0.5,1);
\draw[thick] (1,0) -- (1,1);
\node (dottt) at (1.75,0.5) {$\cdots$};
\draw[thick] (2.5,0) -- (2.5,1);
\draw[thick] (3,0) -- (3,1);
\end{tikzpicture}}}
~+~
\vcenter{\hbox{\begin{tikzpicture}[scale=0.5]
\draw[thick] (0,0) -- (0,1);
\draw[thick] (0.5,0) -- (0.5,1);
\draw[thick] (1,0) -- (1,1);
\draw[thick] (0,1) to[out=90,in=90] (1,1);
\node (dottt) at (1.75,0.5) {$\cdots$};
\draw[thick] (2.5,0) -- (2.5,1);
\draw[thick] (3,0) -- (3,1);
\end{tikzpicture}}}
~+~\cdots~+~
\vcenter{\hbox{\begin{tikzpicture}[scale=0.5]
\draw[thick] (0,0) -- (0,1);
\draw[thick] (0.5,0) -- (0.5,1);
\draw[thick] (1,0) -- (1,1);
\node (dottt) at (1.75,0.5) {$\cdots$};
\draw[thick] (2.5,0) -- (2.5,1);
\draw[thick] (3,0) -- (3,1);
\draw[thick] (2.5,1) to[out=90,in=90] (3,1);
\end{tikzpicture}}}
\,\Big)\quad,
\end{flalign}
where the cap indicates a contraction of two elements
with respect to $(\,\cdot\, ,G(\cdot))$.
The map in \eqref{eqn:Ihomotopy} may be depicted as
\begin{flalign}
\Big\{ \lambda\,I\,,\, \Gamma\Big(\,\vcenter{\hbox{\begin{tikzpicture}[scale=0.5]
\draw[thick] (0,0) -- (0,1);
\draw[thick] (0.5,0) -- (0.5,1);
\draw[thick] (1,0) -- (1,1);
\node (dottt) at (1.75,0.5) {$\cdots$};
\draw[thick] (2.5,0) -- (2.5,1);
\draw[thick] (3,0) -- (3,1);
\end{tikzpicture}}}\,\Big)\Big\}
\, =\, -\frac{\lambda^{m-1}}{m!\,n}\,\bigg(
\vcenter{\hbox{\begin{tikzpicture}[scale=0.5]
\draw[thick] (-0.5,0) -- (-0.5,0.5);
\draw[thick] (-0.5,0.5) -- (0,1);
\draw[thick] (-0.5,0.5) -- (-0.25,1);
\draw[thick] (-0.5,0.5) -- (-0.75,1);
\draw[thick] (-0.5,0.5) -- (-1,1);
\node (ov) at (-0.5,1.5)  {\footnotesize{$m\text{ legs}$}};
\node (ol) at (-0.5,-0.75)  {\footnotesize{~~}};
\draw[thick] (0.5,0) -- (0.5,1);
\draw[thick] (1,0) -- (1,1);
\node (dottt) at (1.75,0.6) {$\cdots$};
\draw[thick] (2.5,0) -- (2.5,1);
\draw[thick] (3,0) -- (3,1);
\end{tikzpicture}}}
~+~\cdots~+~
\vcenter{\hbox{\begin{tikzpicture}[scale=0.5]
\draw[thick] (0,0) -- (0,1);
\draw[thick] (0.5,0) -- (0.5,1);
\draw[thick] (1,0) -- (1,1);
\node (dottt) at (1.75,0.5) {$\cdots$};
\draw[thick] (2.5,0) -- (2.5,1);
\draw[thick] (3.5,0) -- (3.5,0.5);
\draw[thick] (3.5,0.5) -- (3,1);
\draw[thick] (3.5,0.5) -- (3.25,1);
\draw[thick] (3.5,0.5) -- (3.75,1);
\draw[thick] (3.5,0.5) -- (4,1);
\node (ov) at (3.5,1.5)  {\footnotesize{$m\text{ legs}$}};
\node (ol) at (3.5,-0.75)  {\footnotesize{~~}};
\end{tikzpicture}}}
\bigg)\quad,
\end{flalign}
where the vertex acts on an element as
$\sum_{J_0,j_0,\dots,J_m,j_m}\,I_{j_0 j_1\cdots j_m}^{J_0 J_1\cdots J_m}\,
\big({Y^{J_0}_{j_0}}^\ast , G(\cdot)\big)~ {Y^{J_1}_{j_1}}^\ast\,
\cdots\,{Y^{J_m}_{j_m}}^\ast$, i.e.\ it turns a single vertical line
into $m$ legs.

\begin{ex}\label{ex:fuzzysphere2pt}
Let us set $m=3$ and compute the $2$-point function
\begin{flalign}\label{eqn:2ptfunction}
\widetilde{\Pi}\big(\varphi_1\,\varphi_2\big)
\,=\,\sum_{k=0}^\infty\, \Pi\,\big((\delta\,\Gamma)^k(\varphi_1\,\varphi_2)\big)
\end{flalign}
of $\Phi^4$-theory
to the lowest non-trivial order in the coupling constant. 
(Due to our conventions in \eqref{eqn:mplus1point},
the $4$-point interaction vertex has coupling constant $\lambda^2$.)
Using our graphical description, we compute
\begin{flalign}\label{eqn:deltaGamma1}
\delta\,\Gamma(\varphi_1\,\varphi_2)\,&=\,
-\hbar ~~\vcenter{\hbox{\begin{tikzpicture}[scale=0.5]
\draw[thick] (0,0) -- (0,1);
\draw[thick] (0.5,0) -- (0.5,1);
\draw[thick] (0,1) to[out=90,in=90] (0.5,1);
\end{tikzpicture}}}
~
-\frac{\lambda^2}{3!\,2}\,
\Big(\,
\vcenter{\hbox{\begin{tikzpicture}[scale=0.5]
\draw[thick] (-0.5,0) -- (-0.5,0.5);
\draw[thick] (-0.5,0.5) -- (0,1);
\draw[thick] (-0.5,0.5) -- (-0.5,1);
\draw[thick] (-0.5,0.5) -- (-1,1);
\draw[thick] (0.5,0) -- (0.5,1);
\end{tikzpicture}}}
~+~
\vcenter{\hbox{\begin{tikzpicture}[scale=0.5]
\draw[thick] (-1.5,0) -- (-1.5,1);
\draw[thick] (-0.5,0) -- (-0.5,0.5);
\draw[thick] (-0.5,0.5) -- (0,1);
\draw[thick] (-0.5,0.5) -- (-0.5,1);
\draw[thick] (-0.5,0.5) -- (-1,1);
\end{tikzpicture}}}
\,\Big)\quad.
\end{flalign}
The $2$-fold application of $\delta\,\Gamma$ is then given by
\begin{flalign}
\nn (\delta\,\Gamma)^2(\varphi_1\,\varphi_2)\,&=\,
\frac{\lambda^2\,\hbar}{3!\,4}\,\Big(\,
\vcenter{\hbox{\begin{tikzpicture}[scale=0.5]
\draw[thick] (-0.5,0) -- (-0.5,0.5);
\draw[thick] (-0.5,0.5) -- (0,1);
\draw[thick] (-0.5,0.5) -- (-0.5,1);
\draw[thick] (-0.5,0.5) -- (-1,1);
\draw[thick] (0.5,0) -- (0.5,1);
\draw[thick] (-1,1) to[out=90,in=90] (-0.5,1);
\end{tikzpicture}}}
~+~
\vcenter{\hbox{\begin{tikzpicture}[scale=0.5]
\draw[thick] (-0.5,0) -- (-0.5,0.5);
\draw[thick] (-0.5,0.5) -- (0,1);
\draw[thick] (-0.5,0.5) -- (-0.5,1);
\draw[thick] (-0.5,0.5) -- (-1,1);
\draw[thick] (0.5,0) -- (0.5,1);
\draw[thick] (-1,1) to[out=90,in=90] (0,1);
\end{tikzpicture}}}
~+~
\vcenter{\hbox{\begin{tikzpicture}[scale=0.5]
\draw[thick] (-0.5,0) -- (-0.5,0.5);
\draw[thick] (-0.5,0.5) -- (0,1);
\draw[thick] (-0.5,0.5) -- (-0.5,1);
\draw[thick] (-0.5,0.5) -- (-1,1);
\draw[thick] (0.5,0) -- (0.5,1);
\draw[thick] (-1,1) to[out=90,in=90] (0.5,1);
\end{tikzpicture}}}
~+~
\vcenter{\hbox{\begin{tikzpicture}[scale=0.5]
\draw[thick] (-0.5,0) -- (-0.5,0.5);
\draw[thick] (-0.5,0.5) -- (0,1);
\draw[thick] (-0.5,0.5) -- (-0.5,1);
\draw[thick] (-0.5,0.5) -- (-1,1);
\draw[thick] (0.5,0) -- (0.5,1);
\draw[thick] (-0.5,1) to[out=90,in=90] (0,1);
\end{tikzpicture}}}
~+~
\vcenter{\hbox{\begin{tikzpicture}[scale=0.5]
\draw[thick] (-0.5,0) -- (-0.5,0.5);
\draw[thick] (-0.5,0.5) -- (0,1);
\draw[thick] (-0.5,0.5) -- (-0.5,1);
\draw[thick] (-0.5,0.5) -- (-1,1);
\draw[thick] (0.5,0) -- (0.5,1);
\draw[thick] (-0.5,1) to[out=90,in=90] (0.5,1);
\end{tikzpicture}}}
~+~
\vcenter{\hbox{\begin{tikzpicture}[scale=0.5]
\draw[thick] (-0.5,0) -- (-0.5,0.5);
\draw[thick] (-0.5,0.5) -- (0,1);
\draw[thick] (-0.5,0.5) -- (-0.5,1);
\draw[thick] (-0.5,0.5) -- (-1,1);
\draw[thick] (0.5,0) -- (0.5,1);
\draw[thick] (0,1) to[out=90,in=90] (0.5,1);
\end{tikzpicture}}}\\
\nn &\quad \hspace{1cm}~+~
\vcenter{\hbox{\begin{tikzpicture}[scale=0.5]
\draw[thick] (-1.5,0) -- (-1.5,1);
\draw[thick] (-0.5,0) -- (-0.5,0.5);
\draw[thick] (-0.5,0.5) -- (0,1);
\draw[thick] (-0.5,0.5) -- (-0.5,1);
\draw[thick] (-0.5,0.5) -- (-1,1);
\draw[thick] (-1.5,1) to[out=90,in=90] (-1,1);
\end{tikzpicture}}}
~+~
\vcenter{\hbox{\begin{tikzpicture}[scale=0.5]
\draw[thick] (-1.5,0) -- (-1.5,1);
\draw[thick] (-0.5,0) -- (-0.5,0.5);
\draw[thick] (-0.5,0.5) -- (0,1);
\draw[thick] (-0.5,0.5) -- (-0.5,1);
\draw[thick] (-0.5,0.5) -- (-1,1);
\draw[thick] (-1.5,1) to[out=90,in=90] (-0.5,1);
\end{tikzpicture}}}
~+~
\vcenter{\hbox{\begin{tikzpicture}[scale=0.5]
\draw[thick] (-1.5,0) -- (-1.5,1);
\draw[thick] (-0.5,0) -- (-0.5,0.5);
\draw[thick] (-0.5,0.5) -- (0,1);
\draw[thick] (-0.5,0.5) -- (-0.5,1);
\draw[thick] (-0.5,0.5) -- (-1,1);
\draw[thick] (-1.5,1) to[out=90,in=90] (0,1);
\end{tikzpicture}}}
~+~
\vcenter{\hbox{\begin{tikzpicture}[scale=0.5]
\draw[thick] (-1.5,0) -- (-1.5,1);
\draw[thick] (-0.5,0) -- (-0.5,0.5);
\draw[thick] (-0.5,0.5) -- (0,1);
\draw[thick] (-0.5,0.5) -- (-0.5,1);
\draw[thick] (-0.5,0.5) -- (-1,1);
\draw[thick] (-1,1) to[out=90,in=90] (-0.5,1);
\end{tikzpicture}}}
~+~
\vcenter{\hbox{\begin{tikzpicture}[scale=0.5]
\draw[thick] (-1.5,0) -- (-1.5,1);
\draw[thick] (-0.5,0) -- (-0.5,0.5);
\draw[thick] (-0.5,0.5) -- (0,1);
\draw[thick] (-0.5,0.5) -- (-0.5,1);
\draw[thick] (-0.5,0.5) -- (-1,1);
\draw[thick] (-1,1) to[out=90,in=90] (0,1);
\end{tikzpicture}}}
~+~
\vcenter{\hbox{\begin{tikzpicture}[scale=0.5]
\draw[thick] (-1.5,0) -- (-1.5,1);
\draw[thick] (-0.5,0) -- (-0.5,0.5);
\draw[thick] (-0.5,0.5) -- (0,1);
\draw[thick] (-0.5,0.5) -- (-0.5,1);
\draw[thick] (-0.5,0.5) -- (-1,1);
\draw[thick] (-0.5,1) to[out=90,in=90] (-0,1);
\end{tikzpicture}}}
\,\Big)+\mathcal{O}(\lambda^4)
\\[4pt]
\,&=\,
\frac{\lambda^2\,\hbar}{8}\,\Big(\,
\vcenter{\hbox{\begin{tikzpicture}[scale=0.5]
\draw[thick] (-0.5,0) -- (-0.5,0.5);
\draw[thick] (-0.5,0.5) -- (0,1);
\draw[thick] (-0.5,0.5) -- (-0.5,1);
\draw[thick] (-0.5,0.5) -- (-1,1);
\draw[thick] (0.5,0) -- (0.5,1);
\draw[thick] (-1,1) to[out=90,in=90] (-0.5,1);
\end{tikzpicture}}}
~+~2~
\vcenter{\hbox{\begin{tikzpicture}[scale=0.5]
\draw[thick] (-0.5,0) -- (-0.5,0.5);
\draw[thick] (-0.5,0.5) -- (0,1);
\draw[thick] (-0.5,0.5) -- (-0.5,1);
\draw[thick] (-0.5,0.5) -- (-1,1);
\draw[thick] (0.5,0) -- (0.5,1);
\draw[thick] (0,1) to[out=90,in=90] (0.5,1);
\end{tikzpicture}}}
~+~
\vcenter{\hbox{\begin{tikzpicture}[scale=0.5]
\draw[thick] (-1.5,0) -- (-1.5,1);
\draw[thick] (-0.5,0) -- (-0.5,0.5);
\draw[thick] (-0.5,0.5) -- (0,1);
\draw[thick] (-0.5,0.5) -- (-0.5,1);
\draw[thick] (-0.5,0.5) -- (-1,1);
\draw[thick] (-0.5,1) to[out=90,in=90] (-0,1);
\end{tikzpicture}}}
\,\Big)~+~\mathcal{O}(\lambda^4)\quad, \label{eqn:deltaGamma2}
\end{flalign}
where the simplification in the second equality uses the symmetry
property of the interaction term \eqref{eqn:Isymmetry}.
The $3$-fold application of $\delta\,\Gamma$ 
is given by
\begin{flalign}
(\delta\,\Gamma)^3(\varphi_1\,\varphi_2)\, =\,
-\frac{\lambda^2\,\hbar^2}{2}~~\vcenter{\hbox{\begin{tikzpicture}[scale=0.5]
\draw[thick] (-0.5,0) -- (-0.5,0.5);
\draw[thick] (-0.5,0.5) -- (0,1);
\draw[thick] (-0.5,0.5) -- (-0.5,1);
\draw[thick] (-0.5,0.5) -- (-1,1);
\draw[thick] (0.5,0) -- (0.5,1);
\draw[thick] (-1,1) to[out=90,in=90] (-0.5,1);
\draw[thick] (0,1) to[out=90,in=90] (0.5,1);
\end{tikzpicture}}} ~+~ \mathcal{O}(\lambda^4) \quad. \label{eqn:deltaGamma3}
\end{flalign}

From this we can compute the $2$-point function \eqref{eqn:2ptfunction}
to leading order in the coupling constant as
\begin{flalign}
\widetilde{\Pi}(\varphi_1\,\varphi_2)\,&=\,
-\hbar ~~\vcenter{\hbox{\begin{tikzpicture}[scale=0.5]
\draw[thick] (0,0) -- (0,1);
\draw[thick] (0.5,0) -- (0.5,1);
\draw[thick] (0,1) to[out=90,in=90] (0.5,1);
\end{tikzpicture}}}
~-\frac{\lambda^2\,\hbar^2}{2}~~\vcenter{\hbox{\begin{tikzpicture}[scale=0.5]
\draw[thick] (-0.5,0) -- (-0.5,0.5);
\draw[thick] (-0.5,0.5) -- (0,1);
\draw[thick] (-0.5,0.5) -- (-0.5,1);
\draw[thick] (-0.5,0.5) -- (-1,1);
\draw[thick] (0.5,0) -- (0.5,1);
\draw[thick] (-1,1) to[out=90,in=90] (-0.5,1);
\draw[thick] (0,1) to[out=90,in=90] (0.5,1);
\end{tikzpicture}}} ~+~ \mathcal{O}(\lambda^4)\\[4pt]
\nn \,&=\, -\hbar ~\big(\varphi_1,G(\varphi_2)\big) \\
\nn & \quad \, \, - \frac{\lambda^2\,\hbar^2}2\!\!
\sum_{J_0,j_0,\dots,J_3,j_3}\!\! I_{j_0j_1j_2j_3}^{J_0J_1J_2J_3}\,
\big({Y^{J_0}_{j_0}}^\ast , G(\varphi_1)\big)\,
\big({Y^{J_1}_{j_1}}^\ast , G({Y^{J_2}_{j_2}}^\ast)\big)\,\big({Y^{J_3}_{j_3}}^\ast , G(\varphi_2)\big)
+\mathcal{O}(\lambda^4) \ .
\end{flalign}
Note that the $2$-point function at order $\lambda^2$ (and higher) receives both planar
and non-planar contributions, analogously to the computation of 
\cite{CMS01} by traditional perturbative techniques, 
even though this is not directly apparent in our graphical presentation.
The origin of these two kinds of contributions
lies in the (graded anti-)symmetrization 
of the higher $L_\infty$-algebra bracket \eqref{eqn:ellmscalar} which enters
the definition of the constants $I_{j_0j_1j_2j_3}^{J_0J_1J_2J_3} $ in \eqref{eqn:mplus1point}.
\end{ex}

\paragraph{Correlation functions for $\boldsymbol{m^2=0}$.}
Let us briefly comment on the correlation functions 
in the massless case $m^2=0$. The relevant cochain map $\pi= \frac{1}{N+1}\,\Tr$ 
in the massless strong deformation retract \eqref{eqn:masslessDefRet} is not
simply the zero map, but the normalized trace. Hence, in contrast to \eqref{eqn:Pimapscalar},
the extension of $\pi$ to symmetric algebras is given in the massless case by
\begin{flalign}\label{eqn:Pimapscalarmassless}
\Pi(\oone) \,=\, 1 \ \in \ \Sym\,\bbC~~,\quad \Pi(\varphi_1\cdots\varphi_n) \, =\,\pi(\varphi_1) \odot \cdots \odot\pi(\varphi_n) \ \in \ \Sym\,\bbC\quad,
\end{flalign}
for all $\varphi_1,\dots,\varphi_n\in E[1]^0$ in degree $0$,
where we use the symbol $\odot$ to denote the product
of the symmetric algebra $\Sym\,\bbC$ to distinguish
it from the multiplication of complex numbers. Each 
$\pi(\varphi_i)\in\Sym\,\bbC$ is regarded as a linear function
on the space of vacua $\ker (\Delta : A\to A)\cong \bbC$ via
\begin{flalign}
\pi(\varphi_i)\,:\, \ker (\Delta : A\to A) ~\longrightarrow~\bbC~,~~
\underline{\Phi} \,=\, \Phi_0\,\oone ~\longmapsto~\frac{1}{N+1}\,\Tr\big(\varphi_i \,\underline{\Phi}\big)
\,=\, \pi(\varphi_i)\,\Phi_0\quad,
\end{flalign}
where the product (denoted by juxtaposition) in the last step is the usual multiplication of complex numbers.
\sk

It is convenient to denote \eqref{eqn:Pimapscalarmassless} graphically 
by attaching vertices on top of the vertical lines
\begin{flalign}
\Pi\Big(\,\vcenter{\hbox{\begin{tikzpicture}[scale=0.5]
\draw[thick] (0,0) -- (0,1);
\draw[thick] (0.5,0) -- (0.5,1);
\draw[thick] (1,0) -- (1,1);
\node (dottt) at (1.75,0.5) {$\cdots$};
\draw[thick] (2.5,0) -- (2.5,1);
\draw[thick] (3,0) -- (3,1);
\end{tikzpicture}}}\,\Big)
~=~\vcenter{\hbox{\begin{tikzpicture}[scale=0.5]
\draw[thick] (0,0) -- (0,1); 
\draw[black,fill=black] (0,1) circle (.75ex);
\draw[thick] (0.5,0) -- (0.5,1);
\draw[black,fill=black] (0.5,1) circle (.75ex);
\draw[thick] (1,0) -- (1,1);
\draw[black,fill=black] (1,1) circle (.75ex);
\node (dottt) at (1.75,0.5) {$\cdots$};
\draw[thick] (2.5,0) -- (2.5,1);
\draw[black,fill=black] (2.5,1) circle (.75ex);
\draw[thick] (3,0) -- (3,1);
\draw[black,fill=black] (3,1) circle (.75ex);
\end{tikzpicture}}}\quad,
\end{flalign}
which depict empty slots that 
can be evaluated on classical vacua $\underline{\Phi}\in \ker (\Delta : A\to A)\cong \bbC$.
These purely classical contributions to the correlation functions
are completely analogous to those one would obtain
in traditional approaches to quantum field theory by expanding
the field operator $\widehat{\Phi} + \underline{\Phi}$ around a generic classical
solution $\underline{\Phi}$. 

\begin{ex}
For the $2$-point function of massless $\Phi^4$-theory, 
using \eqref{eqn:deltaGamma1}--\eqref{eqn:deltaGamma3} we obtain
\begin{flalign}
\nn \widetilde{\Pi}(\varphi_1\,\varphi_2)~&=~
\vcenter{\hbox{\begin{tikzpicture}[scale=0.5]
\draw[thick] (0,0) -- (0,1); 
\draw[black,fill=black] (0,1) circle (.75ex);
\draw[thick] (0.5,0) -- (0.5,1);
\draw[black,fill=black] (0.5,1) circle (.75ex);
\end{tikzpicture}}}
~ -\hbar ~~\vcenter{\hbox{\begin{tikzpicture}[scale=0.5]
\draw[thick] (0,0) -- (0,1);
\draw[thick] (0.5,0) -- (0.5,1);
\draw[thick] (0,1) to[out=90,in=90] (0.5,1);
\end{tikzpicture}}}
~
-\frac{\lambda^2}{3!\,2}~
\Big(\,
\vcenter{\hbox{\begin{tikzpicture}[scale=0.5]
\draw[thick] (-0.5,0) -- (-0.5,0.5);
\draw[thick] (-0.5,0.5) -- (0,1);
\draw[black,fill=black] (0,1) circle (.75ex);
\draw[thick] (-0.5,0.5) -- (-0.5,1);
\draw[black,fill=black] (-0.5,1) circle (.75ex);
\draw[thick] (-0.5,0.5) -- (-1,1);
\draw[black,fill=black] (-1,1) circle (.75ex);
\draw[thick] (0.5,0) -- (0.5,1);
\draw[black,fill=black] (0.5,1) circle (.75ex);
\end{tikzpicture}}}
~+~
\vcenter{\hbox{\begin{tikzpicture}[scale=0.5]
\draw[thick] (-1.5,0) -- (-1.5,1);
\draw[black,fill=black] (-1.5,1) circle (.75ex);
\draw[thick] (-0.5,0) -- (-0.5,0.5);
\draw[thick] (-0.5,0.5) -- (0,1);
\draw[black,fill=black] (0,1) circle (.75ex);
\draw[thick] (-0.5,0.5) -- (-0.5,1);
\draw[black,fill=black] (-0.5,1) circle (.75ex);
\draw[thick] (-0.5,0.5) -- (-1,1);
\draw[black,fill=black] (-1,1) circle (.75ex);
\end{tikzpicture}}}
\,\Big)\\
 &\quad \, ~ +\frac{\lambda^2\,\hbar}{8}\,\Big(\,
\vcenter{\hbox{\begin{tikzpicture}[scale=0.5]
\draw[thick] (-0.5,0) -- (-0.5,0.5);
\draw[thick] (-0.5,0.5) -- (0,1);
\draw[black,fill=black] (0,1) circle (.75ex);
\draw[thick] (-0.5,0.5) -- (-0.5,1);
\draw[thick] (-0.5,0.5) -- (-1,1);
\draw[thick] (0.5,0) -- (0.5,1);
\draw[black,fill=black] (0.5,1) circle (.75ex);
\draw[thick] (-1,1) to[out=90,in=90] (-0.5,1);
\end{tikzpicture}}}
~+~2~
\vcenter{\hbox{\begin{tikzpicture}[scale=0.5]
\draw[thick] (-0.5,0) -- (-0.5,0.5);
\draw[thick] (-0.5,0.5) -- (0,1);
\draw[thick] (-0.5,0.5) -- (-0.5,1);
\draw[black,fill=black] (-0.5,1) circle (.75ex);
\draw[thick] (-0.5,0.5) -- (-1,1);
\draw[black,fill=black] (-1,1) circle (.75ex);
\draw[thick] (0.5,0) -- (0.5,1);
\draw[thick] (0,1) to[out=90,in=90] (0.5,1);
\end{tikzpicture}}}
~+~
\vcenter{\hbox{\begin{tikzpicture}[scale=0.5]
\draw[thick] (-1.5,0) -- (-1.5,1);
\draw[black,fill=black] (-1.5,1) circle (.75ex);
\draw[thick] (-0.5,0) -- (-0.5,0.5);
\draw[thick] (-0.5,0.5) -- (0,1);
\draw[thick] (-0.5,0.5) -- (-0.5,1);
\draw[thick] (-0.5,0.5) -- (-1,1);
\draw[black,fill=black] (-1,1) circle (.75ex);
\draw[thick] (-0.5,1) to[out=90,in=90] (-0,1);
\end{tikzpicture}}}
\,\Big) ~ -\frac{\lambda^2\,\hbar^2}{2}~~\vcenter{\hbox{\begin{tikzpicture}[scale=0.5]
\draw[thick] (-0.5,0) -- (-0.5,0.5);
\draw[thick] (-0.5,0.5) -- (0,1);
\draw[thick] (-0.5,0.5) -- (-0.5,1);
\draw[thick] (-0.5,0.5) -- (-1,1);
\draw[thick] (0.5,0) -- (0.5,1);
\draw[thick] (-1,1) to[out=90,in=90] (-0.5,1);
\draw[thick] (0,1) to[out=90,in=90] (0.5,1);
\end{tikzpicture}}} ~+~ \mathcal{O}(\lambda^4) \quad,
\end{flalign}
as an element in $\Sym\,\bbC$.
\end{ex}

\subsection{\label{subsec:CS}Chern-Simons theory}
The fuzzy $2$-sphere has a well-known $3$-dimensional differential 
calculus which allows for the definition of a Chern-Simons term 
on $\bbS^2_N$, see e.g.~\cite{ARS00,GMS01}. Similarly to~\cite{GMS01}, 
we shall focus on the Abelian Chern-Simons theory on $\bbS^2_N$
which, due to the noncommutativity of the differential calculus on $\bbS_N^2$,
includes a ternary interaction term; the extension to non-Abelian 
Chern-Simons theory with matrix gauge algebra such as $\mathfrak{gl}(n)$ 
or $\mathfrak{u}(n)$ is straightforward, as in~\cite{ARS00}, and presents 
no essential novelties. This is the simplest 
example which serves as the prototype for the BV formalism applied 
to field theories with gauge symmetries. On $\bbS_N^2$ it can be 
regarded as a fuzzy version of a BF-type theory on the classical 
$2$-sphere $\bbS^2$~\cite{GMS01}.

\paragraph{Differential calculus on the fuzzy 2-sphere.} 
In order to set up Chern-Simons gauge theory within the framework outlined in Section 
\ref{sec:prelim}, we recall some basic facts about differential forms
on the fuzzy $2$-sphere. The usual $\su(2)$-equivariant differential calculus
on the fuzzy sphere algebra \eqref{eqn:sphereAlgebra} is given by
the Chevalley-Eilenberg dg-algebra
\begin{flalign}
\Omega^\bullet(A) \,:=\, \mathrm{CE}^\bullet(\su(2),A)\, =\, A\otimes 
\scalebox{1.2}{$\wedge$}^\bullet \su(2)^\ast\quad.
\end{flalign}
The dual of the Lie algebra basis $\{e_i\in \su(2)\}_{i=1,2,3}$ 
defines a basis $\{ \theta^i \in \Omega^1(A)\}_{i=1,2,3}$
for the $A$-module of $1$-forms, which generates the whole differential calculus $\Omega^\bullet(A)$.
This basis is central, i.e.\ $a \, \theta^i = \theta^i\,a$ for all $a\in A=\Omega^0(A)$,
and $\theta^i\wedge \theta^j = -\theta^j\wedge \theta^i$, for all $i,j=1,2,3$.
The de Rham differential is specified by
\begin{flalign}\label{eqn:deRhamdifferential}
\dd a \,=\, \frac{1}{\lambda_N}\, [X_i,a]\,\theta^i\quad,\qquad \dd \theta^i \,=\, -\frac{\ii}{2}\,\epsilon^{ijk}\,\theta^j\wedge \theta^k\quad,
\end{flalign}
for all $a\in A=\Omega^0(A)$ and $i=1,2,3$,
together with the graded Leibniz rule
\begin{flalign}
\dd(\omega\wedge \zeta) \,= \,
(\dd \omega)\wedge \zeta + (-1)^{p}\,\omega\wedge(\dd\zeta)\quad,
\end{flalign} 
for all $\omega\in \Omega^p(A)$ and $\zeta\in\Omega^\bullet(A)$.
Note that the differential calculus $\Omega^\bullet(A)$ on the fuzzy $2$-sphere
is $3$-dimensional, in contrast to the $2$-dimensional calculus on the commutative $2$-sphere $\bbS^2$.
Higher-dimensional (covariant) calculi are a common feature in noncommutative geometry
which arise in a broad range of examples, reaching from the fuzzy sphere to quantum groups.
\sk

On the top-degree forms $\Omega^3(A)$ we can define an integration map via
\begin{flalign}\label{eqn:integrationforms}
\int \,:\,\Omega^3(A)~\longrightarrow~\bbC~,~~\omega \,=\, a\,\theta^1\wedge 
\theta^2\wedge \theta^3 ~\longmapsto~\int\,\omega \,:=\,\frac{4\pi}{N+1}\, \Tr(a)\quad.
\end{flalign}
Using \eqref{eqn:deRhamdifferential} and the graded Leibniz rule, 
one easily checks that this integration map satisfies the Stokes theorem
\begin{flalign}
\int\,\dd\zeta \,=\, 0\quad,
\end{flalign}
for all $2$-forms $\zeta=\tfrac{1}{2}\,\zeta_{ij}\,\theta^i\wedge \theta^j \in \Omega^2(A)$.
\sk

We shall also need the Hodge operator $\ast \,:\, \Omega^\bullet(A)~\to~\Omega^{3-\bullet}(A)$ 
on the fuzzy sphere, which is defined on the $A$-module
basis of $\Omega^\bullet(A)$ by
\begin{flalign}
\nn \ast(\oone) \,:=\,\tfrac{1}{3!} \,\epsilon^{ijk}\,\theta^i\wedge \theta^j\wedge \theta^k~~,&\quad
\ast(\theta^i)\,:=\, \tfrac{1}{2!}\, \epsilon^{ijk}\,\theta^j\wedge \theta^k~~,\quad\\[4pt]
\ast(\theta^i\wedge\theta^j)\,:=\, \epsilon^{ijk}\,\theta^k~~,&\quad
\ast(\theta^i\wedge\theta^j\wedge \theta^k)\,:=\, \epsilon^{ijk}\,\oone\quad.
\end{flalign}
Note that $\ast\ast(\omega) = \omega$, for all $\omega \in\Omega^p(A)$.
From this we can define the codifferential
\begin{flalign}
\delta\,:=\, (-1)^p\,\ast\dd\,\ast \,:\, \Omega^p(A)~\longrightarrow~\Omega^{p-1}(A)\quad
\end{flalign}
and the Hodge-de Rham Laplacian
\begin{flalign}\label{eqn:HdRLaplacian}
\Delta\,:=\, -\big(\delta\,\dd +\dd\,\delta\big)\,:\, \Omega^p(A)~\longrightarrow~\Omega^p(A)\quad,
\end{flalign}
for all $p=0,1,2,3$. A quick calculation shows that on $0$-forms the 
Hodge-de Rham Laplacian coincides with the scalar Laplacian \eqref{eqn:scalarsphereLaplacian}.

\paragraph{Free BV theory.}
We can now describe 
the non-interacting part of Abelian Chern-Simons theory
on the fuzzy sphere as a free BV theory in the sense of 
Definition~\ref{def:linearBV}. 
\begin{defi}\label{def:CSsphere}
The free BV theory associated to Abelian Chern-Simons theory
is given by the cochain complex
\begin{flalign}
E\,=\,\Omega^\bullet(A)[1]\,=\,\Big(\xymatrix@C=1.5em{
\stackrel{(-1)}{\Omega^0(A)} \ar[r]^-{-\dd} & 
\stackrel{(0)}{\Omega^1(A)} \ar[r]^-{-\dd} &
\stackrel{(1)}{\Omega^2(A)} \ar[r]^-{-\dd} 
& \stackrel{(2)}{\Omega^3(A)}
}\Big)\quad,
\end{flalign}
i.e.\ $Q:=\dd$ is the de Rham differential, together with the pairing
\begin{flalign}
\langle\,\cdot\, , \,\cdot\,\rangle\,:\, E\otimes E~\longrightarrow~\bbC[-1]~,\quad
\alpha\otimes\beta~\longmapsto~(-1)^{\vert \alpha\vert}\,\int\,\alpha\wedge\beta\quad,
\end{flalign}
where $\vert\alpha\vert$ denotes the cohomological degree of $\alpha\in E$.
(Note that the latter differs from the de Rham degree as
$\vert \alpha\vert_{\mathrm{dR}} = \vert \alpha\vert +1$.)
\end{defi}

Following the general approach outlined in Section \ref{subsec:BV},
we can construct from this input a $P_0$-algebra
\begin{flalign}
\Obs^\cl \,:=\,\big(\Sym\,E[1] , Q ,\{\,\cdot\,,\,\cdot\,\}\big)
\end{flalign}
of classical observables of the non-interacting theory,
where we note that the complex
\begin{flalign}\label{eqn:CSsphereLinObs}
E[1] \,=\, \Omega^\bullet(A)[2] \,=\, \Big(\xymatrix@C=1.5em{
\stackrel{(-2)}{\Omega^0(A)} \ar[r]^-{\dd} & 
\stackrel{(-1)}{\Omega^1(A)} \ar[r]^-{\dd} &
\stackrel{(0)}{\Omega^2(A)} \ar[r]^-{\dd} 
& \stackrel{(1)}{\Omega^3(A)}
}\Big)
\end{flalign}
is concentrated in degrees $-2$, $-1$, $0$ and $1$.

\paragraph{Interactions.} Using the
cyclic $L_\infty$-algebra formalism from Section \ref{subsec:Linfty}, 
we will now introduce an interaction term for the free Chern-Simons theory
from Definition \ref{def:CSsphere}. The Abelian cyclic $L_\infty$-algebra
associated with the free theory is given by the cochain complex
\begin{flalign}
E[-1] \,=\,\Omega^\bullet(A) \,=\, \Big(\xymatrix@C=1.5em{
\stackrel{(0)}{\Omega^0(A)} \ar[r]^-{\dd} & 
\stackrel{(1)}{\Omega^1(A)} \ar[r]^-{\dd} &
\stackrel{(2)}{\Omega^2(A)} \ar[r]^-{\dd} 
& \stackrel{(3)}{\Omega^3(A)}
}\Big)
\end{flalign}
and the cyclic structure
\begin{flalign}
\cyc{\,\cdot\,}{\,\cdot\,} \,:\, E[-1]\otimes E[-1]~\longrightarrow~\bbC[-3]~,~~
\alpha\otimes\beta ~\longmapsto~\cyc{\alpha}{\beta}\,=\,\int\, \alpha\wedge\beta\quad.
\end{flalign}
This can be endowed with the compatible $2$-bracket
\begin{flalign}
\ell_2\,:\,\Omega^\bullet(A)\otimes \Omega^\bullet(A)~\longrightarrow~\Omega^\bullet(A)~,~~
\alpha\otimes\beta~\longmapsto~[\alpha,\beta]\,:=\,\alpha\wedge \beta - (-1)^{\vert \alpha\vert \,\vert\beta\vert}\,\beta\wedge \alpha
\end{flalign}
given by the graded commutator in the differential calculus $\Omega^\bullet(A)$.
Note that, in contrast to commutative Chern-Simons theory, the bracket 
$\ell_2$ is {\em not} zero because the differential calculus on 
the fuzzy sphere $\bbS^2_N$ is noncommutative.
\sk

In order to write down the contracted coordinate functions corresponding to 
this non-Abelian cyclic dg-Lie algebra, we pick a basis of $E[-1]$ 
that we denote by 
$c_{a} \in E[-1]^{0} = \Omega^{0}(A)$, 
$A_{b} \in E[-1]^{1} = \Omega^{1}(A)$, 
$A^{+}_{c} \in E[-1]^{2} = \Omega^{2}(A)$ and 
$c^{+}_{d} \in E[-1]^{3} = \Omega^{3}(A)$. 
The contracted coordinate functions then take the form
\begin{flalign}
\mathsf{a}\,=\,\sum_{a}\, c^{\ast}_{a} \otimes c_{a}  + \sum_{b}\, A^{\ast}_{b} \otimes A_{b} 
+ \sum_{c}\, A^{+\ast}_{c} \otimes A^{+}_{c} + \sum_{d}\, c^{+\ast}_{d} \otimes c^{+}_{d}\,\in\,
\big((\Sym\,E[1])\otimes E[-1]\big)^1\quad,
\end{flalign}
where the dual basis with respect to the cyclic structure $\cyc{\,\cdot\,}{\,\cdot\,}$ 
is denoted by $c^{\ast}_{a} \in E[1]^{1} = \Omega^{3}(A)$, 
$A^{\ast}_{b} \in E[1]^{0} = \Omega^{2}(A)$,
$A^{+\ast}_{c} \in E[1]^{-1} = \Omega^{1}(A)$ 
and $c^{+\ast}_{d} \in E[1]^{-2} = \Omega^{0}(A)$.
The Chern-Simons interaction term thus reads as
\begin{flalign}
\nn \lambda \,I\,&=\,\frac{\lambda}{3!} \,
\cyc{\mathsf{a}}{\ell^{\mathrm{ext}}_2(\mathsf{a},\mathsf{a})}_{\mathrm{ext}}\\[4pt]
\,&=\, \frac{\lambda}{3!}\, \sum_{b,b',b''}\, A^{\ast}_{b}\, A^{\ast}_{b'}\, A^{\ast}_{b''}~
\cyc{A_{b}}{[A_{b'},A_{b''}]}
\nn \,-\, \lambda\,\sum_{a,b,c}\, c^{\ast}_{a}\, A^{\ast}_{b}\, A^{+\ast}_{c}~\cyc{c_{a}}{[A_{b},A^{+}_{c}]}\\
&\qquad \,+\, \frac{\lambda}{2}\, \sum_{d,a,a'}\, c^{+\ast}_{d}\, c^{\ast}_{a}\, c^{\ast}_{a'}~
\cyc{c^{+}_{d}}{[c_{a},c_{a'}]}\
\in\ (\Sym\,E[1])^0\quad,\label{eqn:CSinteraction}
\end{flalign}
where we stress again that the products of the dual basis elements 
are given by the product of the symmetric algebra $\Sym\,E[1]$.

\paragraph{Strong deformation retract.} The cohomology of the complex
\eqref{eqn:CSsphereLinObs} can be computed explicitly by
using the Whitehead lemma, see e.g. \cite[Theorem 7.8.9]{Wei94}. For this, we recall
that $\Omega^\bullet(A) = \mathrm{CE}^\bullet(\su(2),A)$ is by definition
the Chevalley-Eilenberg cochain complex of $\su(2)$ with coefficients
in the fuzzy sphere algebra \eqref{eqn:sphereAlgebra}. As an 
$\su(2)$-representation, the latter decomposes as $A\cong \bigoplus_{J=0}^{N}\,(J)$, where
$(J)$ denotes the irreducible spin $J$ representation, leading to
\begin{flalign}
\Omega^\bullet(A) \,\cong\, \bigoplus_{J=0}^N\, \mathrm{CE}^\bullet(\su(2),(J)) \quad.
\end{flalign}
By the Whitehead lemma, the cohomology of $\mathrm{CE}^\bullet(\su(2),(J))$ is trivial
for all $J>0$. This allows us to compute that
\begin{flalign}
H^\bullet\big(\Omega^\bullet(A)\big) \,\cong\, H^\bullet\big(\mathrm{CE}^\bullet(\su(2),(0))\big)
\,\cong\, \bbC\oplus \bbC[-3]
\end{flalign}
is concentrated in differential form degrees $0$ and $3$. From this it follows that
the cohomology of the complex \eqref{eqn:CSsphereLinObs} is given by
\begin{flalign}
H^\bullet(E[1])\,=\,H^\bullet\big(\Omega^\bullet(A)[2]\big)\,\cong\, \bbC[2] \oplus \bbC[-1]\quad.
\end{flalign}

To set up a strong deformation retract as in \eqref{eqn:defretL},
let us first define the cochain maps $\iota$ and $\pi$ for the present example.
We define the cochain map
\begin{flalign}
\parbox{0.5cm}{\xymatrix{
H^\bullet(E[1])\ar[d]_-{\iota}\\
E[1]
}
}~~=~~~~ \left(\parbox{2cm}{\xymatrix@C=2em{
\ar[d]_-{\eta}\bbC \ar[r]^-{0}~&~ \ar[d]_-{0}0 \ar[r]^-{0}~&  \ar[d]_-{0} 0 \ar[r]^-{0}~&~ \ar[d]_-{\ast\eta}\bbC\\
\Omega^0(A) \ar[r]_-{\dd}~&~ \Omega^1(A) \ar[r]_-{\dd}~&~ \Omega^2(A)\ar[r]_-{\dd}~&~\Omega^3(A)
}}\right)
\end{flalign}
via the unit $\eta(1)=\oone\in \Omega^0(A)=A$ and its Hodge dual
$\ast\eta(1) = \ast(\oone)\in \Omega^3(A)$. The cochain map
\begin{flalign}
\parbox{0.5cm}{\xymatrix{
E[1]\ar[d]_-{\pi}\\
H^\bullet(E[1])
}
}~~=~~~~ \left(\parbox{2cm}{\xymatrix@C=2em{
\ar[d]_-{\frac{1}{4\pi}\,\int\ast}\Omega^0(A) \ar[r]^-{\dd}~&~ \ar[d]_-{0}\Omega^1(A) \ar[r]^-{\dd}~&~ \ar[d]_-{0}\Omega^2(A)\ar[r]^-{\dd}~&~\ar[d]_-{\frac{1}{4\pi}\,\int}\Omega^3(A)\\
\bbC \ar[r]_-{0}~&~ 0 \ar[r]_-{0}~&~   0 \ar[r]_-{0}~&~ \bbC\\
}}\right)
\end{flalign}
is given by integration of top-forms \eqref{eqn:integrationforms}, where the normalization factor
$\frac{1}{4\pi}$ is chosen so that $\pi \, \iota = \id_{H^\bullet(E[1])}$. 
\sk

The composite cochain map
$\iota\, \pi : E[1]\to E[1]$ defines a projector onto the harmonic
forms, which can be used to decompose
\begin{flalign}
E[1]\,\cong\, E[1]^\perp \oplus H^\bullet(E[1])\quad.
\end{flalign}
By the rank-nullity theorem of linear algebra, the Hodge-de Rham Laplacian
restricts to an isomorphism $\Delta^\perp : E[1]^\perp\to E[1]^\perp$
and we denote its inverse (i.e.\ the Green operator) by $G^\perp: E[1]^\perp\to E[1]^\perp$.
Using also the codifferential $\delta$, we define the cochain homotopy
\begin{flalign}
\gamma\,:=\,\delta \, G^\perp \, \big(\id_{E[1]}- \iota \, \pi\big)\ \in\ \hom(E[1],E[1])^{-1}\quad,
\end{flalign}
which yields the desired strong deformation retract
\begin{equation}\label{eqn:CSDefRet}
\begin{tikzcd}
\big(\bbC[2]\oplus\bbC[-1],0\big) \ar[r,shift right=1ex,swap,"\iota"] & \ar[l,shift right=1ex,swap,"\pi"] \big(\Omega^\bullet(A)[2],\dd\big)\qquad\ar[loop,out=25,in=-25,distance=35,"\gamma"]
\end{tikzcd}
\end{equation}
for Chern-Simons theory. The relevant properties from Definition \ref{def:defret}
can be checked via simple routine calculations. For example, one checks that
\begin{flalign}
\nn \partial(\gamma) &\,=\, \dd \, \gamma + \gamma\, \dd \,=\,  
\dd \,\delta \, G^\perp \, \big(\id_{E[1]}- \iota \, \pi\big) + 
\delta \, G^\perp \, \big(\id_{E[1]}- \iota \, \pi\big)\,\dd\\[4pt]
\nn &\,=\,\dd \,\delta \, G^\perp \, \big(\id_{E[1]}- \iota \, \pi\big) + 
\delta \,\dd\,  G^\perp \, \big(\id_{E[1]}- \iota \, \pi\big)\\[4pt]
&\,=\,-\Delta^\perp\, G^\perp \, \big(\id_{E[1]}- \iota \, \pi\big)
\,=\, \iota\,\pi - \id_{E[1]}\quad.
\end{flalign}
In the second line we used the property that the projector $\iota\, \pi$ commutes
with $\dd$ because it is a cochain map, and also that the Green operator
commutes with $\dd$ because $\dd\, \Delta^\perp = \Delta^\perp\,\dd$.
In the last line we used the definition of the Hodge-de Rham Laplacian
\eqref{eqn:HdRLaplacian} and the definition of $G^\perp$ as the inverse
of $\Delta^\perp$.

\paragraph{Correlation functions.} The correlation
functions of Chern-Simons theory can be computed 
by starting from the strong deformation retract \eqref{eqn:CSDefRet} 
and considering the deformation (cf.\ Theorem~\ref{theo:HPL}) 
\begin{equation}
\begin{tikzcd}
\big(\Sym\, \big(\bbC[2]\oplus\bbC[-1]\big) ,\widetilde{\delta}\ \big) \ar[rr,shift right=1ex,swap,"\widetilde{\Sym\,\iota}"] && \ar[ll,shift right=1ex,swap,"\widetilde{\Pi}:=\widetilde{\Sym\,\pi}"] \big(\Sym\,E[1],Q+\delta \big) \qquad \quad \ \ar[loop,out=25,in=-25,distance=40,"\widetilde{\Gamma}:=\widetilde{\Sym\,\gamma}"]
\end{tikzcd}
\end{equation}
of its extension to symmetric algebras. The perturbation $\delta$
is given by
\begin{flalign}\label{eqn:deltaCS}
\delta\,=\, \hbar\,\Delta_{\BV} + \{ \lambda\, I,\,\cdot\,\}\quad,
\end{flalign}
where $\lambda\,I$ is the Chern-Simons interaction term \eqref{eqn:CSinteraction}.
The perturbative expansion
\begin{flalign}
\widetilde{\Pi}(\varphi_1\cdots\varphi_n)\,=\,\sum_{k=0}^\infty \,
\Pi \big((\delta\,\Gamma)^k(\varphi_1\cdots\varphi_n)\big)
\end{flalign}
of the $n$-point correlation functions can then be computed 
by using the algebraic properties of $\Gamma = \Sym\,\gamma$ 
(cf.\ \eqref{eqn:symhomotopy}), the BV Laplacian $\Delta_{\BV}$ (cf.\ \eqref{eqn:BVexplicit})
and the graded derivation $\{\lambda \, I,\,\cdot\,\}$ (cf.~the $P_0$-algebra axioms
in Remark \ref{rem:P0algebraTrad}). 
These computations are very similar to those
in the case of scalar field theories (cf.\ Section \ref{subsec:ScalarSphere}), and
hence we will not spell out any explicit examples of correlation functions
for Chern-Simons theory.

%%%%%%%%%%%%%%%%%%%%%%%%%%%%%%%%%%%%%%%%%%%%%%%%%%%%%%%%%%%%%%%%%%%%%%%%
%%%%%%%%%%%%%%%%%%%%%%%%%%%%%%%%%%%%%%%%%%%%%%%%%%%%%%%%%%%%%%%%%%%%%%%%

\section{\label{sec:braidedBV}BV quantization of braided field theories}
The definitions and constructions in Section \ref{sec:prelim}
can be generalized in a rather straightforward way
to the case of theories with a {\em triangular} Hopf algebra
symmetry. In the following we will spell out the details.
The quasi-triangular case is considerably more complicated
because it obstructs the formulation of symmetry properties
and the Jacobi identity, and we will not treat this more general case 
in the present paper. Nevertheless, following the terminology 
of~\cite{DCGRS20,DCGRS21}, we use the adjective `braided' (in contrast to
the categorically more accurate `symmetric braided') 
to emphasize the role of a non-identity triangular 
$R$-matrix, in addition to equivariance, in our treatment below.
\sk

\subsection{\label{subsec:Hopf}Triangular Hopf algebras and their representations}
We start by recalling some basic concepts and terminology from the theory of Hopf algebras
that are required in this paper. More details can be found 
in e.g.\ \cite{Maj95,BM20}.
\begin{defi}
A {\em Hopf algebra} is an associative unital algebra $H$ over $\bbK$
together with two algebra homomorphisms $\Delta : H\to H\otimes H$ ({\em coproduct}) and
$\epsilon : H\to \bbK$ ({\em counit}), as well as an algebra antihomomorphism $S : H\to H$ ({\em antipode})
satisfying
\begin{subequations}\label{eqn:Hopfaxioms}
\begin{flalign}
\label{eqn:coassociativity}(\Delta\otimes\id_H)\,\Delta\,&=\, (\id_H\otimes \Delta)\,\Delta\quad,\\[4pt]
(\epsilon\otimes \id_H) \, \Delta \,&=\, \id_H\,=\, (\id_H\otimes \epsilon)\,\Delta\quad,\\[4pt]
\mu\,(S\otimes \id_H) \,\Delta\,&=\, \eta\,\epsilon \,=\,\mu\,(\id_H\otimes S)\,\Delta\quad,
\end{flalign}
\end{subequations}
where $\mu : H\otimes H\to H$ denotes the product and $\eta : \bbK\to H$
denotes the unit of the algebra $H$.
\end{defi}
\begin{rem}
We shall often use the Sweedler notation
\begin{subequations}
\begin{flalign}
\Delta(h) \,=\, h_{\und{1}}\otimes h_{\und{2}}\qquad\text{(summation understood)}
\end{flalign}
for the coproduct of $h\in H$, and more generally
\begin{flalign}
\Delta^n(h) \,=\, h_{\und{1}}\otimes\cdots\otimes h_{\und{n+1}}\qquad\text{(summation understood)}
\end{flalign}
\end{subequations}
for the iterated applications of the coproduct. Note that, due to 
coassociativity \eqref{eqn:coassociativity}, it makes sense
to write $\Delta^2 = (\Delta\otimes\id_H)\,\Delta = (\id_H\otimes \Delta)\,\Delta$
for the two-fold application of the coproduct, and similarly
$\Delta^n$ for the $n$-fold applications.
In Sweedler notation, the second and third properties in \eqref{eqn:Hopfaxioms}
read as
\begin{subequations}
\begin{flalign}
\epsilon(h_{\und{1}})\,h_{\und{2}} &\,=\, h \,=\, h_{\und{1}}\,\epsilon(h_{\und{2}})\quad,\\[4pt]
S(h_{\und{1}})\,h_{\und{2}} &\,=\,\epsilon(h)\,\oone\,=\, h_{\und{1}}\,S(h_{\und{2}})\quad, 
\end{flalign}
\end{subequations}
for all $h\in H$, where $\oone=\eta(1)$.
\end{rem}

Given any Hopf algebra $H$, we denote by ${}_H\Mod$ the category of left modules
over its underlying associative unital algebra. An object in 
${}_H\Mod$ is a vector space $V$ together with a linear map
$\triangleright : H\otimes V\to V\,,~h\otimes v\mapsto h\triangleright v$ 
({\em left action}) satisfying
\begin{flalign}
(h\,h^\prime)\ra v\,=\, h\ra(h^\prime\ra v)\quad,\qquad \oone\ra v\,=\,v\quad,
\end{flalign}
for all $h,h^\prime\in H$ and $v\in V$. The morphisms in ${}_H\Mod$
are $H$-equivariant linear maps, i.e.\ linear maps $f : V\to W$
satisfying $f(h\ra v) = h\ra f(v)$, for all $h\in H$ and $v\in V$.
\sk

Using the coproduct and the counit of $H$, one defines a monoidal structure
on the category ${}_H\Mod$. The monoidal product $V\otimes W$
of two objects $V,W\in  {}_H\Mod$ is given by the tensor product of
the underlying vector spaces together with the left tensor product action
\begin{flalign}\label{eqn:tensorleftaction}
h\ra(v\otimes w)\,:=\,(h_{\und{1}}\ra v) \otimes (h_{\und{2}}\ra w) \quad,
\end{flalign}
for all $h\in H$, $v\in V$ and $w\in W$. The monoidal unit is given by
endowing the one-dimensional vector space $\bbK$ with the trivial
left action $h\ra c = \epsilon(h)\,c$, for all $h\in H$ and $c\in \bbK$.
\sk

Using the antipode of $H$, one observes that this monoidal structure
is closed. The internal hom between two objects $V,W\in  {}_H\Mod$
is the vector space $\hom(V,W):=
\Hom_\bbK(V,W)$ of all (not necessarily $H$-equivariant) linear maps from $V$ to $W$
together with the left adjoint action
\begin{flalign}\label{eqn:homleftaction}
h\ra f\,:=\, (h_{\und{1}}\ra \cdot)\circ f\circ (S(h_{\und{2}})\ra \cdot)\quad,
\end{flalign}
for all $h\in H$ and all linear maps $f:V\to W$. The $H$-invariants
of $\hom(V,W)$ are precisely the morphisms from $V$ to $W$ in the category ${}_H\Mod$.
\sk

To further obtain a closed {\em symmetric} monoidal structure on
the category ${}_H\Mod$, we require an additional datum
on the Hopf algebra $H$. 
In the following definition, we denote by
$\Delta^{\mathrm{op}}(h) = h_{\und{2}}\otimes h_{\und{1}}$
the opposite coproduct and, for an element
$R = R^\alpha\otimes R_{\alpha}\in H\otimes H$ (summation over $\alpha$ understood),
we write 
\begin{subequations}
\begin{flalign}
R_{21}\,:=\, R_\alpha\otimes R^\alpha
\end{flalign} 
for the flipped element in $H\otimes H$ and
\begin{flalign}
R_{12}\,:=\, R^\alpha\otimes R_{\alpha}\otimes \oone\quad ,\qquad
R_{13}\,:=\, R^\alpha\otimes \oone\otimes R_{\alpha}\quad ,\qquad
R_{23}\,:=\, \oone\otimes R^\alpha\otimes R_{\alpha}\quad
\end{flalign}
\end{subequations}
for the associated elements in $H\otimes H\otimes H$.
\begin{defi}\label{def:Rmatrix}
A {\em quasi-triangular structure} for a Hopf algebra $H$
is an invertible element $R \in H\otimes H$ ({\em universal $R$-matrix})
satisfying
\begin{subequations}\label{eqn:Rmatrixproperties}
\begin{flalign}
\Delta^{\mathrm{op}}(h) &\,=\, R\,\Delta(h)\,R^{-1}\quad,\\[4pt]
(\id_H\otimes \Delta)(R)&\,=\, R_{13}\,R_{12}\quad,\\[4pt]
(\Delta\otimes\id_H)(R)&\,=\,R_{13} \, R_{23}\quad,
\end{flalign}
\end{subequations}
for all $h\in H$. A {\em triangular structure} 
is a quasi-triangular structure $R \in H\otimes H$
which additionally satisfies $R_{21} = R^{-1}$.
\end{defi}

Suppose now that $R=R^\alpha\otimes R_\alpha\in H\otimes H$
is a quasi-triangular structure for $H$. Then we can define
a braiding for the closed monoidal category
${}_H\Mod$ by setting
\begin{flalign}\label{eqn:tauR}
\tau_R \,:\, V\otimes W~\longrightarrow~ W\otimes V~,~~v\otimes w~\longmapsto~ (R_{\alpha}\ra w)\otimes (R^\alpha\ra v)\quad,
\end{flalign}
for every pair of objects $V,W\in{}_H\Mod$. In the case where
$R$ is triangular, this braiding is symmetric,
i.e.\ ${}_H\Mod$ is a closed symmetric monoidal category.
\begin{ex}\label{ex:HopfAlgebra}
Every finite group $G$ has an associated {\em group Hopf algebra}
$\bbK[G]$. This is the free vector space
spanned by the group elements $g\in G$,
i.e.\ every element $h\in \bbK[G]$ can be written uniquely
as $h=\sum_{g\in G}\, h_{g}\, g$ for some $h_g\in \bbK$.
The product on $\bbK[G]$ is given by the bilinear extension
of the group operation on $G$ and the unit element
is the basis vector associated with the identity element $e\in G$. The coproduct, counit
and antipode are defined by
\begin{flalign}
\Delta(g) \,=\, g\otimes g\quad,\qquad \epsilon(g)\,=\,1 \quad,\qquad S(g) \,=\, g^{-1}\quad,
\end{flalign}
for all $g\in G$, and by linear extension. 
\sk

Depending on the group $G$, 
the Hopf algebra $\bbK[G]$ may admit various (quasi-)triangular structures, 
see below for a concrete example.
The trivial choice, which exists for any group $G$, 
is the element $R_{\mathrm{triv}} = e\otimes e\in \bbK[G]\otimes\bbK[G]$.
It is easy to check that the corresponding closed symmetric 
monoidal category ${}_{\bbK[G]}\Mod$ of $\bbK[G]$-modules 
is the usual closed symmetric monoidal category $\mathrm{Rep}_\bbK(G)$ of 
$\bbK$-linear representations of $G$.
\sk

The following class of explicit examples of (non-trivial) triangular Hopf algebras
features in our study of braided field theories on the fuzzy torus in Section \ref{sec:fuzzytorus}.
Set the ground field to $\bbK=\bbC$ and let $N,n\in \bbZ_{\geq 1}$ be two positive integers. Let us write
\begin{flalign}
\bbZ_N^n\,:=\, \underbrace{\bbZ_N\times\cdots\times\bbZ_N}_{\text{$n$-times}}
\end{flalign}
for the $n$-fold product of the cyclic group $\bbZ_N$ of order $N$.
We denote its elements by
\begin{flalign}
\und{k} \,:=\, (k_1,\dots,k_n)\ \in\ \bbZ_N^n\quad,
\end{flalign}
where each entry $k_i$ will be represented by an integer modulo $N$, and the group operation is
given by addition modulo $N$. For any $N$-th root of unity $q\in \bbC$
and any $n\times n$-matrix $\Theta\in\mathrm{Mat}_{n}(\bbZ)$ 
with integer entries, the element
\begin{flalign}
R \,:=\, \frac{1}{N^n}\,\sum_{\und{s},\und{t}\in\bbZ_N^n}\, q^{\und{s}\Theta\und{t}}~\und{s}\otimes\und{t}
\,:=\, \frac{1}{N^n}\,\sum_{\und{s},\und{t}\in\bbZ_N^n}\, q^{\sum_{i,j=1}^n\,\Theta^{ij}\,s_i\,t_j}~\und{s}\otimes\und{t}\ \in\ 
\bbC[\bbZ_N^n]\otimes\bbC[\bbZ_N^n]
\end{flalign}
defines a quasi-triangular structure for the group Hopf algebra $\bbC[\bbZ_N^n]$.
The relevant properties from Definition \ref{def:Rmatrix} can be easily checked by
using the standard identity
\begin{flalign}\label{eqn:Deltanormalization}
\sum_{\und{t}\in\bbZ_N^n}\, q^{\und{s}\Theta\und{t}}\,=\, N^n\, \delta_{\und{s},\und{0}}\, =\, 
\sum_{\und{t}\in\bbZ_N^n}\, q^{\und{t}\Theta\und{s}} \quad,
\end{flalign}
where $\delta_{\und{s},\und{0}}$ denotes the Kronecker delta-symbol
and $\und{0} := (0,\dots,0)\in \bbZ_N^n$ is the identity element.
In the case where the matrix $\Theta$ is antisymmetric,
$R$ is further a triangular structure for $\bbC[\bbZ_N^n]$.
\end{ex}

\subsection{\label{subsec:braidedBV}Finite-dimensional braided BV formalism}
Let $H$ be a Hopf algebra with triangular structure $R$.
We saw in Section~\ref{subsec:Hopf} that the category
${}_H\Mod$ of left $H$-modules is a closed symmetric monoidal category.
Because ${}_H\Mod$ is an Abelian category,
we can study cochain complexes in ${}_H\Mod$ via 
 standard techniques from homological algebra.
Let us denote by
\begin{flalign}
{}_H\Ch\,:=\, \Ch\big({}_H\Mod\big)
\end{flalign}
the category of cochain complexes of left $H$-modules.
An object in this category
is a $\bbZ$-graded left $H$-module $V$ together with
an $H$-equivariant differential $\dd$ of degree $+1$, 
i.e.~$\dd(h\ra v) = h\ra (\dd v)$, for all $h\in H$ and $v\in V$.
The morphisms are the $H$-equivariant cochain maps.
Analogously to the category $\Ch_\bbK$ that we discussed in
Section~\ref{subsec:cochain}, ${}_H\Ch$ is a closed symmetric monoidal category.
The monoidal product is given by endowing
\eqref{eqn:tensorproductChK} with the left tensor product
$H$-action \eqref{eqn:tensorleftaction}, the monoidal
unit is $\bbK$ endowed with the trivial left $H$-action
and the internal hom is given by endowing \eqref{eqn:mappingChK}
with the left adjoint $H$-action \eqref{eqn:homleftaction}.
The symmetric braiding is given by combining \eqref{eqn:braidingChK}
with \eqref{eqn:tauR}. Explicitly,
\begin{flalign}\label{eqn:braidingHCh}
\tau_R\,:\, V\otimes W~\longrightarrow~ W\otimes V
~,~~v\otimes w~\longmapsto~ (-1)^{\vert v\vert\,\vert w\vert}\,(R_{\alpha}\ra w)\otimes (R^\alpha\ra v)\quad,
\end{flalign}
for all $V,W\in{}_H\Ch$, involves both the Koszul signs 
and the $R$-matrix $R= R^\alpha\otimes R_\alpha \in H\otimes H$.
\sk

The generalization of Definition \ref{def:linearBV} to the present
case then reads as follows.
\begin{defi}\label{def:HeqvlinearBV}
A {\em free braided BV theory} is an object $E=(E,-Q)\in{}_H\Ch$
together with an ${}_H\Ch$-morphism $\langle\,\cdot\,,\,\cdot\,\rangle : E\otimes E\to\bbK[-1]$ 
that is non-degenerate and antisymmetric, 
i.e.\ $\langle\,\cdot\,, \,\cdot\,\rangle =- \langle\,\cdot\,,\,\cdot\,\rangle\circ \tau_R $,
or explicitly 
\begin{flalign}
\langle \varphi,\psi\rangle \,=\,-(-1)^{\vert \varphi\vert\,\vert\psi\vert}~\langle R_\alpha\ra \psi,R^\alpha\ra \varphi\rangle\quad,
\end{flalign}
for all $\varphi,\psi\in E$.
\end{defi}

Before we can generalize Definition \ref{def:P0algebraTrad},
we have to introduce a concept of \emph{braided commutative dg-algebra}.
Because ${}_H\Ch$ is a (closed) symmetric monoidal category,
there is an associated category $\CAlg({}_H\Ch)$ of commutative
algebras in ${}_H\Ch$. An object in this category
is a triple $(A,\mu,\eta)$ consisting of an object 
$A\in{}_H\Ch$ together with two ${}_H\Ch$-morphisms
$\mu: A\otimes A\to A$ and $\eta: \bbK\to A$ satisfying the associativity
and unitality axioms. The $H$-equivariance 
of the product $\mu$ and unit $\eta$ means explicitly that
\begin{flalign}
h\ra(a\,b) \,=\, (h_{\und{1}}\ra a)~(h_{\und{2}}\ra b)\quad,\qquad
h\ra \oone \,=\, \epsilon(h)\,\oone\quad,
\end{flalign}
for all $h\in H$ and $a,b\in A$.
Commutativity in this case means
that $\mu\circ\tau_R = \mu$, or explicitly
\begin{flalign}
a\,b \,=\, (-1)^{\vert a\vert\,\vert b\vert}~(R_\alpha\ra b)\,(R^\alpha\ra a)\quad,
\end{flalign}
for all $a,b\in A$. The main example for us is the 
{\em braided symmetric algebra} $\Sym_R V\in\CAlg({}_H\Ch)$
associated with an object $V\in{}_H\Ch$. In analogy to the usual
case, this $H$-equivariant dg-algebra is generated by
all $v\in V$, modulo the commutation relations involving the $R$-matrix
\begin{flalign}\label{eqn:braidedsymmetricalgebra}
v\,v^\prime \,=\, (-1)^{\vert v\vert\,\vert v^\prime\vert}~(R_\alpha\ra v^\prime)\,(R^\alpha\ra v)\quad,
\end{flalign}
for all $v,v^\prime\in V$. 
\sk

We are now ready to
generalize Definition \ref{def:P0algebraTrad}, which we shall write
in the more explicit format of Remark \ref{rem:P0algebraTrad}.
\begin{defi}\label{def:HeqvP0algebra}
A {\em braided $P_0$-algebra} is a braided
commutative dg-algebra $A\in  \CAlg({}_H\Ch)$ together with
an ${}_H\Ch$-morphism $\{\,\cdot\, , \,\cdot\,\} : A\otimes A\to A[1]$
satisfying the following axioms:
\begin{itemize}
\item[(i)] Compatibility with the differential: For all $a,b\in A$,
\begin{flalign}
-\dd\{a,b\}\,=\,\{\dd a, b\} + (-1)^{\vert a\vert } \,\{a,\dd b\}\quad.
\end{flalign}

\item[(ii)] Braided symmetry: For all $a,b\in A$,
\begin{flalign}
\{ a, b \} \, =\,  (-1)^{\vert a\vert\,\vert b\vert}\, \{R_\alpha\ra b,R^\alpha\ra a\}\quad.
\end{flalign} 

\item[(iii)] Braided Jacobi identity: For all $a,b,c\in A$,
\begin{flalign}
\nn 0\,&=\,(-1)^{\vert a\vert \,\vert c\vert + \vert b\vert + \vert c\vert }~\{a,\{b,c\}\} \\
\nn & \quad \ +(-1)^{\vert b\vert \,\vert a\vert  + \vert c\vert + \vert a\vert }~ \{R_\alpha \ra b,\{R_\beta\ra c,R^\beta\, R^\alpha\ra a\}\} \\
& \quad \ +(-1)^{\vert c\vert \,\vert b\vert + \vert a\vert + \vert b\vert }~ \{R_\beta\, R_{\alpha}\ra c,\{R^\beta\ra a,R^\alpha\ra b\}\}\quad.
\end{flalign}

\item[(iv)] Braided derivation property: For all $a,b,c\in A$,
\begin{flalign}
\{a,b\,c\} \,=\, \{a,b\} \,c + (-1)^{\vert b\vert \,(\vert a\vert +1)}\, (R_\alpha\ra b)~\{R^\alpha\ra a,c\}\quad.
\end{flalign}
\end{itemize}
\end{defi}

To a free braided BV theory 
$(E,-Q,\langle\,\cdot\, , \,\cdot\,\rangle)$ one can assign the 
braided $P_0$-algebra
\begin{flalign}\label{eqn:HeqvObscl}
\Obs^\cl \,:=\, \big(\Sym_R E[1],Q,\{\,\cdot\, , \,\cdot\,\}\big)
\end{flalign}
consisting of the braided symmetric algebra of the dual $E^\ast \cong E[1]$
together with the bracket defined by
\begin{flalign}
\{\varphi,\psi\}\,=\,(\varphi,\psi) \,\oone \quad,
\end{flalign}
for all $\varphi,\psi\in E[1]$, and the properties (ii) and (iv) of Definition \ref{def:HeqvP0algebra}.
The shifted pairing $(\,\cdot\, ,\,\cdot\,) : E[1]\otimes E[1]\to\bbK[1]$ is defined
by $\langle\,\cdot\, ,\cdot\, \rangle$ in complete analogy to \eqref{eqn:dualpairing}.
\sk

As in Section \ref{subsec:BV}, interactions and quantization
are both described by certain types of deformations of the differential
$Q$ in \eqref{eqn:HeqvObscl}. In order to obtain a deformed
cochain complex of left $H$-modules, i.e.\ an object in ${}_H\Ch$,
one should consider {\em $H$-invariant} deformations. 
Let us discuss the different types of deformations in detail.
\sk

To obtain an interacting braided classical BV theory,
we pick a $0$-cochain $I\in (\Sym_R E[1])^0$ that is $H$-invariant,
i.e.\ $h\ra I =\epsilon(h)\,I$ for all $h\in H$, and satisfies
the classical master equation
\begin{flalign}\label{eqn:masterbraidedcl}
Q(\lambda\,I) + \tfrac{1}{2}\,\{\lambda\,I,\lambda\,I\}\,=\,0\quad,
\end{flalign}
where $\lambda$ is an $H$-invariant formal parameter (coupling constant). Then
\begin{flalign}\label{eqn:Obsclintbraided}
\Obs^{\cl,\mathrm{int}}\,:=\, \big(\Sym_R E[1], Q^{\mathrm{int}},\{\,\cdot\, ,\,\cdot\,\}\big)\quad\text{with}\quad
Q^{\mathrm{int}}\,:=\, Q + \{\lambda\,I,\,\cdot\,\}
\end{flalign}
defines a braided $P_0$-algebra that is interpreted as the classical
observables for the interacting braided BV theory with interaction term $I$.
\sk

For quantization, we note that the definition of
the BV Laplacian in \eqref{eqn:BVLaplacian} applies
to our braided case as well and thereby
defines an ${}_H\Ch$-morphism $\Delta_{\BV}: \Sym_R E[1]\to (\Sym_R E[1])[1]$
that squares to $0$. However,
the explicit formula \eqref{eqn:BVexplicit} for the 
ordinary BV Laplacian is modified in the braided
case by suitable actions of the $R$-matrix.
Explicitly, one finds that
\begin{flalign}
\label{eqn:braidedBVLaplacian}\Delta_{\BV}\big(\varphi_1\cdots \varphi_n \big) \,=\, \sum_{i<j}\, & (-1)^{\sum_{k=1}^{i-1}\,\vert \varphi_k\vert + \vert\varphi_j\vert\,  \sum_{k=i+1}^{j-1}\, \vert \varphi_k\vert}~\big( \varphi_i,R_{\alpha_{i+1}}\cdots R_{\alpha_{j-1}}\ra \varphi_j\big)~\\ \nonumber
& \times\,\varphi_1\cdots \varphi_{i-1}\,\widehat{\varphi}_i\,(R^{\alpha_{i+1}}\ra \varphi_{i+1})\cdots 
(R^{\alpha_{j-1}}\ra\varphi_{j-1})\,\widehat{\varphi}_j\,\varphi_{j+1}\cdots\varphi_n\quad,
\end{flalign}
for all $\varphi_1,\dots,\varphi_n\in E[1]$ with $n\geq 2$. Then
\begin{flalign}
\Obs^{\hbar}\,:=\,\big(\Sym_R E[1],Q^{\hbar}\big)\quad\text{with}\quad
Q^{\hbar}\,:=\, Q + \hbar\,\Delta_{\BV}
\end{flalign}
defines a braided $E_0$-algebra that is interpreted as
the quantum observables for the non-interacting braided BV theory.
\sk

Finally, to obtain an interacting braided quantum BV theory,
we consider a $0$-cochain $I\in (\Sym_R E[1])^0$ that is $H$-invariant and satisfies
the quantum master equation
\begin{flalign}\label{eqn:masterbraidedqu}
Q(\lambda\,I) + \hbar\,\Delta_{\BV}(\lambda\,I) + \tfrac{1}{2}\,\{\lambda\,I,\lambda\,I\}\,=\,0\quad.
\end{flalign}
Then
\begin{flalign}
\Obs^{\hbar,\mathrm{int}}\,:=\,\big(\Sym_R E[1],Q^{\hbar,\mathrm{int}}\big)\quad\text{with}\quad
Q^{\hbar,\mathrm{int}}\,:=\, Q + \hbar\,\Delta_{\BV} + \{\lambda\,I,\,\cdot\,\}
\end{flalign}
defines a braided $E_0$-algebra that is interpreted as
the quantum observables for the interacting braided BV theory with interaction term $I$.

\subsection{\label{subsec:braidedLinfty}Braided $L_\infty$-algebras and their cyclic versions}
The construction, reviewed in Section~\ref{subsec:Linfty}, 
of interaction terms satisfying the classical (and also the quantum) 
master equation from cyclic $L_\infty$-algebra structures generalizes
to the case of braided BV theories by using the concept of a braided $L_\infty$-algebra
introduced in~\cite{DCGRS20,DCGRS21}. 
\begin{defi}
A {\em braided $L_\infty$-algebra} is a $\bbZ$-graded left $H$-module $L$
together with a collection $\{ \ell_n : L^{\otimes n}\to L\}_{n\in \bbZ_{\geq 1}}$
of $H$-equivariant graded braided antisymmetric linear maps of degree $\vert \ell_n\vert = 2-n$
that satisfy the braided homotopy Jacobi identities
\begin{flalign}\label{eqn:braidedhomotopyJacobi}
\sum_{k=0}^{n-1}\, (-1)^{k}~
\ell_{k+1}\circ \big( \ell_{n-k}\otimes \id_{L^{\otimes k}}\big)\,\circ\, \sum_{\sigma\in\mathrm{Sh}(n-k;k)}\,
\mathrm{sgn}(\sigma) \,\tau_{R}^\sigma\,=\,0\quad,
\end{flalign}
for all $n\geq 1$, where $\tau_{R}^\sigma : L^{\otimes n}\to L^{\otimes n}$
denotes the action of the permutation $\sigma$ via the symmetric 
braiding $\tau_R$ on the category of graded left $H$-modules.
\end{defi}
\begin{rem}
Let us spell this out in a bit more detail. The graded braided antisymmetry property
of $\ell_n : L^{\otimes n}\to L$ means that
\begin{flalign}
\ell_n\big(v_1,\dots,v_n\big) \,=\, -(-1)^{\vert v_i\vert \,\vert v_{i+1}\vert}~\ell_n\big(v_1,\dots,v_{i-1},R_\alpha\ra v_{i+1},R^\alpha\ra v_i, v_{i+2},\dots, v_n\big)\quad,
\end{flalign}
for all $i=1,\dots, n-1$ and all homogeneous elements $v_1,\dots,v_n\in L$.
The permutation action $\tau_{R}^\sigma : L^{\otimes n}\to L^{\otimes n}$ 
in \eqref{eqn:braidedhomotopyJacobi} includes, in addition to the usual Koszul signs,
appropriate actions of the $R$-matrix as in \eqref{eqn:braidingHCh}. Similarly to 
Remark~\ref{rem:Linfty}, every braided $L_\infty$-algebra has an underlying cochain 
complex $(L,\dd_L:=\ell_1)\in {}_H\Ch$, and $\ell_2:L\otimes L\to L$ is an ${}_H\Ch$-morphism. 
When the only non-vanishing bracket is $\ell_2$, a braided $L_\infty$-algebra 
is an example of a braided Lie algebra in the sense of~\cite{Maj94}.
\end{rem}
\begin{defi}
A {\em cyclic braided $L_\infty$-algebra} is a braided $L_\infty$-algebra $(L,\{\ell_n\})$ 
together with a non-degenerate braided symmetric ${}_H\Ch$-morphism
$\cyc{\,\cdot\,}{\,\cdot\,} : L\otimes L \to\bbK[-3]$ that satisfies
the cyclicity condition
\begin{flalign}
\cyc{v_0}{\ell_n(v_1,\dots,v_n)}\,=\,
(-1)^{n\,(\vert v_0\vert + 1) }
\, \cyc{R_{\alpha_0}\cdots R_{\alpha_{n-1}} \ra v_n}{\ell_n(R^{\alpha_0}\ra v_0,
\dots,R^{\alpha_{n-1}}\ra v_{n-1})}\quad,
\end{flalign}
for all $n\geq 1$ and all homogeneous elements $v_0,v_1,\dots,v_n\in L$.
\end{defi}

Similarly to the ordinary case from Section \ref{subsec:Linfty}, 
every free braided BV theory $(E,-Q,\langle\,\cdot\, , \,\cdot\,\rangle)$
as in Definition \ref{def:HeqvlinearBV} defines an
Abelian cyclic braided  $L_\infty$-algebra given by $E[-1]$, $\ell_1 := \dd_{E[-1]} = Q$
and cyclic structure
\begin{flalign}
\xymatrix@C3.5em{
\cyc{\,\cdot\,}{\,\cdot\,}\,:\,E[-1]\otimes E[-1]\,\cong\,(E\otimes E)[-2]\ar[r]^-{\langle\,\cdot\,,\,\cdot\,\rangle[-2]}~&~\bbK[-1][-2]\,\cong\,\bbK[-3]\quad.
}
\end{flalign}
Introducing an interaction term $I\in(\Sym_R E[1])^0$ that satisfies the classical
master equation \eqref{eqn:masterbraidedcl} is then equivalent
to endowing the Abelian cyclic braided $L_\infty$-algebra $(E[-1],\ell_1,\cyc{\,\cdot\,}{\,\cdot\,})$
with compatible higher brackets $\{\ell_n\}_{n\geq 2}$.
This relationship is given explicitly by
\begin{flalign}\label{eqn:braidedMCaction}
\lambda \, I \,=\, \sum_{n\geq 2}\,\frac{\lambda^{n-1}}{(n+1)!}\, \cyc{\mathsf{a}}{\ell_n^{\mathrm{ext}}(\mathsf{a},\dots,\mathsf{a})}_{\mathrm{ext}}\ \in\ (\Sym_RE[1])^0\quad,
\end{flalign}
where the contracted coordinate functions
\begin{flalign}\label{eqn:braidedCCF}
\mathsf{a}\,:=\, \sum_\alpha\,\varrho^\alpha\otimes\varepsilon_{\alpha}\ \in\ \big((\Sym_R E[1])\otimes E[-1]\big)^1
\end{flalign}
are defined by choosing a basis $\{\varepsilon_\alpha\in E[-1]\}$
with dual basis $\{\varrho^\alpha\in E[-1]^\ast \cong E[2]\}$.
We stress that the extended brackets
\begin{flalign}
\ell_n^{\mathrm{ext}} \,:\,
\big((\Sym_R E[1])\otimes E[-1]\big)^{\otimes n} ~\longrightarrow~ (\Sym_R E[1])\otimes E[-1]
\end{flalign}
and the extended pairing
\begin{flalign}
\cyc{\,\cdot\,}{\,\cdot\,}_{\mathrm{ext}} \,:\,
\big((\Sym_R E[1])\otimes E[-1]\big)\otimes \big((\Sym_R E[1])\otimes E[-1]\big) ~\longrightarrow~ (\Sym_R E[1])[-3]
\end{flalign}
receive, in addition to the obvious Koszul signs (cf.\ \cite[Section 2.3]{JRSW19}),
also appropriate actions of the $R$-matrix. For the pairing we have
\begin{flalign}
\cyc{a\otimes v}{a^\prime\otimes v^\prime}_{\mathrm{ext}}\,=\,
(-1)^{\vert a\vert +\vert a^\prime\vert + \vert v\vert\, \vert a^\prime\vert }~a\,(R_\alpha\ra a^\prime)~\cyc{R^\alpha\ra v}{v^{\prime}} \quad,
\end{flalign}
for all homogeneous $a,a^\prime\in \Sym_R E[1] $ and $v,v^\prime\in E[-1]$,
and similarly for the brackets $\ell_n^{\mathrm{ext}}$.
\begin{rem}
Our claim that \eqref{eqn:braidedMCaction} satisfies
the classical master equation \eqref{eqn:masterbraidedcl}
can be proven by precisely the same calculation as in 
the ordinary case, see e.g.\ \cite[Section~4.3]{JRSW19}. This is
due to the fact that the contracted coordinate functions \eqref{eqn:braidedCCF}
are $H$-invariant elements of $(\Sym_R E[1])\otimes E[-1]$, 
which implies that all appearances of $R$-matrices in the properties
of the extended brackets $\ell_n^{\mathrm{ext}}$ and the extended pairing 
$\cyc{\,\cdot\,}{\,\cdot\,}_{\mathrm{ext}}$ disappear when they are
evaluated on tensor products of the $H$-invariant element
$\mathsf{a}$. By the same argument,
one can use the proofs from the ordinary case~\cite[Section~4.3]{JRSW19} 
to show that \eqref{eqn:braidedMCaction} is
annihilated by the BV Laplacian, i.e.\ $\Delta_{\BV}(\lambda \, I)=0$,
and consequently that it also satisfies the quantum master equation \eqref{eqn:masterbraidedqu}.
\end{rem}

\subsection{Braided homological perturbation theory}
The usual homological perturbation lemma (cf.\ Theorem \ref{theo:HPL})
extends to our braided setting, provided that we use small perturbations
$\delta\in \hom(V,V)^{1}$ that are additionally $H$-invariant.
(Recall that the perturbations corresponding to interactions and quantization
from Section~\ref{subsec:braidedBV} are $H$-invariant.)
Concretely, the details are spelled out as follows.
\begin{defi}\label{def:braidedDefRet}
A {\em braided strong deformation retract} of an object 
$V\in{}_H\Ch$ onto its cohomology $H^\bullet(V)$
is a strong deformation retract
\begin{equation}\label{eqn:Heqvdefretpicture}
\begin{tikzcd}
(H^\bullet(V),0) \ar[r,shift right=1ex,swap,"\iota"] & \ar[l,shift right=1ex,swap,"\pi"] (V,\dd)\ar[loop,out=25,in=-25,distance=28,"\gamma"]
\end{tikzcd}\quad,
\end{equation}
in the sense of Definition \ref{def:defret}, such that 
$\pi$ and $\iota$ are $H$-equivariant, i.e.\ morphisms in ${}_H\Ch$,
and the homotopy $\gamma\in\hom(V,V)^{-1}$ is $H$-invariant, i.e.\
$h\ra\gamma = \epsilon(h)\,\gamma$ for all $h\in H$.
\end{defi}
\begin{cor}\label{cor:HeqvHPL}
Consider a braided strong deformation retract as in \eqref{eqn:Heqvdefretpicture}. 
Let $\delta \in\hom(V,V)^1$ be a small $H$-invariant perturbation, i.e.\ $h\ra \delta=\epsilon(h)\,\delta$
for all $h\in H$.  Then the expressions \eqref{eqn:HPL} define a braided strong deformation retract.
\end{cor}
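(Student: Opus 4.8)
The plan is to reduce the statement to the ordinary homological perturbation lemma (Theorem~\ref{theo:HPL}) together with a single bookkeeping observation about $H$-invariance. A braided strong deformation retract is, by definition, an ordinary strong deformation retract whose maps carry extra equivariance properties, and the perturbation $\delta$ is in particular a small perturbation of the underlying cochain complex of $\bbK$-vector spaces. Hence Theorem~\ref{theo:HPL} immediately guarantees that the formulas \eqref{eqn:HPL} define an ordinary strong deformation retract, so conditions (a)--(c) of Definition~\ref{def:defret} hold automatically. What remains is only to check the additional data of Definition~\ref{def:braidedDefRet}: the $H$-equivariance of $\widetilde{\iota}$ and $\widetilde{\pi}$ (and of the perturbed differential $\widetilde{\delta}$), and the $H$-invariance of $\widetilde{\gamma}$.

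The key algebraic fact I would establish first is that the composition map $\circ : \hom(V,V)\otimes\hom(V,V)\to\hom(V,V)$ is a morphism in ${}_H\Mod$ with respect to the adjoint action \eqref{eqn:homleftaction}. This is a short Sweedler computation: expanding $(h_{\und{1}}\ra f)\circ(h_{\und{2}}\ra g)$ produces a factor $(S(h_{\und{2}})\ra\cdot)\circ(h_{\und{3}}\ra\cdot)=(S(h_{\und{2}})\,h_{\und{3}})\ra\cdot$, which collapses by the antipode and counit axioms to $\epsilon(h_{\und{2}})\,\id$, leaving precisely $(h_{\und{1}}\ra\cdot)\,f\,g\,(S(h_{\und{2}})\ra\cdot)=h\ra(f\circ g)$. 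As an immediate consequence, the $H$-invariant graded endomorphisms of $V$ form a unital subalgebra of $\hom(V,V)$: applying the equivariance of $\circ$ to invariant tensors shows they are closed under composition, they contain $\id_V$, and they are evidently closed under $\bbK$-linear combinations.

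With this lemma the verification is routine. Since $\delta$ and $\gamma$ are both $H$-invariant, so is $\delta\,\gamma$, and because $\delta$ is a small perturbation the inverse $(\id_V-\delta\,\gamma)^{-1}=\sum_{k\geq 0}(\delta\,\gamma)^k$ exists as a formal power series; each summand is a composite of $H$-invariant maps and hence $H$-invariant, so the sum is $H$-invariant. Feeding this into \eqref{eqn:HPL}, the homotopy $\widetilde{\gamma}=\gamma+\gamma\,(\id_V-\delta\,\gamma)^{-1}\,\delta\,\gamma$ is a sum of composites of $H$-invariant maps and is therefore $H$-invariant. Likewise $\widetilde{\iota}$ and $\widetilde{\pi}$ are obtained from the $H$-equivariant cochain maps $\iota,\pi$ by pre- or post-composing with the $H$-invariant operators $(\id_V-\delta\,\gamma)^{-1}$, $\delta$ and $\gamma$; being composites of $H$-equivariant graded maps they are $H$-equivariant, and since Theorem~\ref{theo:HPL} already ensures they are cochain maps, they are morphisms in ${}_H\Ch$. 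The same reasoning shows $\widetilde{\delta}=\pi\,(\id_V-\delta\,\gamma)^{-1}\,\delta\,\iota$ is an ${}_H\Ch$-morphism, so that $(H^\bullet(V),\widetilde{\delta}\,)$ indeed lives in ${}_H\Ch$.

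I do not expect a genuine obstacle here: the entire content is that $H$-invariance is preserved under the operations (composition, sums, formal geometric inversion) out of which the perturbed data \eqref{eqn:HPL} are built, and this is governed by the single lemma that $\circ$ is $H$-equivariant. The one point deserving care is the inverse $(\id_V-\delta\,\gamma)^{-1}$: rather than arguing abstractly that the inverse of an invariant invertible element is invariant, I would simply invoke the convergent geometric series afforded by the smallness of $\delta$, which makes the invariance manifest term by term.
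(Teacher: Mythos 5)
Your argument is correct and is essentially the paper's own proof, which simply states that the formulas \eqref{eqn:HPL} satisfy the required $H$-equivariance and $H$-invariance properties ``by direct inspection''; you have merely made that inspection explicit via the (correct) observation that composition is a morphism for the adjoint action and that the geometric series for $(\id_V-\delta\,\gamma)^{-1}$ renders the invariance manifest term by term. No gap.
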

\begin{proof}
By direct inspection, one observes that the explicit formulas
in \eqref{eqn:HPL} satisfy the necessary $H$-equivariance or $H$-invariance properties.
\end{proof}

Given any free braided BV theory $(E,-Q,\langle\,\cdot\,, \,\cdot\,\rangle)$  
in the sense of Definition \ref{def:HeqvlinearBV} and any
braided strong deformation retract for its dual complex 
\begin{equation}
\begin{tikzcd}
(H^\bullet(E[1]),0) \ar[r,shift right=1ex,swap,"\iota"] & \ar[l,shift right=1ex,swap,"\pi"] (E[1],Q)\ar[loop,out=25,in=-25,distance=28,"\gamma"]
\end{tikzcd}\quad,
\end{equation}
a similar construction as in Section \ref{subsec:HPL} defines
a braided strong deformation retract
\begin{equation}\label{eqn:HeqvdefretSym}
\begin{tikzcd}
\big(\Sym_R\,H^\bullet(E[1]),0\big) \ar[r,shift right=1ex,swap,"\Sym_R\,\iota"] & \ar[l,shift right=1ex,swap,"\Sym_R\,\pi"] \big(\Sym_R\,E[1],Q\big) \qquad  \ar[loop,out=25,in=-25,distance=40,"\Sym_R\,\gamma"]
\end{tikzcd}\quad.
\end{equation}
The correlation functions of braided
non-interacting and also interacting quantum BV theories
can then be determined by applying Corollary \ref{cor:HeqvHPL}
to \eqref{eqn:HeqvdefretSym} and the deformed differentials
from Section~\ref{subsec:braidedBV}.
This construction works in a completely analogous way to the ordinary case
that we reviewed at the end of Section \ref{subsec:HPL}.

%%%%%%%%%%%%%%%%%%%%%%%%%%%%%%%%%%%%%%%%%%%%%%%%%%%%%%%%%%%%%%%%%%%%%%%%
%%%%%%%%%%%%%%%%%%%%%%%%%%%%%%%%%%%%%%%%%%%%%%%%%%%%%%%%%%%%%%%%%%%%%%%%

\section{\label{sec:fuzzytorus}Braided field theories on the fuzzy torus}
We illustrate the formalism of Section \ref{sec:braidedBV}
in the example of scalar field theories on the fuzzy $2$-torus, reproducing 
from our perspective many facets of Oeckl's braided quantum field theory 
for symmetric braidings~\cite{Oec00}. 
In this section we work over the field $\bbK=\bbC$
of complex numbers.

\paragraph{The fuzzy 2-torus.} To fix our notation and conventions,
let us recall the definition of the fuzzy $2$-torus and its triangular Hopf
algebra symmetry, see e.g. \cite{BG19,BSS17} for further details.
Let us fix a positive integer $N\in\bbZ_{>0}$ and set 
\begin{flalign}\label{eqn:qrootofunity}
q\,:=\e^{2\pi \ii/N}\ \in\ \bbC\quad.
\end{flalign}
The algebra of functions on the fuzzy torus $\bbT^2_N$ is defined
as the noncommutative $\ast$-algebra
\begin{flalign}\label{eqn:fuzzytorusalgebra}
A\,:=\, \bbC[U,V]\,\big/\, \big(U\,U^\ast -\oone\,,~V\,V^\ast-\oone\,,~U\,V- q\, V\,U\,,~U^N-\oone\,,~V^N-\oone\big)
\end{flalign}
generated by two elements $U$ and $V$, modulo the $\ast$-ideal generated by the displayed relations.
One should think of the generators $U$ and $V$ as the two exponential functions corresponding
to the two $1$-cycles of $\bbT^2_N$. Every element 
of $A$ may be written uniquely as $a = \sum_{i,j\in\bbZ_N}\, a_{ij}\, U^i\,V^j$,
where $a_{ij}\in\bbC$ play the role of Fourier coefficients.\footnote{One can 
identify $A\cong{\rm Mat}_N(\bbC)$ by realizing the elements $U$ and $V$ as 
$N\times N$ clock and shift matrices, see e.g.~\cite{LLS01}, but this concrete 
realization is not necessary in our treatment.}
\sk

The fuzzy $2$-torus has a (discrete) translation symmetry
that is given by a left action $\ra : H\otimes A\to A$ 
of the group Hopf algebra $H := \bbC[\bbZ_N^2]$ introduced in Example
\ref{ex:HopfAlgebra}. Explicitly, the basis vectors $\und{k} = (k_1,k_2)\in H$,
with $k_1,k_2\in\bbZ_N$ integers modulo $N$, act on the generators of $A$ as
\begin{flalign}\label{eqn:fuzzytorusHaction}
\und{k}\ra U\,:=\,q^{k_1}\, U\quad,\qquad \und{k}\ra V\,:=\, q^{k_2}\,V\quad.
\end{flalign}
This action is extended to all of $A$ by demanding that $A$ is a left $H$-module algebra,
i.e.\ $\und{k}\ra (a\,b) = (\und{k}_{\und{1}}\ra a)\,(\und{k}_{\und{2}}\ra b) = 
(\und{k}\ra a)\,(\und{k}\ra b)$, for all $a,b\in A$, where we have used the coproduct
of $H = \bbC[\bbZ_N^2]$ from Example \ref{ex:HopfAlgebra}.
\sk

We endow $H$ with the triangular structure (cf.\ Example \ref{ex:HopfAlgebra})
\begin{flalign}\label{eqn:fuzzytorusRmatrix}
R\,:=\,\frac{1}{N^2}\, \sum_{\und{s},\und{t}\in\bbZ_N^2}\, q^{\und{s}\Theta\und{t}}~\und{s}\otimes\und{t}\ \in\ H\otimes H\quad\quad\text{with}\quad
\Theta\,=\,\bigg(\begin{matrix}
0& -1\\
1& 0
\end{matrix}\bigg)\quad,
\end{flalign}
which is motivated from the fact that with this choice $A$ becomes a {\em braided commutative}
left $H$-module algebra, i.e.\ $a\,b = (R_\alpha\ra b)\,(R^\alpha\ra a)$ 
for all $a,b\in A$.
The latter statement can be easily checked by considering,
without loss of generality, two basis elements $a=U^i\,V^j$ and $b = U^k\,V^l$, for some $i,j,k,l\in\bbZ_N$.
Using the commutation relation in \eqref{eqn:fuzzytorusalgebra}, one computes
\begin{subequations}
\begin{flalign}
(U^i\,V^j)\,(U^k\, V^l) \,=\,  q^{i\,l-j\,k} ~ (U^k\, V^l)\, (U^i\,V^j)\quad.
\end{flalign}
Using now the definition of the $R$-matrix \eqref{eqn:fuzzytorusRmatrix} and 
of the left action \eqref{eqn:fuzzytorusHaction}, one computes
\begin{flalign}
\nn \big(R_\alpha\ra (U^k\, V^l)\big) \, \big(R^\alpha\ra(U^i\,V^j) \big)
\,&=\,\frac{1}{N^2}\,\sum_{\und{s},\und{t}\in\bbZ_N^2}\,q^{\und{s}\Theta\und{t}}~
\big(\und{t}\ra (U^k\, V^l)\big) \,\big(\und{s}\ra (U^i\,V^j)\big)\\[4pt]
\nn \,&=\, \frac{1}{N^2}\,\sum_{\und{s},\und{t}\in\bbZ_N^2}\,q^{\und{s}\Theta\und{t}}~
q^{t_1\, k + t_2\, l}~q^{s_1\, i + s_2\, j} ~(U^k\, V^l)\, (U^i\,V^j)\\[4pt]
\,&=\,q^{i\,l-j\,k}~(U^k\, V^l)\, (U^i\,V^j)\quad,
\end{flalign}
\end{subequations}
where the last step follows from \eqref{eqn:Deltanormalization}.
The two expressions coincide, showing that $A$ is braided commutative.
\sk

Integration on the fuzzy torus is defined through the linear map
\begin{flalign}\label{eqn:intmaptorus}
\int \,:\, A~\longrightarrow~\bbC~,~~a\,=\,\sum_{i,j\in\bbZ_N}\, a_{ij}\,U^i\,V^j~\longmapsto~ \int\, a \,:=\, 
a_{00}^{}\quad,
\end{flalign}
which is easily checked to be $H$-equivariant 
and cyclic, i.e.\ $\int\, a\,b = \int\, b\,a$ for all $a,b\in A$.
The scalar Laplacian reads as
\begin{flalign}
\Delta\,:\, A~\longrightarrow~A ~,~~a~\longmapsto ~\Delta(a)\,:=\,
-\frac{1}{\big(q^{1/2} - q^{-1/2}\big)^2}~
\Big(\big[U,\big[U^\ast,a\big]\big] + \big[V,\big[V^\ast,a\big]\big]\Big)\quad,
\end{flalign}
where we have chosen the square root $q^{1/2} := \e^{\pi\ii/N}\in\bbC$ of $q$.
The scalar Laplacian is also $H$-equivariant under the action \eqref{eqn:fuzzytorusHaction}
because the powers of $q$ resulting from the action on $U$ and on $U^\ast$
compensate each other, and similarly for those from $V$ and $V^\ast$.
A basis of eigenfunctions of the scalar Laplacian is given by
\begin{flalign}\label{eqn:planewavebasis}
e_{\und{k}}\,:=\, U^{k_1}\,V^{k_2}\ \in\ A ~~,
\end{flalign}
for all $\und{k}=(k_1,k_2)\in \bbZ_{N}^{2}$, and the corresponding
eigenvalues are given by
\begin{subequations}\label{eqn:torusLaplacian}
\begin{flalign}
\Delta(e_{\und{k}})\,=\, \big([k_1]^2_q + [k_2]^2_q\big)\, e_{\und{k}}\quad,
\end{flalign}
where the $q$-numbers are defined as
\begin{flalign}
[n]_q \,:=\, \frac{q^{n/2} - q^{-n/2}}{q^{1/2}-q^{-1/2}}\quad.
\end{flalign}
\end{subequations}
For later use, let us record the properties 
\begin{subequations}\label{eqn:planewavebasis2}
\begin{flalign}
e_{\und{k}}^\ast \,=\,q^{-k_1\, k_2}~e_{-\und{k}} ~~,\quad
e_{\und{k}}\,e_{\und{l}} \,=\, q^{-l_1\, k_2}~e_{\und{k}+\und{l}}~~,\quad
\int\, e_{\und{k}}^\ast\, e_{\und{l}} \,=\,\delta_{\und{k},\und{l}}\quad,
\end{flalign}
and 
\begin{flalign}
\tau_R(e_{\und{k}}\otimes e_{\und{l}})\,=\, q^{-\und{k}\Theta\und{l}}~e_{\und{l}}\otimes e_{\und{k}}\,=\,
q^{\und{l}\Theta\und{k}}~e_{\und{l}}\otimes e_{\und{k}}\quad,
\end{flalign}
\end{subequations}
for all $\und{k},\und{l}\in \bbZ_{N}^2$, which imply in particular that
the dual of the basis $\{ e_{\und{k}}\} $ under the integration pairing is 
given by $\{e_{\und{k}}^\ast\}$. 

\paragraph{Free braided BV theory.}
We are now ready to describe a non-interacting scalar field theory 
on the fuzzy torus as a free braided BV theory in the sense of 
Definition \ref{def:HeqvlinearBV}.
\begin{defi}\label{def:torusscalar}
The free  braided BV theory associated to a scalar field with mass parameter
$m^2\geq 0$ on the fuzzy torus is given by the ${}_H\Ch$-object
\begin{flalign}
E\, =\, \big(\xymatrix{
\stackrel{(0)}{A} \ar[r]^-{-Q} & \stackrel{(1)}{A}
}\big)\qquad\text{with}\qquad Q\,:=\,\Delta +m^2
\end{flalign}
concentrated in degrees $0$ and $1$, together with the ${}_H\Ch$-pairing
\begin{flalign}\label{eqn:toruspairing}
\langle\,\cdot\,,\,\cdot\,\rangle\,:\,
E\otimes E~\longrightarrow~\bbC[-1]~,~~
\varphi\otimes\psi~\longmapsto~\langle\varphi,\psi\rangle\,:=\, (-1)^{\vert \varphi\vert}\,\int\,
\varphi\,\psi\quad.
\end{flalign}
\end{defi}
\begin{rem}
Note that the pairing is indeed $H$-equivariant because both the product
on $A$ and the integration map \eqref{eqn:intmaptorus} are $H$-equivariant.
For the antisymmetry property of the pairing,
let us first observe that
\begin{flalign}\label{eqn:strictantisymmetric}
\langle \varphi,\psi\rangle
\,=\, (-1)^{\vert \varphi\vert}\,\int\, \varphi\,\psi
\,=\, (-1)^{\vert \varphi\vert}\,\int\,\psi \,\varphi
\,=\, (-1)^{\vert\varphi\vert + \vert\psi\vert}\,\langle \psi,\varphi\rangle
\,=\, -(-1)^{\vert \varphi\vert\,\vert\psi\vert}\,\langle \psi,\varphi\rangle
\end{flalign}
is {\em strictly} antisymmetric, i.e.\ antisymmetric without $R$-matrix actions.
Using the explicit form of the $R$-matrix \eqref{eqn:fuzzytorusRmatrix},
together with the Hopf algebra structure on $H=\bbC[\bbZ_N^2]$ from Example \ref{ex:HopfAlgebra},
one shows that
\begin{flalign}
\nn \langle R_{\alpha}\ra\psi,R^\alpha\ra \varphi\rangle \,&=\,
\frac{1}{N^2}\, \sum_{\und{s},\und{t}\in \bbZ_N^2}\,
q^{\und{s}\Theta\und{t}}~\langle \und{t}\ra \psi, \und{s}\ra \varphi \rangle \\[4pt]
\nn\,&=\, 
\frac{1}{N^2}\, \sum_{\und{s},\und{t}\in \bbZ_N^2}\,
q^{\und{s}\Theta\und{t}}~\langle \psi, (\und{s}-\und{t})\ra \varphi \rangle\\[4pt]
\nn\,&=\,\frac{1}{N^2}\, \sum_{\und{s}^\prime,\und{t}\in \bbZ_N^2}\,
q^{\und{s^\prime}\Theta\und{t}}~\langle \psi, \und{s}^\prime\ra \varphi \rangle\\[4pt]
\,&=\, \sum_{\und{s}^\prime\in\bbZ_N^2}\, \delta_{\und{s}^\prime,\und{0}} \, \langle \psi, \und{s}^\prime\ra \varphi \rangle
\,=\,\langle \psi, \varphi \rangle\quad.
\end{flalign}
In the second step we used $H$-equivariance of the pairing to write
$\langle\,\cdot\, , \,\cdot\,\rangle=
(-\und{t})\ra \langle\,\cdot\, , \,\cdot\,\rangle
=\langle (-\und{t})\ra \,\cdot\, ,\, (-\und{t})\ra \,\cdot\,\rangle$
and in the fourth step we used \eqref{eqn:Deltanormalization}.
Together with \eqref{eqn:strictantisymmetric}, it then follows
that the pairing \eqref{eqn:toruspairing} is {\em both} strictly antisymmetric
and also braided antisymmetric, the latter property being as required by Definition \ref{def:HeqvlinearBV}.
\end{rem}

Following the general approach outlined in Section \ref{subsec:braidedBV}, we can construct
from this input a braided $P_0$-algebra
\begin{flalign}
\Obs^{\cl} \,:=\, \big(\Sym_R E[1],Q,\{\,\cdot\, , \,\cdot\,\}\big)
\end{flalign}
of classical observables of the non-interacting theory. 
Let us recall that $\Sym_R$ denotes the braided symmetric algebra
(defined via \eqref{eqn:braidedsymmetricalgebra}) and note that the complex
\begin{flalign}\label{eqn:torusE1complex}
E[1] \,=\,  \big(\xymatrix{
\stackrel{(-1)}{A} \ar[r]^-{Q} & \stackrel{(0)}{A}
}\big)\ \in\ {}_H\Ch
\end{flalign}
is concentrated in degrees $-1$ and $0$.

\paragraph{Interactions.} Analogously to the scalar field on the fuzzy sphere discussed
in Section \ref{subsec:ScalarSphere}, the braided symmetric algebra
$\Sym_R E[1]$ in the present example is concentrated in non-positive degrees (cf.\ \eqref{eqn:torusE1complex}).
Hence every $H$-invariant $0$-cochain $I\in (\Sym_R E[1])^0$ 
automatically satisfies both the classical and quantum master equations \eqref{eqn:masterbraidedcl} 
and \eqref{eqn:masterbraidedqu} because, for degree reasons, one has
\begin{flalign}
Q(I)\,=\,\{I,I\} \,=\, \Delta_{\BV}(I) \,=\,0\quad.
\end{flalign}

As a concrete example, let us introduce the $m+1$-point interaction
by using the cyclic braided $L_\infty$-algebra formalism from Section~\ref{subsec:braidedLinfty}.
The scalar field from Definition \ref{def:torusscalar} defines
an Abelian cyclic braided $L_\infty$-algebra given by the complex
\begin{flalign}
E[-1]\,=\, \big(\xymatrix{
\stackrel{(1)}{A} \ar[r]^-{Q} & \stackrel{(2)}{A}
}\big)\ \in\ {}_H\Ch
\end{flalign}
and the cyclic structure
\begin{flalign}\label{eqn:toruscyclic}
\cyc{\,\cdot\,}{\,\cdot\,}\,:\, E[-1]\otimes E[-1]~\longrightarrow~\bbC[-3]~,~~
\varphi\otimes\psi~\longmapsto~\cyc{\varphi}{\psi}\,=\, \int\, \varphi\,\psi\quad.
\end{flalign}
Choosing any $m\geq 2$, we can endow this with the compatible $m$-bracket
\begin{flalign}\label{eqn:torusellm}
\ell_m\,:\, E[-1]^{\otimes m}~\longrightarrow~E[-1]~,~~
\varphi_1\otimes\cdots\otimes\varphi_m~\longmapsto~\varphi_1 \cdots\varphi_m
\end{flalign}
given by the multiplication in the left $H$-module algebra $A$.
Note that, for degree reasons, this is only non-vanishing if each $\varphi_i\in E[-1]$
is of degree $1$ in $E[-1]$ and that the braided graded antisymmetry property
of $\ell_m$ is a consequence of the fact that $A$ is braided commutative.
Indeed, we find
\begin{flalign}
\nn \ell_m(\varphi_1,\dots, \varphi_m)\,&=\, 
\varphi_1 \cdots\varphi_i\,\varphi_{i+1}\cdots \varphi_m \\[4pt]
\nn \,&=\, \varphi_1\cdots (R_\alpha\ra \varphi_{i+1})\,(R^\alpha\ra \varphi_i)\cdots\varphi_m\\[4pt]
\,&=\, -(-1)^{\vert \varphi_i\vert \,\vert\varphi_{i+1}\vert}~
\ell_{m}(\varphi_1,\dots,R_\alpha\ra \varphi_{i+1},R^\alpha\ra \varphi_{i},\dots, \varphi_m)\quad,
\end{flalign}
for all $i=1,\dots,m-1$.
\sk

Using the basis $\{e_{\und{k}}\in A\}_{\und{k}\in\bbZ_{N}^2}$ introduced in \eqref{eqn:planewavebasis}, 
together with its properties listed in \eqref{eqn:planewavebasis2}, we obtain the contracted
coordinate functions
\begin{flalign}\label{eqn:torusCCF}
\mathsf{a}\,=\,\sum_{\und{k}\in\bbZ_{N}^2}\, e_{\und{k}}^\ast\otimes e_{\und{k}}
+ \sum_{\und{k}\in\bbZ_{N}^2}\,\widetilde{e}_{\und{k}}^{\,\ast} \otimes \widetilde{e}_{\und{k}}
\ \in\ \big((\Sym_R E[1])\otimes E[-1] \big)^1\quad,
\end{flalign}
where, as in the fuzzy sphere example from Section~\ref{subsec:ScalarSphere}, the individual elements live
in the vector spaces  $e_{\und{k}}\in E[-1]^1$,
$e_{\und{k}}^\ast\in E[1]^0$, $\widetilde{e}_{\und{k}}\in E[-1]^2$
and $\widetilde{e}_{\und{k}}^{\,\ast}\in E[1]^{-1}$.
The interaction term \eqref{eqn:braidedMCaction} corresponding to an $m+1$-point
interaction then reads concretely as
\begin{subequations}\label{eqn:torusinteraction}
\begin{flalign}
\lambda\,I\,&=\, \frac{\lambda^{m-1}}{(m+1)!}\,\cyc{\mathsf{a}}{\ell_m^{\mathrm{ext}}(\mathsf{a},\dots,\mathsf{a})}_{\mathrm{ext}}\\[4pt]
\nn \,&=\,\frac{\lambda^{m-1}}{(m+1)!}\,\sum_{\und{k}_0,\dots,\und{k}_m\in\bbZ_{N}^2}\,
q^{\sum_{i<j}\, \und{k}_i\Theta\und{k}_j}~
e_{\und{k}_0}^\ast\,\cdots\,e_{\und{k}_m}^\ast~\cyc{e_{\und{k}_0}}{\ell_{m}(e_{\und{k}_1},\dots,e_{\und{k}_m})}\ \in\ (\Sym_{R}E[1])^0\quad,
\end{flalign}
where the factors of $q$ arise from the braiding identity in \eqref{eqn:planewavebasis2}. We stress again that the products of the elements $e_{\und{k}_i}^\ast$ in the second
line are taken in the braided symmetric algebra $\Sym_{R}E[1]$ and {\em not} in $A$.
\sk

The constants
\begin{flalign}
I_{\und{k}_0\und{k}_1\cdots\und{k}_m}\,:=\,q^{\sum_{i<j}\, \und{k}_i\Theta\und{k}_j}~
\cyc{e_{\und{k}_0}}{\ell_{m}(e_{\und{k}_1},\dots,e_{\und{k}_m})}\ \in\ \bbC
\end{flalign}
can be worked out explicitly by using \eqref{eqn:toruscyclic}, 
\eqref{eqn:torusellm}, \eqref{eqn:intmaptorus} and \eqref{eqn:planewavebasis2},
from which one finds
\begin{flalign}\label{eqn:torusinteractionexplicit}
I_{\und{k}_0\und{k}_1\cdots\und{k}_m}\,&=\,q^{\sum_{i<j}\, \und{k}_i\Theta\und{k}_j}~
\int\, e_{\und{k}_0}\,e_{\und{k}_1}\cdots e_{\und{k}_m}\\[4pt]
\nn \,&=\, 
q^{\sum_{i<j}\, \und{k}_i\Theta\und{k}_j}~q^{-\sum_{i<j}\, {k_{i}}{}_2 \, {k_{j}}{}_1}~\int\, e_{\und{k}_0 + \und{k}_1+ \cdots + \und{k}_m}
\,=\,q^{-\sum_{i<j}\, {k_i}{}_1\, {k_j}{}_2}~\delta_{\und{k}_0+\und{k}_1+\cdots +\und{k}_m,\und{0}}\quad,
\end{flalign}
\end{subequations}
where the double subscript notation indicates the components $\und{k}_i = ({k_{i}}_1, {k_i}_2)\in\bbZ_N^2$ for $i=0,1,\dots,m$.
The constants $I_{\und{k}_0\und{k}_1\cdots\und{k}_m}$ satisfy the $q$-deformed symmetry property
\begin{subequations}\label{eqn:torusinteractionproperties}
\begin{flalign}
I_{\und{k}_0\und{k}_1\cdots\und{k}_i\und{k}_{i+1}\cdots\und{k}_m}
\,=\,q^{\und{k}_i\Theta\und{k}_{i+1}}~I_{\und{k}_0\und{k}_1\cdots\und{k}_{i+1}\und{k}_{i}\cdots\und{k}_m}
\end{flalign}
for any exchange of neighboring indices, which in particular implies
the {\em strict} cyclicity property
\begin{flalign}
I_{\und{k}_0\und{k}_1\cdots\und{k}_{m}}\,=\, I_{\und{k}_1\cdots \und{k}_m \und{k}_0}
\end{flalign}
\end{subequations}
by further using momentum conservation imposed by the Kronecker delta-symbol
$\delta_{\und{k}_0+\und{k}_1+\cdots +\und{k}_m,\und{0}}$.

\paragraph{Braided strong deformation retract.} In the remainder of this section
we consider only the massive case $m^2>0$. (The massless case is slightly more 
involved, but it can be treated analogously to the fuzzy sphere example in Section \ref{subsec:ScalarSphere}.)
In this case the cohomology of the complex $E[1]$ in \eqref{eqn:torusE1complex}
is trivial, i.e.\ $H^\bullet(E[1])\cong 0$, because the spectrum of the operator
$Q = \Delta + m^2$ is positive (see \eqref{eqn:torusLaplacian} for the spectrum of the Laplacian 
$\Delta$). The strong deformation retract is thus given by
\begin{equation}\label{eqn:torusDefRet}
\begin{tikzcd}
(0,0) \ar[r,shift right=1ex,swap,"\iota=0"] & \ar[l,shift right=1ex,swap,"\pi=0"] (E[1],Q)\ar[loop,out=25,in=-25,distance=28,"\gamma=-G"]
\end{tikzcd}\quad,
\end{equation}
where $G$ is the inverse of $Q=\Delta+m^2$, i.e.\ the Green operator,
and $\gamma = -G$ is defined to act as a degree $-1$ map on $E[1]$. 
Explicitly, the homotopy $\gamma$ acts on the (dual) basis $e_{\und{k}}^\ast \in E[1]^0$ as
\begin{flalign}\label{eqnLtorusGreenoperator}
\gamma(e_{\und{k}}^\ast\,)\,=\, -G(e_{\und{k}}^\ast\,)\,=\, -\frac{1}{[k_1]^2_q + [k_2]^2_q + m^2}~\widetilde{e}_{\und{k}}^{\,\ast}\ \in \ E[1]^{-1}\quad,
\end{flalign}
where we have adopted the same notation for the basis vectors
as in \eqref{eqn:torusCCF}. The strong deformation retract
\eqref{eqn:torusDefRet} clearly satisfies the requisite $H$-equivariance
and $H$-invariance properties to be a braided strong deformation retract in
the sense of Definition \ref{def:braidedDefRet}.

\paragraph{Correlation functions.} We shall now explain in more detail
how correlation functions may be computed and provide some explicit examples.
The braided strong deformation retract \eqref{eqn:torusDefRet} extends to
the braided symmetric algebras. Given any small $H$-invariant
perturbation $\delta$ of the differential $Q$ on $\Sym_R E[1]$, we obtain
via Corollary \ref{cor:HeqvHPL} the deformed braided strong deformation retract
\begin{equation}
\begin{tikzcd}
\big(\Sym_R 0\cong \bbC ,0 \big) \ar[r,shift right=1ex,swap,"\widetilde{\Sym_R\iota}"] & \ar[l,shift right=1ex,swap,"\widetilde{\Sym_R\pi}"] \big(\Sym_R E[1],Q+\delta \big) \qquad \qquad \ \ar[loop,out=25,in=-25,distance=40,"\widetilde{\Sym_R\gamma}"]
\end{tikzcd}\quad,
\end{equation}
where the tilded quantities are computed through the homological perturbation lemma, cf.\
Theorem \ref{theo:HPL}. The correlation functions may be computed by the ${}_H\Ch$-morphism
\begin{flalign}
\widetilde{\Pi}\,:=\,\widetilde{\Sym_R\pi}\,=\,\Pi\,\circ\, \sum_{k=0}^\infty\, (\delta\,\Gamma)^k\quad,
\end{flalign}
where we use again the abbreviations $\Pi := \Sym_{R}\pi$ and $\Gamma := \Sym_R\gamma$.
The relevant perturbations $\delta$ are of the form
\begin{flalign}
\delta\,=\,\hbar\,\Delta_{\BV} +\{\lambda\,I,\,\cdot\,\}\quad,
\end{flalign}
where $\Delta_{\BV}$ is the BV Laplacian \eqref{eqn:braidedBVLaplacian} 
and $\lambda \,I\in (\Sym_R E[1])^0$ denotes the $m+1$-point interaction
term \eqref{eqn:torusinteraction} for some $m\geq 2$.
\sk

In order to evaluate the correlation functions 
$\widetilde{\Pi}(\varphi_1\cdots\varphi_n)$ for test functions
of degree zero, we have to understand how the maps $\Pi$
and $\delta\,\Gamma$ act on elements $\varphi_1\cdots\varphi_n\in \Sym_R E[1]$
with all $\varphi_i\in E[1]^0$ of degree zero.
For $\Pi$ this is very easy and we find
\begin{flalign}
\Pi(\oone)\,=\,1 ~~,\quad \Pi(\varphi_1\cdots\varphi_n)\,=\,0\quad,
\end{flalign}
for all $n\geq 1$. To describe $\delta\,\Gamma = 
\hbar \,\Delta_{\BV}\,\Gamma + \{\lambda\,I,\,\cdot\,\}\,\Gamma$,
it is convenient to consider the two summands individually. 
Using the explicit formula \eqref{eqn:braidedBVLaplacian} 
for the BV Laplacian, one finds for the first term
\begin{multline}\label{eqn:torusDeltaBVhomotopy}
\hbar\,\Delta_{\BV}\,\Gamma\big(\varphi_1\cdots \varphi_n\big)\,=\,
-\frac{2\,\hbar}{n}\,\sum_{i<j}\, \big(\varphi_i,R_{\alpha_{i+1}}\cdots R_{\alpha_{j-1}}\ra G(\varphi_j)\big)~\\
 \ \times~\varphi_1\cdots\widehat{\varphi}_i \,(R^{\alpha_{i+1}}\ra \varphi_{i+1})\cdots(R^{\alpha_{j-1}}\ra \varphi_{j-1})\, \widehat{\varphi}_j\cdots\varphi_n\quad.
\end{multline}
Using now the explicit expression \eqref{eqn:torusinteraction}
for the $m+1$-point interaction term, together with its properties 
\eqref{eqn:torusinteractionproperties}, one finds for the second term
\begin{multline}\label{eqn:torusIhomotopy}
\big\{\lambda\,I,\Gamma(\varphi_1\cdots\varphi_n)\big\} \,=\,\\[4pt]
-\frac{\lambda^{m-1}}{m!\, n}\, \sum_{i=1}^n \ \sum_{\und{k}_0,\dots,\und{k}_m\in\bbZ_{N}^2}\,
\varphi_1\cdots\varphi_{i-1}~
I_{\und{k}_0\und{k}_1\cdots\und{k}_m}\,
\big(e^\ast_{\und{k}_0} , G(\varphi_i)\big)~ e_{\und{k}_1}^\ast\cdots e_{\und{k}_m}^\ast
\,\varphi_{i+1}\cdots \varphi_n\quad.
\end{multline}

Similarly to Section \ref{subsec:ScalarSphere}, the two expressions \eqref{eqn:torusDeltaBVhomotopy} 
and \eqref{eqn:torusIhomotopy} may be visualized graphically.
Depicting again the element $\varphi_1\cdots\varphi_n$ by $n$ vertical lines, 
the map in \eqref{eqn:torusDeltaBVhomotopy} may be depicted as
\begin{flalign}
\hbar\,\Delta_{\BV}\,\Gamma\Big(\,\vcenter{\hbox{\begin{tikzpicture}[scale=0.5]
\draw[thick] (0,0) -- (0,1);
\draw[thick] (0.5,0) -- (0.5,1);
\draw[thick] (1,0) -- (1,1);
\node (dottt) at (1.75,0.5) {$\cdots$};
\draw[thick] (2.5,0) -- (2.5,1);
\draw[thick] (3,0) -- (3,1);
\end{tikzpicture}}}\,\Big)
\,=\, -\frac{2\,\hbar}{n}\,\Big(\,
\vcenter{\hbox{\begin{tikzpicture}[scale=0.5]
\draw[thick] (0,0) -- (0,1);
\draw[thick] (0.5,0) -- (0.5,1);
\draw[thick] (0,1) to[out=90,in=90] (0.5,1);
\draw[thick] (1,0) -- (1,1);
\node (dottt) at (1.75,0.5) {$\cdots$};
\draw[thick] (2.5,0) -- (2.5,1);
\draw[thick] (3,0) -- (3,1);
\end{tikzpicture}}}
~+~
\vcenter{\hbox{\begin{tikzpicture}[scale=0.5]
\draw[thick] (0,0) -- (0,1);
\draw[thick] (0.5,0) -- (0.5,1);
\draw[thick] (1,0) -- (1,1);
\draw[thick] (0,1) to[out=90,in=90] (1,1);
\node (dottt) at (1.75,0.5) {$\cdots$};
\draw[thick] (2.5,0) -- (2.5,1);
\draw[thick] (3,0) -- (3,1);
\end{tikzpicture}}}
~+~\cdots~+~
\vcenter{\hbox{\begin{tikzpicture}[scale=0.5]
\draw[thick] (0,0) -- (0,1);
\draw[thick] (0.5,0) -- (0.5,1);
\draw[thick] (1,0) -- (1,1);
\node (dottt) at (1.75,0.5) {$\cdots$};
\draw[thick] (2.5,0) -- (2.5,1);
\draw[thick] (3,0) -- (3,1);
\draw[thick] (2.5,1) to[out=90,in=90] (3,1);
\end{tikzpicture}}}
\,\Big)\quad,
\end{flalign}
where the cap indicates a contraction of two elements
with respect to $(\,\cdot\, ,G(\cdot))$.
In such pictures it is understood that the right leg
of the contraction is permuted via the symmetric braiding
$\tau_R$ across the intermediate vertical lines, which leads
precisely to the $R$-matrix insertions in \eqref{eqn:torusDeltaBVhomotopy}.
The map in \eqref{eqn:torusIhomotopy} may be depicted as
\begin{flalign}
\Big\{ \lambda\,I\,,\, \Gamma\Big(\,\vcenter{\hbox{\begin{tikzpicture}[scale=0.5]
\draw[thick] (0,0) -- (0,1);
\draw[thick] (0.5,0) -- (0.5,1);
\draw[thick] (1,0) -- (1,1);
\node (dottt) at (1.75,0.5) {$\cdots$};
\draw[thick] (2.5,0) -- (2.5,1);
\draw[thick] (3,0) -- (3,1);
\end{tikzpicture}}}\,\Big)\Big\}
\, =\, -\frac{\lambda^{m-1}}{m!\, n}\,\bigg(
\vcenter{\hbox{\begin{tikzpicture}[scale=0.5]
\draw[thick] (-0.5,0) -- (-0.5,0.5);
\draw[thick] (-0.5,0.5) -- (0,1);
\draw[thick] (-0.5,0.5) -- (-0.25,1);
\draw[thick] (-0.5,0.5) -- (-0.75,1);
\draw[thick] (-0.5,0.5) -- (-1,1);
\node (ov) at (-0.5,1.5)  {\footnotesize{$m\text{ legs}$}};
\node (ol) at (-0.5,-0.75)  {\footnotesize{~~}};
\draw[thick] (0.5,0) -- (0.5,1);
\draw[thick] (1,0) -- (1,1);
\node (dottt) at (1.75,0.6) {$\cdots$};
\draw[thick] (2.5,0) -- (2.5,1);
\draw[thick] (3,0) -- (3,1);
\end{tikzpicture}}}
~+~\cdots~+~
\vcenter{\hbox{\begin{tikzpicture}[scale=0.5]
\draw[thick] (0,0) -- (0,1);
\draw[thick] (0.5,0) -- (0.5,1);
\draw[thick] (1,0) -- (1,1);
\node (dottt) at (1.75,0.5) {$\cdots$};
\draw[thick] (2.5,0) -- (2.5,1);
\draw[thick] (3.5,0) -- (3.5,0.5);
\draw[thick] (3.5,0.5) -- (3,1);
\draw[thick] (3.5,0.5) -- (3.25,1);
\draw[thick] (3.5,0.5) -- (3.75,1);
\draw[thick] (3.5,0.5) -- (4,1);
\node (ov) at (3.5,1.5)  {\footnotesize{$m\text{ legs}$}};
\node (ol) at (3.5,-0.75)  {\footnotesize{~~}};
\end{tikzpicture}}}
\bigg)\quad,
\end{flalign}
where the vertex acts on an element as
$\sum_{\und{k}_0,\dots,\und{k}_m\in\bbZ_N^2}\, 
I_{\und{k}_0\und{k}_1\cdots\und{k}_m}\,
\big(e_{\und{k}_0}^\ast , G(\cdot)\big)~ e_{\und{k}_1}^\ast\cdots e_{\und{k}_m}^\ast$.

\begin{ex}\label{ex:free4pt}
To illustrate the pattern of $R$-matrix insertions,
as a first simple example let us compute the $4$-point function
\begin{flalign}
\widetilde{\Pi}(\varphi_1\,\varphi_2\,\varphi_3\,\varphi_4)
\,=\, \Pi\big((\hbar\,\Delta_{\BV}\,\Gamma)^2(\varphi_1\, \varphi_2\, \varphi_3\, \varphi_4)\big)
\end{flalign}
of a non-interacting scalar field, i.e.\ $I=0$.
Using our graphical description, we compute
\begin{flalign}
\hbar\,\Delta_{\BV}\,\Gamma\Big(\,\vcenter{\hbox{\begin{tikzpicture}[scale=0.5]
\draw[thick] (0,0) -- (0,1);
\draw[thick] (0.5,0) -- (0.5,1);
\draw[thick] (1,0) -- (1,1);
\draw[thick] (1.5,0) -- (1.5,1);
\end{tikzpicture}}}\,\Big)
\,=\, -\frac{\hbar}{2}\,\Big(\,
\vcenter{\hbox{\begin{tikzpicture}[scale=0.5]
\draw[thick] (0,0) -- (0,1);
\draw[thick] (0.5,0) -- (0.5,1);
\draw[thick] (1,0) -- (1,1);
\draw[thick] (1.5,0) -- (1.5,1);
\draw[thick] (0,1) to[out=90,in=90] (0.5,1);
\end{tikzpicture}}}
~+~
\vcenter{\hbox{\begin{tikzpicture}[scale=0.5]
\draw[thick] (0,0) -- (0,1);
\draw[thick] (0.5,0) -- (0.5,1);
\draw[thick] (1,0) -- (1,1);
\draw[thick] (1.5,0) -- (1.5,1);
\draw[thick] (0,1) to[out=90,in=90] (1,1);
\end{tikzpicture}}}
~+~
\vcenter{\hbox{\begin{tikzpicture}[scale=0.5]
\draw[thick] (0,0) -- (0,1);
\draw[thick] (0.5,0) -- (0.5,1);
\draw[thick] (1,0) -- (1,1);
\draw[thick] (1.5,0) -- (1.5,1);
\draw[thick] (0,1) to[out=90,in=90] (1.5,1);
\end{tikzpicture}}}
~+~
\vcenter{\hbox{\begin{tikzpicture}[scale=0.5]
\draw[thick] (0,0) -- (0,1);
\draw[thick] (0.5,0) -- (0.5,1);
\draw[thick] (1,0) -- (1,1);
\draw[thick] (1.5,0) -- (1.5,1);
\draw[thick] (0.5,1) to[out=90,in=90] (1,1);
\end{tikzpicture}}}
~+~
\vcenter{\hbox{\begin{tikzpicture}[scale=0.5]
\draw[thick] (0,0) -- (0,1);
\draw[thick] (0.5,0) -- (0.5,1);
\draw[thick] (1,0) -- (1,1);
\draw[thick] (1.5,0) -- (1.5,1);
\draw[thick] (0.5,1) to[out=90,in=90] (1.5,1);
\end{tikzpicture}}}
~+~
\vcenter{\hbox{\begin{tikzpicture}[scale=0.5]
\draw[thick] (0,0) -- (0,1);
\draw[thick] (0.5,0) -- (0.5,1);
\draw[thick] (1,0) -- (1,1);
\draw[thick] (1.5,0) -- (1.5,1);
\draw[thick] (1,1) to[out=90,in=90] (1.5,1);
\end{tikzpicture}}}
\,\Big)
\end{flalign}
and 
\begin{flalign}
\nn (\hbar\,\Delta_{\BV}\,\Gamma)^2\Big(\,\vcenter{\hbox{\begin{tikzpicture}[scale=0.5]
\draw[thick] (0,0) -- (0,1);
\draw[thick] (0.5,0) -- (0.5,1);
\draw[thick] (1,0) -- (1,1);
\draw[thick] (1.5,0) -- (1.5,1);
\end{tikzpicture}}}\,\Big)
\,&=\,\frac{\hbar^2}{2} \, \Big(\,
2 \ \vcenter{\hbox{\begin{tikzpicture}[scale=0.5]
\draw[thick] (0,0) -- (0,1);
\draw[thick] (0.5,0) -- (0.5,1);
\draw[thick] (1,0) -- (1,1);
\draw[thick] (1.5,0) -- (1.5,1);
\draw[thick] (0,1) to[out=90,in=90] (0.5,1);
\draw[thick] (1,1) to[out=90,in=90] (1.5,1);
\end{tikzpicture}}}
~+~
\vcenter{\hbox{\begin{tikzpicture}[scale=0.5]
\draw[thick] (0,0) -- (0,1);
\draw[thick] (0.5,0) -- (0.5,1.1);
\draw[thick] (0.5,1.4) -- (0.5,1.5);
\draw[thick] (1,0) -- (1,1);
\draw[thick] (1.5,0) -- (1.5,1.5);
\draw[thick] (0,1) to[out=90,in=90] (1,1);
\draw[thick] (0.5,1.5) to[out=90,in=90] (1.5,1.5);
\end{tikzpicture}}}
~+~
\vcenter{\hbox{\begin{tikzpicture}[scale=0.5]
\draw[thick] (0,0) -- (0,1);
\draw[thick] (0.5,0) -- (0.5,1.25);
\draw[thick] (0.5,1.5) -- (0.5,1.7);
\draw[thick] (1,0) -- (1,1.25);
\draw[thick] (1,1.5) -- (1,1.7);
\draw[thick] (1.5,0) -- (1.5,1);
\draw[thick] (0,1) to[out=90,in=90] (1.5,1);
\draw[thick] (0.5,1.7) to[out=90,in=90] (1,1.7);
\end{tikzpicture}}}
~+~
\vcenter{\hbox{\begin{tikzpicture}[scale=0.5]
\draw[thick] (0,0) -- (0,1.2);
\draw[thick] (0.5,0) -- (0.5,1);
\draw[thick] (1,0) -- (1,1);
\draw[thick] (1.5,0) -- (1.5,1.2);
\draw[thick] (0.5,1) to[out=90,in=90] (1,1);
\draw[thick] (0,1.2) to[out=90,in=90] (1.5,1.2);
\end{tikzpicture}}}
~+~
\vcenter{\hbox{\begin{tikzpicture}[scale=0.5]
\draw[thick] (0,0) -- (0,1.5);
\draw[thick] (0.5,0) -- (0.5,1);
\draw[thick] (1,0) -- (1,1.1);
\draw[thick] (1,1.4) -- (1,1.5);
\draw[thick] (1.5,0) -- (1.5,1);
\draw[thick] (0.5,1) to[out=90,in=90] (1.5,1);
\draw[thick] (0,1.5) to[out=90,in=90] (1,1.5);
\end{tikzpicture}}}
\,\Big)\\[4pt]
\,&=\, \hbar^2\,\Big(\,
\vcenter{\hbox{\begin{tikzpicture}[scale=0.5]
\draw[thick] (0,0) -- (0,1);
\draw[thick] (0.5,0) -- (0.5,1);
\draw[thick] (1,0) -- (1,1);
\draw[thick] (1.5,0) -- (1.5,1);
\draw[thick] (0,1) to[out=90,in=90] (0.5,1);
\draw[thick] (1,1) to[out=90,in=90] (1.5,1);
\end{tikzpicture}}}
~+~
\vcenter{\hbox{\begin{tikzpicture}[scale=0.5]
\draw[thick] (0,0) -- (0,1);
\draw[thick] (0.5,0) -- (0.5,1);
\draw[thick] (1,0) -- (1,1);
\draw[thick] (1.5,0) -- (1.5,1);
\draw[thick] (0,1) to[out=90,in=90] (1,1);
\draw[thick] (0.5,1) to[out=90,in=90] (1.5,1);
\end{tikzpicture}}}
~+~
\vcenter{\hbox{\begin{tikzpicture}[scale=0.5]
\draw[thick] (0,0) -- (0,1);
\draw[thick] (0.5,0) -- (0.5,1);
\draw[thick] (1,0) -- (1,1);
\draw[thick] (1.5,0) -- (1.5,1);
\draw[thick] (0,1) to[out=90,in=90] (1.5,1);
\draw[thick] (0.5,1) to[out=90,in=90] (1,1);
\end{tikzpicture}}}\,\Big)\quad.\label{eqn:tmpFree4pt}
\end{flalign}
Let us explain in more detail the simplification in the last step:
Using $H$-equivariance of the pairing $(\,\cdot\,, \,\cdot\,)$
and the standard identity $(S\otimes \id_H)R = R^{-1} = R_{21}$ for a triangular $R$-matrix, 
it follows that
\begin{flalign}
\nn \vcenter{\hbox{\begin{tikzpicture}[scale=0.5]
\draw[thick] (0,0) -- (0,1);
\draw[thick] (0.5,0) -- (0.5,1.1);
\draw[thick] (0.5,1.4) -- (0.5,1.5);
\draw[thick] (1,0) -- (1,1);
\draw[thick] (1.5,0) -- (1.5,1.5);
\draw[thick] (0,1) to[out=90,in=90] (1,1);
\draw[thick] (0.5,1.5) to[out=90,in=90] (1.5,1.5);
\end{tikzpicture}}}
~&=\,\big(\varphi_1,R_\alpha\ra G(\varphi_3)\big)~\big(R^\alpha\ra \varphi_2,G(\varphi_4)\big) \\[4pt]
\nn \, &=\,\big(\varphi_1,R_\alpha\ra G(\varphi_3)\big)~\big(\varphi_2,S(R^\alpha)\ra G(\varphi_4)\big)\\[4pt]
\, &=\,  \big(\varphi_1,R^\alpha\ra G( \varphi_3)\big)~\big(\varphi_2,R_\alpha\ra G(\varphi_4)\big)~=~
\vcenter{\hbox{\begin{tikzpicture}[scale=0.5]
\draw[thick] (0,0) -- (0,1.5);
\draw[thick] (0.5,0) -- (0.5,1);
\draw[thick] (1,0) -- (1,1.1);
\draw[thick] (1,1.4) -- (1,1.5);
\draw[thick] (1.5,0) -- (1.5,1);
\draw[thick] (0.5,1) to[out=90,in=90] (1.5,1);
\draw[thick] (0,1.5) to[out=90,in=90] (1,1.5);
\end{tikzpicture}}}\quad,\label{eqn:exargument1}
\end{flalign}
hence the second and fifth term in the first line of \eqref{eqn:tmpFree4pt} coincide,
yielding the second term in the second line. Furthermore, using again $H$-equivariance
of the pairing $(\,\cdot\, ,\,\cdot\,)$, the third $R$-matrix property in \eqref{eqn:Rmatrixproperties}
and the normalization condition $\epsilon(R^\alpha)\,R_\alpha = 1$, it follows that
\begin{flalign}
\nn \vcenter{\hbox{\begin{tikzpicture}[scale=0.5]
\draw[thick] (0,0) -- (0,1);
\draw[thick] (0.5,0) -- (0.5,1.25);
\draw[thick] (0.5,1.5) -- (0.5,1.7);
\draw[thick] (1,0) -- (1,1.25);
\draw[thick] (1,1.5) -- (1,1.7);
\draw[thick] (1.5,0) -- (1.5,1);
\draw[thick] (0,1) to[out=90,in=90] (1.5,1);
\draw[thick] (0.5,1.7) to[out=90,in=90] (1,1.7);
\end{tikzpicture}}}
~&=\,\big(\varphi_1,R_{\alpha}\,R_{\beta}\ra G(\varphi_4)\big)~\big(R^\alpha\ra \varphi_2,R^\beta\ra G(\varphi_3)\big)\\[4pt]
\nn \,&=\,\big(\varphi_1,R_{\alpha}\ra G(\varphi_4)\big)~\big(R^\alpha_{\und{1}} \ra \varphi_2,R^\alpha_{\und{2}}\ra G(\varphi_3)\big)\\[4pt]
\nn \,&=\,\big(\varphi_1,R_{\alpha}\ra G(\varphi_4)\big)~\epsilon(R^\alpha)\,\big(\varphi_2, G(\varphi_3)\big)\\[4pt]
\,&=\, \big(\varphi_1,G(\varphi_4)\big)~\big(\varphi_2, G(\varphi_3)\big)
~=~\vcenter{\hbox{\begin{tikzpicture}[scale=0.5]
\draw[thick] (0,0) -- (0,1.2);
\draw[thick] (0.5,0) -- (0.5,1);
\draw[thick] (1,0) -- (1,1);
\draw[thick] (1.5,0) -- (1.5,1.2);
\draw[thick] (0.5,1) to[out=90,in=90] (1,1);
\draw[thick] (0,1.2) to[out=90,in=90] (1.5,1.2);
\end{tikzpicture}}}\quad,\label{eqn:exargument2}
\end{flalign}
hence the third and fourth term in the first line of \eqref{eqn:tmpFree4pt} coincide,
yielding the third term in the second line.
\sk

From this we compute the free $4$-point function
\begin{flalign}
\nn \widetilde{\Pi}(\varphi_1\,\varphi_2\,\varphi_3\,\varphi_4)\,&=\,\hbar^2\,\Big(\,
\vcenter{\hbox{\begin{tikzpicture}[scale=0.5]
\draw[thick] (0,0) -- (0,1);
\draw[thick] (0.5,0) -- (0.5,1);
\draw[thick] (1,0) -- (1,1);
\draw[thick] (1.5,0) -- (1.5,1);
\draw[thick] (0,1) to[out=90,in=90] (0.5,1);
\draw[thick] (1,1) to[out=90,in=90] (1.5,1);
\end{tikzpicture}}}
~+~
\vcenter{\hbox{\begin{tikzpicture}[scale=0.5]
\draw[thick] (0,0) -- (0,1);
\draw[thick] (0.5,0) -- (0.5,1);
\draw[thick] (1,0) -- (1,1);
\draw[thick] (1.5,0) -- (1.5,1);
\draw[thick] (0,1) to[out=90,in=90] (1,1);
\draw[thick] (0.5,1) to[out=90,in=90] (1.5,1);
\end{tikzpicture}}}
~+~
\vcenter{\hbox{\begin{tikzpicture}[scale=0.5]
\draw[thick] (0,0) -- (0,1);
\draw[thick] (0.5,0) -- (0.5,1);
\draw[thick] (1,0) -- (1,1);
\draw[thick] (1.5,0) -- (1.5,1);
\draw[thick] (0,1) to[out=90,in=90] (1.5,1);
\draw[thick] (0.5,1) to[out=90,in=90] (1,1);
\end{tikzpicture}}}\,\Big)\\[4pt]
\nn \,&=\,\hbar^2\, \Big(\big(
\varphi_1,G(\varphi_2)\big)~\big(\varphi_3,G(\varphi_4)\big)
~+~
\big(\varphi_1,R_\alpha\ra G(\varphi_3)\big)~\big(R^\alpha\ra \varphi_2,G(\varphi_4)\big)\\
&\qquad~\qquad ~+~\big(\varphi_1,G(\varphi_4)\big)~\big(\varphi_2,G(\varphi_3)\big)
\Big)
\end{flalign}
and observe that there is a single appearance of an $R$-matrix associated with
the line crossing in the second term. Note that,
as in the case of Oeckl's \emph{symmetric} braided quantum field theory~\cite{Oec00},
we do not have to distinguish between over and under crossings because
our $R$-matrix is triangular.
\end{ex}

\begin{ex}\label{ex:torus2pt}
Let us now set $m=3$ and compute the $2$-point function
\begin{flalign}
\widetilde{\Pi}\big(\varphi_1\,\varphi_2\big)\,=\,\sum_{k=0}^\infty\, \Pi\big((\delta\,\Gamma)^k(\varphi_1\,\varphi_2)\big)
\end{flalign}
of $\Phi^4$-theory to the lowest non-trivial order 
in the coupling constant, which we recall is $\mathcal{O}(\lambda^2)$ due to our conventions 
in \eqref{eqn:torusinteraction}. The graphical expansion (for the moment without the simplifications)
is completely analogous to the calculation on the fuzzy sphere from Example \ref{ex:fuzzysphere2pt},
from which we obtain
\begin{flalign}
\nn \widetilde{\Pi}\big(\varphi_1\,\varphi_2\big)\,&=\,-\hbar ~~\vcenter{\hbox{\begin{tikzpicture}[scale=0.5]
\draw[thick] (0,0) -- (0,1);
\draw[thick] (0.5,0) -- (0.5,1);
\draw[thick] (0,1) to[out=90,in=90] (0.5,1);
\end{tikzpicture}}} ~ - ~ \frac{\lambda^2\,\hbar^2}{3!\, 4}\,\Big(\,
2 \ \vcenter{\hbox{\begin{tikzpicture}[scale=0.5]
\draw[thick] (-0.5,0) -- (-0.5,0.5);
\draw[thick] (-0.5,0.5) -- (0,1);
\draw[thick] (-0.5,0.5) -- (-0.5,1);
\draw[thick] (-0.5,0.5) -- (-1,1);
\draw[thick] (0.5,0) -- (0.5,1);
\draw[thick] (-1,1) to[out=90,in=90] (-0.5,1);
\draw[thick] (0,1) to[out=90,in=90] (0.5,1);
\end{tikzpicture}}}
~+~
\vcenter{\hbox{\begin{tikzpicture}[scale=0.5]
\draw[thick] (-0.5,0) -- (-0.5,0.5);
\draw[thick] (-0.5,0.5) -- (0,1);
\draw[thick] (-0.5,0.5) -- (-0.5,1.2);
\draw[thick] (-0.5,1.4) -- (-0.5,1.5);
\draw[thick] (-0.5,0.5) -- (-1,1);
\draw[thick] (0.5,0) -- (0.5,1.5);
\draw[thick] (-1,1) to[out=90,in=90] (0,1);
\draw[thick] (-0.5,1.5) to[out=90,in=90] (0.5,1.5);
\end{tikzpicture}}}
~+~
\vcenter{\hbox{\begin{tikzpicture}[scale=0.5]
\draw[thick] (-0.5,0) -- (-0.5,0.5);
\draw[thick] (-0.5,0.5) -- (0,1.3);
\draw[thick] (0,1.5) -- (0,1.6);
\draw[thick] (-0.5,0.5) -- (-0.5,1.3);
\draw[thick] (-0.5,1.5) -- (-0.5,1.6);
\draw[thick] (-0.5,0.5) -- (-1,1);
\draw[thick] (0.5,0) -- (0.5,1);
\draw[thick] (-1,1) to[out=90,in=90] (0.5,1);
\draw[thick] (-0.5,1.6) to[out=90,in=90] (0,1.6);
\end{tikzpicture}}}
~+~
\vcenter{\hbox{\begin{tikzpicture}[scale=0.5]
\draw[thick] (-0.5,0) -- (-0.5,0.5);
\draw[thick] (-0.5,0.5) -- (0,1);
\draw[thick] (-0.5,0.5) -- (-0.5,1);
\draw[thick] (-0.5,0.5) -- (-1,1);
\draw[thick] (0.5,0) -- (0.5,1);
\draw[thick] (-0.5,1) to[out=90,in=90] (0,1);
\draw[thick] (-1,1) to[out=90,in=90] (0.5,1);
\end{tikzpicture}}}
~+~
\vcenter{\hbox{\begin{tikzpicture}[scale=0.5]
\draw[thick] (-0.5,0) -- (-0.5,0.5);
\draw[thick] (-0.5,0.5) -- (0,1);
\draw[thick] (0,1) -- (0,1.2);
\draw[thick] (0,1.4) -- (0,1.5);
\draw[thick] (-0.5,0.5) -- (-0.5,1);
\draw[thick] (-0.5,0.5) -- (-1,1);
\draw[thick] (-1,1) -- (-1,1.5);
\draw[thick] (0.5,0) -- (0.5,1);
\draw[thick] (-0.5,1) to[out=90,in=90] (0.5,1);
\draw[thick] (-1,1.5) to[out=90,in=90] (0,1.5);
\end{tikzpicture}}}
\\
&\hspace{1.2cm}~+~2 \ 
\vcenter{\hbox{\begin{tikzpicture}[scale=0.5]
\draw[thick] (-1.5,0) -- (-1.5,1);
\draw[thick] (-0.5,0) -- (-0.5,0.5);
\draw[thick] (-0.5,0.5) -- (0,1);
\draw[thick] (-0.5,0.5) -- (-0.5,1);
\draw[thick] (-0.5,0.5) -- (-1,1);
\draw[thick] (-1.5,1) to[out=90,in=90] (-1,1);
\draw[thick] (-0.5,1) to[out=90,in=90] (0,1);
\end{tikzpicture}}}
~+~
\vcenter{\hbox{\begin{tikzpicture}[scale=0.5]
\draw[thick] (-1.5,0) -- (-1.5,1);
\draw[thick] (-0.5,0) -- (-0.5,0.5);
\draw[thick] (-0.5,0.5) -- (0,1);
\draw[thick] (0,1) -- (0,1.5);
\draw[thick] (-0.5,0.5) -- (-0.5,1);
\draw[thick] (-0.5,0.5) -- (-1,1);
\draw[thick] (-1,1) -- (-1,1.2);
\draw[thick] (-1,1.4) -- (-1,1.5);
\draw[thick] (-1.5,1) to[out=90,in=90] (-0.5,1);
\draw[thick] (-1,1.5) to[out=90,in=90] (0,1.5);
\end{tikzpicture}}}
~+~ 
\vcenter{\hbox{\begin{tikzpicture}[scale=0.5]
\draw[thick] (-1.5,0) -- (-1.5,1);
\draw[thick] (-0.5,0) -- (-0.5,0.5);
\draw[thick] (-0.5,0.5) -- (0,1);
\draw[thick] (-0.5,0.5) -- (-0.5,1);
\draw[thick] (-0.5,1) -- (-0.5,1.3);
\draw[thick] (-0.5,1.5) -- (-0.5,1.6);
\draw[thick] (-0.5,0.5) -- (-1,1);
\draw[thick] (-1,1) -- (-1,1.3);
\draw[thick] (-1,1.5) -- (-1,1.6);
\draw[thick] (-1.5,1) to[out=90,in=90] (0,1);
\draw[thick] (-1,1.6) to[out=90,in=90] (-0.5,1.6);
\end{tikzpicture}}}
~+~
\vcenter{\hbox{\begin{tikzpicture}[scale=0.5]
\draw[thick] (-1.5,0) -- (-1.5,1);
\draw[thick] (-0.5,0) -- (-0.5,0.5);
\draw[thick] (-0.5,0.5) -- (0,1);
\draw[thick] (-0.5,0.5) -- (-0.5,1);
\draw[thick] (-0.5,0.5) -- (-1,1);
\draw[thick] (-1,1) to[out=90,in=90] (-0.5,1);
\draw[thick] (-1.5,1) to[out=90,in=90] (0,1);
\end{tikzpicture}}}
~+~
\vcenter{\hbox{\begin{tikzpicture}[scale=0.5]
\draw[thick] (-1.5,0) -- (-1.5,1.5);
\draw[thick] (-0.5,0) -- (-0.5,0.5);
\draw[thick] (-0.5,0.5) -- (0,1);
\draw[thick] (-0.5,0.5) -- (-0.5,1.2);
\draw[thick] (-0.5,1.4) -- (-0.5,1.5);
\draw[thick] (-0.5,0.5) -- (-1,1);
\draw[thick] (-1,1) to[out=90,in=90] (0,1);
\draw[thick] (-1.5,1.5) to[out=90,in=90] (-0.5,1.5);
\end{tikzpicture}}}
\,\Big)+\mathcal{O}(\lambda^4)\\[4pt]
\nn \, &=\,-\hbar ~~\vcenter{\hbox{\begin{tikzpicture}[scale=0.5]
\draw[thick] (0,0) -- (0,1);
\draw[thick] (0.5,0) -- (0.5,1);
\draw[thick] (0,1) to[out=90,in=90] (0.5,1);
\end{tikzpicture}}} ~ - ~
\frac{\lambda^2\,\hbar^2}{3!\,2}\,\Big(\,
\vcenter{\hbox{\begin{tikzpicture}[scale=0.5]
\draw[thick] (-0.5,0) -- (-0.5,0.5);
\draw[thick] (-0.5,0.5) -- (0,1);
\draw[thick] (-0.5,0.5) -- (-0.5,1);
\draw[thick] (-0.5,0.5) -- (-1,1);
\draw[thick] (0.5,0) -- (0.5,1);
\draw[thick] (-1,1) to[out=90,in=90] (-0.5,1);
\draw[thick] (0,1) to[out=90,in=90] (0.5,1);
\end{tikzpicture}}}
~+~
\vcenter{\hbox{\begin{tikzpicture}[scale=0.5]
\draw[thick] (-0.5,0) -- (-0.5,0.5);
\draw[thick] (-0.5,0.5) -- (0,1);
\draw[thick] (-0.5,0.5) -- (-0.5,1.2);
\draw[thick] (-0.5,1.4) -- (-0.5,1.5);
\draw[thick] (-0.5,0.5) -- (-1,1);
\draw[thick] (0.5,0) -- (0.5,1.5);
\draw[thick] (-1,1) to[out=90,in=90] (0,1);
\draw[thick] (-0.5,1.5) to[out=90,in=90] (0.5,1.5);
\end{tikzpicture}}}
~+~
\vcenter{\hbox{\begin{tikzpicture}[scale=0.5]
\draw[thick] (-0.5,0) -- (-0.5,0.5);
\draw[thick] (-0.5,0.5) -- (0,1);
\draw[thick] (-0.5,0.5) -- (-0.5,1);
\draw[thick] (-0.5,0.5) -- (-1,1);
\draw[thick] (0.5,0) -- (0.5,1);
\draw[thick] (-0.5,1) to[out=90,in=90] (0,1);
\draw[thick] (-1,1) to[out=90,in=90] (0.5,1);
\end{tikzpicture}}}
~+~
\vcenter{\hbox{\begin{tikzpicture}[scale=0.5]
\draw[thick] (-1.5,0) -- (-1.5,1);
\draw[thick] (-0.5,0) -- (-0.5,0.5);
\draw[thick] (-0.5,0.5) -- (0,1);
\draw[thick] (-0.5,0.5) -- (-0.5,1);
\draw[thick] (-0.5,0.5) -- (-1,1);
\draw[thick] (-1.5,1) to[out=90,in=90] (-1,1);
\draw[thick] (-0.5,1) to[out=90,in=90] (0,1);
\end{tikzpicture}}}
~+~
\vcenter{\hbox{\begin{tikzpicture}[scale=0.5]
\draw[thick] (-1.5,0) -- (-1.5,1);
\draw[thick] (-0.5,0) -- (-0.5,0.5);
\draw[thick] (-0.5,0.5) -- (0,1);
\draw[thick] (-0.5,0.5) -- (-0.5,1);
\draw[thick] (-0.5,0.5) -- (-1,1);
\draw[thick] (-1,1) to[out=90,in=90] (-0.5,1);
\draw[thick] (-1.5,1) to[out=90,in=90] (0,1);
\end{tikzpicture}}}
~+~ 
\vcenter{\hbox{\begin{tikzpicture}[scale=0.5]
\draw[thick] (-1.5,0) -- (-1.5,1.5);
\draw[thick] (-0.5,0) -- (-0.5,0.5);
\draw[thick] (-0.5,0.5) -- (0,1);
\draw[thick] (-0.5,0.5) -- (-0.5,1.2);
\draw[thick] (-0.5,1.4) -- (-0.5,1.5);
\draw[thick] (-0.5,0.5) -- (-1,1);
\draw[thick] (-1,1) to[out=90,in=90] (0,1);
\draw[thick] (-1.5,1.5) to[out=90,in=90] (-0.5,1.5);
\end{tikzpicture}}}
\,\Big)+\mathcal{O}(\lambda^4)\quad,
\end{flalign}
where the simplifications in the last step use the same arguments as in Example \ref{ex:free4pt},
see in particular \eqref{eqn:exargument1} and \eqref{eqn:exargument2}.
\sk

Using the properties \eqref{eqn:torusinteractionproperties} of the interaction term, 
one can show that all six loop contributions coincide. To illustrate the relevant arguments,
let us show explicitly that the second and the third loop contributions coincide with the first one;
the other terms follow by similar arguments. For the second term, we find
\begin{flalign}
\nn \vcenter{\hbox{\begin{tikzpicture}[scale=0.5]
\draw[thick] (-0.5,0) -- (-0.5,0.5);
\draw[thick] (-0.5,0.5) -- (0,1);
\draw[thick] (-0.5,0.5) -- (-0.5,1.2);
\draw[thick] (-0.5,1.4) -- (-0.5,1.5);
\draw[thick] (-0.5,0.5) -- (-1,1);
\draw[thick] (0.5,0) -- (0.5,1.5);
\draw[thick] (-1,1) to[out=90,in=90] (0,1);
\draw[thick] (-0.5,1.5) to[out=90,in=90] (0.5,1.5);
\end{tikzpicture}}}
~&=\, \sum_{\und{k}_0,\und{k}_1,\und{k}_2,\und{k}_3\in\bbZ_N^2}\, I_{\und{k}_0\und{k}_1\und{k}_2\und{k}_3}
~\big(e_{\und{k}_0}^\ast,G(\varphi_1)\big)~\big(e_{\und{k}_1}^\ast, G(R_\alpha\ra e_{\und{k}_3}^\ast)\big)~
\big(R^\alpha\ra e_{\und{k}_2}^\ast, G(\varphi_2)\big)\\[4pt]
\nn ~&=\,\sum_{\und{k}_0,\und{k}_1,\und{k}_2,\und{k}_3\in\bbZ_N^2}\,q^{-\und{k}_2\Theta\und{k}_3}~ I_{\und{k}_0\und{k}_1\und{k}_2\und{k}_3}
~\big(e_{\und{k}_0}^\ast,G(\varphi_1)\big)~\big(e_{\und{k}_1}^\ast, G(e_{\und{k}_3}^\ast)\big)~
\big(e_{\und{k}_2}^\ast, G(\varphi_2)\big)\\[4pt]
~&=\,\sum_{\und{k}_0,\und{k}_1,\und{k}_2,\und{k}_3\in\bbZ_N^2}\, I_{\und{k}_0\und{k}_1\und{k}_2\und{k}_3}
~\big(e_{\und{k}_0}^\ast,G(\varphi_1)\big)~\big(e_{\und{k}_1}^\ast, G(e_{\und{k}_2}^\ast)\big)~
\big(e_{\und{k}_3}^\ast, G(\varphi_2)\big)\,=~\vcenter{\hbox{\begin{tikzpicture}[scale=0.5]
\draw[thick] (-0.5,0) -- (-0.5,0.5);
\draw[thick] (-0.5,0.5) -- (0,1);
\draw[thick] (-0.5,0.5) -- (-0.5,1);
\draw[thick] (-0.5,0.5) -- (-1,1);
\draw[thick] (0.5,0) -- (0.5,1);
\draw[thick] (-1,1) to[out=90,in=90] (-0.5,1);
\draw[thick] (0,1) to[out=90,in=90] (0.5,1);
\end{tikzpicture}}}\quad,
\end{flalign}
where in the first step we used the properties 
\eqref{eqn:planewavebasis2} of the (dual) basis $e_{\und{k}}^\ast$
and in the second step we used the $q$-deformed symmetry property \eqref{eqn:torusinteractionproperties} 
of the interaction term. 
For the third term, we find
\begin{flalign}
\nn \vcenter{\hbox{\begin{tikzpicture}[scale=0.5]
\draw[thick] (-0.5,0) -- (-0.5,0.5);
\draw[thick] (-0.5,0.5) -- (0,1);
\draw[thick] (-0.5,0.5) -- (-0.5,1);
\draw[thick] (-0.5,0.5) -- (-1,1);
\draw[thick] (0.5,0) -- (0.5,1);
\draw[thick] (-0.5,1) to[out=90,in=90] (0,1);
\draw[thick] (-1,1) to[out=90,in=90] (0.5,1);
\end{tikzpicture}}}
~&=\, \sum_{\und{k}_0,\und{k}_1,\und{k}_2,\und{k}_3\in\bbZ_N^2}\, I_{\und{k}_0\und{k}_1\und{k}_2\und{k}_3}
~\big(e_{\und{k}_0}^\ast,G(\varphi_1)\big)~\big(e_{\und{k}_2}^\ast, G(e_{\und{k}_3}^\ast)\big)~
\big(e_{\und{k}_1}^\ast, G(\varphi_2)\big)\\[4pt]
\nn ~&=\, \sum_{\und{k}_0,\und{k}_1,\und{k}_2,\und{k}_3\in\bbZ_N^2}\, I_{\und{k}_0\und{k}_1\und{k}_2\und{k}_3}
~\big(e_{\und{k}_0}^\ast,G(\varphi_1)\big)~\big({R_{\alpha}}_{\und{1}} \ra 
e_{\und{k}_2}^\ast, G({R_{\alpha}}_{\und{2}}\ra e_{\und{k}_3}^\ast)\big)~
\big(R^\alpha\ra e_{\und{k}_1}^\ast, G(\varphi_2)\big)\\[4pt]
\nn ~&=\, \sum_{\und{k}_0,\und{k}_1,\und{k}_2,\und{k}_3\in\bbZ_N^2}\, I_{\und{k}_0\und{k}_1\und{k}_2\und{k}_3}
~\big(e_{\und{k}_0}^\ast,G(\varphi_1)\big)~\big({R_{\alpha}} \ra 
e_{\und{k}_2}^\ast, G({R_{\beta}}\ra e_{\und{k}_3}^\ast)\big)~
\big(R^\beta\, R^\alpha\ra e_{\und{k}_1}^\ast, G(\varphi_2)\big)\\[4pt]
~&=\,\sum_{\und{k}_0,\und{k}_1,\und{k}_2,\und{k}_3\in\bbZ_N^2}\, I_{\und{k}_0\und{k}_1\und{k}_2\und{k}_3}
~\big(e_{\und{k}_0}^\ast,G(\varphi_1)\big)~\big(e_{\und{k}_1}^\ast, G(e_{\und{k}_2}^\ast)\big)~
\big(e_{\und{k}_3}^\ast, G(\varphi_2)\big)\,=~\vcenter{\hbox{\begin{tikzpicture}[scale=0.5]
\draw[thick] (-0.5,0) -- (-0.5,0.5);
\draw[thick] (-0.5,0.5) -- (0,1);
\draw[thick] (-0.5,0.5) -- (-0.5,1);
\draw[thick] (-0.5,0.5) -- (-1,1);
\draw[thick] (0.5,0) -- (0.5,1);
\draw[thick] (-1,1) to[out=90,in=90] (-0.5,1);
\draw[thick] (0,1) to[out=90,in=90] (0.5,1);
\end{tikzpicture}}}\quad.
\end{flalign}
In the first step we used $H$-equivariance of $(\,\cdot\,,G(\cdot))$
together with the normalization condition $R^\alpha\,\epsilon(R_\alpha)=1$.
The second step follows from the second identity in \eqref{eqn:Rmatrixproperties},
and the third step applies the $q$-deformed symmetry property \eqref{eqn:torusinteractionproperties} 
of the interaction term twice. 
\sk

Altogether we find that the $2$-point function of $\Phi^4$-theory on the fuzzy torus
reads to leading order in the coupling constant as
\begin{flalign}
\nn \widetilde{\Pi}\big(\varphi_1\,\varphi_2\big)\,&=\,-\hbar ~~\vcenter{\hbox{\begin{tikzpicture}[scale=0.5]
\draw[thick] (0,0) -- (0,1);
\draw[thick] (0.5,0) -- (0.5,1);
\draw[thick] (0,1) to[out=90,in=90] (0.5,1);
\end{tikzpicture}}}
~-\, \frac{\lambda^2\,\hbar^2}{2}~~\vcenter{\hbox{\begin{tikzpicture}[scale=0.5]
\draw[thick] (-0.5,0) -- (-0.5,0.5);
\draw[thick] (-0.5,0.5) -- (0,1);
\draw[thick] (-0.5,0.5) -- (-0.5,1);
\draw[thick] (-0.5,0.5) -- (-1,1);
\draw[thick] (0.5,0) -- (0.5,1);
\draw[thick] (-1,1) to[out=90,in=90] (-0.5,1);
\draw[thick] (0,1) to[out=90,in=90] (0.5,1);
\end{tikzpicture}}} ~+~\mathcal{O}(\lambda^4)\\[4pt]
\,&=\, -\hbar\,\big(\varphi_1,G(\varphi_2)\big) - \frac{\lambda^2\,\hbar^2}{2}\,
\sum_{\und{k},\und{l}\in\bbZ_N^2}\, \frac{\big(e_{\und{k}}^\ast,G(\varphi_1)\big)~\big(e_{\und{k}}, G(\varphi_2)\big)}{{[l_1]_q^2 + [l_2]_q^2 +m^2}} \,+\,\mathcal{O}(\lambda^4)\quad,\label{eqn:torus2ptexplicit}
\end{flalign}
where we also used the explicit expression \eqref{eqnLtorusGreenoperator} for the Green operator.
We emphasize that in our interacting $2$-point function \eqref{eqn:torus2ptexplicit}
there is no distinction between planar and non-planar loop corrections, 
in contrast to the traditional (unbraided) 
approaches to noncommutative quantum field theory~\cite{IIKK00,MVRS00}.
This feature is intimately tied to the braided
commutativity property of the fuzzy torus algebra \eqref{eqn:fuzzytorusalgebra},
which in particular implies that the higher $L_\infty$-algebra brackets \eqref{eqn:torusellm} 
are automatically braided (graded anti-)symmetric without the need for graded antisymmetrization
as in the fuzzy sphere case \eqref{eqn:ellmscalar}. The absence of non-planar 
features in loop corrections has similarly been observed by Oeckl in his framework of 
(symmetric) braided quantum field theory~\cite{Oec00}.
\end{ex}

\begin{ex}
We stress that the disappearance of the $q$-factors from the interaction
term  \eqref{eqn:torusinteraction}
in the tadpole diagram \eqref{eqn:torus2ptexplicit} is only due to the shape
of this diagram and not a general feature of our formalism.
As an illustrative example, let us consider again
$\Phi^4$-theory and compute the {\em connected part}
of the $4$-point function to the first non-trivial order in the coupling constant $\lambda^2$.
Using the same arguments as in the previous examples, 
in particular the $q$-deformed symmetry property \eqref{eqn:torusinteractionproperties} 
of the interaction term, one finds
\begin{flalign}
&\widetilde{\Pi}(\varphi_1\,\varphi_2\,\varphi_3\,\varphi_4)_{\text{connected}}
\,=\,\lambda^2 \,\hbar^3\,~
\vcenter{\hbox{\begin{tikzpicture}[scale=0.5]
\draw[thick] (-0.5,0) -- (-0.5,0.5);
\draw[thick] (-0.5,0.5) -- (0,1);
\draw[thick] (-0.5,0.5) -- (-0.5,1);
\draw[thick] (-0.5,0.5) -- (-1,1);
\draw[thick] (0.5,0) -- (0.5,1);
\draw[thick] (1,0) -- (1,1);
\draw[thick] (1.5,0) -- (1.5,1);
\draw[thick] (-1,1) to[out=90,in=90] (1.5,1);
\draw[thick] (-0.5,1) to[out=90,in=90] (1,1);
\draw[thick] (0,1) to[out=90,in=90] (0.5,1);
\end{tikzpicture}}}
~+~\mathcal{O}(\lambda^4)\\[4pt]
\nn &~~=\lambda^2 \,\hbar^3 \!\!\! \sum_{\und{k}_0,\und{k}_1,\und{k}_2,\und{k}_3\in\bbZ_N^2}\!\!\!
I_{\und{k}_0\und{k}_1\und{k}_2\und{k}_3}\,\big(e_{\und{k}_0}^\ast,G(\varphi_1)\big)
~\big(e_{\und{k}_1}^\ast,G(\varphi_4)\big)~
\big(e_{\und{k}_2}^\ast,G(\varphi_3)\big)~
\big(e_{\und{k}_3}^\ast,G(\varphi_2)\big)+\mathcal{O}(\lambda^4)\quad.
\end{flalign}
Recalling now the explicit expression \eqref{eqn:torusinteractionexplicit} 
for the constants $I_{\und{k}_0\und{k}_1\und{k}_2\und{k}_3}$, 
we see that this correlation function includes the expected $q$-factors 
from the interaction term.
\end{ex}

%%%%%%%%%%%%%%%%%%%%%%%%%%%%%%%%%%%%%%%%%%%%%%%%%%%%%%%%%%%%%%%%%%%%%%%%
%%%%%%%%%%%%%%%%%%%%%%%%%%%%%%%%%%%%%%%%%%%%%%%%%%%%%%%%%%%%%%%%%%%%%%%%

\section{\label{sec:conclusion}Concluding remarks and outlook}
In this paper we have initiated the study of noncommutative quantum field theories
using modern tools from BV quantization~\cite{CG16,Gwi12}. We focused on
the case of fuzzy field theories, which are finite-dimensional models and therefore
do not require regularization and renormalization. We discussed two different
flavors of models, namely ordinary noncommutative field theories (see Sections \ref{sec:prelim}
and \ref{sec:fuzzysphere}) and so-called `braided' noncommutative field theories 
(see Sections \ref{sec:braidedBV} and \ref{sec:fuzzytorus}). Our ordinary noncommutative
field theories are described at the classical level by ordinary $L_\infty$-algebras
as in~\cite{BBKL18} and, as illustrated in the present paper, their BV quantization can 
be carried out using precisely the same methods as in commutative field theory~\cite{CG16,Gwi12}.
We would like to emphasize that this does {\em not} mean that such theories
are insensitive to noncommutative geometry, which enters through the explicit form
of the propagators and the interaction vertices. In particular, we recover from
our formalism well-known noncommutative features such as non-planar contributions
to loop diagrams in noncommutative scalar field theories, see Example \ref{ex:fuzzysphere2pt}.
\sk

The second flavor of models we have studied are the so-called `braided' noncommutative theories,
which are described at the classical level by the `braided $L_\infty$-algebras' proposed 
in~\cite{DCGRS20,DCGRS21}. These are field theories that are defined in the representation
category of a Hopf algebra that is endowed with a non-identity {\em triangular} $R$-matrix. 
The BV quantization techniques of~\cite{CG16,Gwi12} generalize in a rather 
straightforward way to these models (because the representation category
of a triangular Hopf algebra is a symmetric braided monoidal category) and we have spelled
out the details in Section \ref{sec:braidedBV}. When applied to 
a scalar field, our approach coincides with the (symmetric) braided quantum field theory of 
Oeckl~\cite{Oec00}. In particular, we observed in Example \ref{ex:torus2pt} that 
non-planar contributions to loop diagrams are absent in the braided framework.
\sk

The main lesson of our paper is that modern BV quantization
as in~\cite{CG16,Gwi12} provides a collection of systematic and powerful tools 
to study both ordinary and `braided' noncommutative quantum field theories.
In particular, well-known noncommutative features, such as the appearance of non-planar loop
contributions or the absence of those in `braided' field theories, are recovered from this approach.
Compared to more traditional approaches, the main advantage of these more abstract BV quantization techniques
is that they readily apply to noncommutative gauge theories, as we have illustrated with 
an explicit example in Section \ref{subsec:CS}.
\sk

We believe that there are two particularly interesting avenues for future research.
Firstly, it would be interesting to study the regularization and renormalization
of noncommutative quantum field theories on infinite-dimensional algebras,
e.g.\ the Moyal plane, from our point of view of BV quantization. This requires
an adaption of the analytical aspects of Costello and Gwilliam's work~\cite{CG16,Gwi12}
and an understanding of their interplay with noncommutative phenomena such as UV/IR-mixing.
Secondly, it would be interesting to generalize the `braided BV formalism' in Section \ref{sec:braidedBV}
to the truly (i.e.\ non-symmetric) braided case governed by quasi-triangular Hopf algebras.
We expect that this will be considerably more difficult than the symmetric braided
case discussed in this paper, because it is not at all straightforward to define 
truly braided analogs of the relevant algebraic structures, such as $P_0$-algebras and the BV Laplacian.
For instance, already the definition of truly braided analogs of Lie algebras \cite{Maj94} 
is rather non-intuitive and involved, and we expect that this will also be the case
for truly braided $P_0$-algebras. As a minimalistic approach, one could skip
the $P_0$-algebras and try to generalize the explicit form of the BV Laplacian 
\eqref{eqn:braidedBVLaplacian} to the case of a quasi-triangular $R$-matrix.
The resulting formula agrees with Oeckl's braided Wick Theorem \cite{Oec01},
but unfortunately in the truly braided case there are obstructions (due to quasi-triangularity)
to the important square-zero condition $\Delta_{\BV}^2=0$ of the BV Laplacian.
We currently do not know how to resolve these issues and why they seem to play no role
in Oeckl's truly braided approach.

%%%%%%%%%%%%%%%%%%%%%%%%%%%%%%%%%%%%%%%%%%%%%%%%%%%%%%%%%%%%%%%%%%%%%%%%
%%%%%%%%%%%%%%%%%%%%%%%%%%%%%%%%%%%%%%%%%%%%%%%%%%%%%%%%%%%%%%%%%%%%%%%%

\section*{Acknowledgments}
We thank Robert Laugwitz, Christian S\"amann and Martin
Wolf for helpful discussions.
H.N.\ is supported by a PhD Scholarship from the School of Mathematical 
Sciences of the University of Nottingham. 
A.S.\ gratefully acknowledges the financial support of 
the Royal Society (UK) through a Royal Society University 
Research Fellowship (UF150099), a Research Grant (RG160517) 
and two Enhancement Awards (RGF\textbackslash EA\textbackslash 180270 
and RGF\textbackslash EA\textbackslash 201051). 
The work of R.J.S.\ was supported by
the Consolidated Grant ST/P000363/1 
from the UK Science and Technology Facilities Council.

%\section*{Conflict of interest statement}
%On behalf of all authors, the corresponding author states that there is no conflict of interest. 

%%%%%%%%%%%%%%%%%%%%%%%%%%%%%%%%%%%%%%%%%%%%%%%%%%%%%%%%%%%%%%%%%%%%%%%%
%%%%%%%%%%%%%%%%%%%%%%%%%%%%%%%%%%%%%%%%%%%%%%%%%%%%%%%%%%%%%%%%%%%%%%%%

\end{document}